\setlist[enumerate,1]{label=(\alph*)}
\setlist[enumerate,2]{label=(\roman*)}
\newcommand*{\SL}{\operatorname{\mathcal{L}}}
\newcommand*{\e}{{$\epsilon$-transfer}\xspace}
\newcommand*{\es}{{$\epsilon$-transfers}\xspace}
\title{Optimal Speed Scaling with a Solar Cell}
\author{%
	Neal Barcelo\inst{1}\and
	Peter Kling\thanks{%
		Supported  by fellowships of the Postdoc-Programme of the German Academic Exchange Service (DAAD) and the Pacific Institute of Mathematical Sciences (PIMS).
		Work done while at the University of Pittsburgh.
	}\inst{2}\and
	Michael Nugent\inst{1}\and
	Kirk Pruhs\thanks{Supported, in part, by NSF grants CCF-1115575, CNS-1253218, CCF-1421508, CCF-1535755, and an IBM Faculty Award.}\inst{1}
}
\institute{%
	Department of Computer Science, University of Pittsburgh, Pittsburgh, USA \and
	School of Computing Science, Simon Fraser University, Burnaby, Canada
}
\begin{document}
\maketitle
\begin{abstract}
We consider the setting of a sensor that consists of a speed-scalable processor, a battery, and a solar cell that harvests energy from its environment at a time-invariant recharge rate.
The processor must process a collection of jobs of various sizes.
Jobs arrive at different times and have different deadlines.
The objective is to minimize the \emph{recharge rate}, which is the rate at which the device has to harvest energy in order to feasibly schedule all jobs.
The main result is a polynomial-time combinatorial algorithm for processors with a natural set of discrete speed/power pairs.
\end{abstract}


\section{Introduction}\label{intro}
Most of the algorithmic literature on scheduling devices to manage energy assume the objective of minimizing the total energy usage.
This is an appropriate objective if the amount of available energy is bounded, say by the capacity of a battery.
However, many devices (most notably sensors in hazardous environments) contain energy harvesting technologies.
Solar cells are probably the most common example, but some sensors also harvest energy from ambient vibrations~\cite{Remick:2016aa,Stephen:2006aa} or electromagnetic radiation~\cite{Vullers:2009aa} (e.g., from communication technologies such as television transmitters).
To get a rough feeling for the involved scales (see also~\cite{Vullers:2009aa}), note that batteries can store on the order of a joule of energy per cubic millimeter, while solar cells provide several hundred microwatt per square millimeter in bright sunlight, and both vibrations and ambient radiation technologies provide on the order of nanowatt per cubic millimeter.
Compared to non-harvesting technologies, the algorithmic challenge is to cope with a more dynamic setting, where the difference between \emph{total available} and \emph{total used} energy is non-monotonic (cf.~Figure~\ref{fig:intropict}).

\begin{figure}
\centering
\begin{minipage}{0.45\textwidth}
\includegraphics[width=\textwidth,page=1]{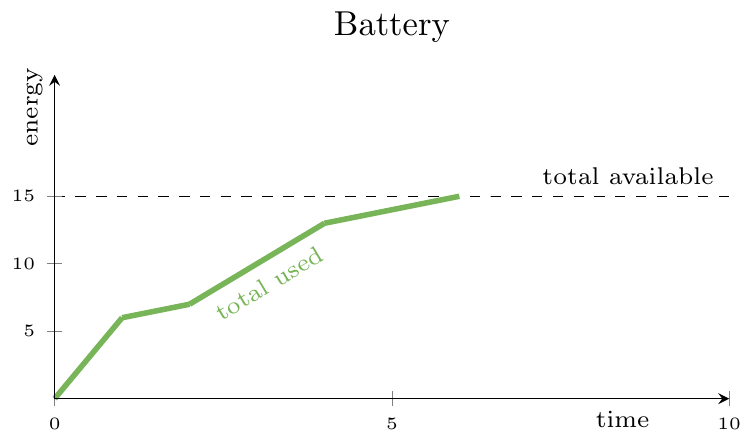}
\end{minipage}
\hfill
\begin{minipage}{0.45\textwidth}
\includegraphics[width=\textwidth,page=2]{intro_picture}
\end{minipage}
\caption{%
	The total available (dashed) and total used (solid) energy for a battery and for a solar cell with a battery.
	The solid line cannot cross the dashed line; when both lines meet, the battery is depleted.
	Depletion is permanent for a battery and temporarily for a solar cell with battery.
	Note that the difference between total available and total used energy is non-monotonic for solar cells.
}
\label{fig:intropict}
\end{figure}

\paragraph{Problem \& Model in a Nutshell}
The goal of this research is to use an algorithmic lens to investigate how the addition of energy harvesting technologies affects the complexity of scheduling such devices.
As a test case, we consider the first (and most investigated) problem on energy-aware scheduling due to \citet{Yao:1995}.
There, the authors assumed that
\begin{enumerate*}
\item the processor is speed-scalable;
\item the power used is the square of the speed; and
\item each job has a certain size, an earliest (release-) time at which it can be run, and a deadline by which it must be finished.
\end{enumerate*}
Their objective was to minimize the total energy used by the processor when finishing all jobs.
We modify these assumptions as follows:
\begin{enumerate}
\item The device has a speed-scalable processor with a \emph{finite} number of speeds $s_1<\dots<s_k$, each associated with a power consumption rate $P_1<\dots<P_k$.
\item The device harvests energy at a time-invariant recharge rate $R>0$ (like a solar-cell in bright sunlight).
\item The device has a battery (initially empty) to store harvested energy.
	To concentrate on the energy harvesting aspect, we assume that the battery's capacity isn't a limiting factor.
\end{enumerate}
The objective becomes to find the minimal necessary recharge rate to finish all jobs between their release time and deadline.

As is the case with the (discrete) variant of~\cite{Yao:1995}, our solar cell problem can be written as a linear program.
Thus, in principle it is solvable in polynomial time by standard mathematical programming methods (e.g., the Ellipsoid method).
However,~\cite{Yao:1995} showed that the total energy minimization problem is algorithmically much easier than linear programming by giving a simple, combinatorial greedy algorithm.
In the same spirit, we study whether the solar cell version allows for a similarly simple, purely combinatorial algorithm.

\paragraph{Results in a Nutshell}
Our main result is a polynomial-time combinatorial algorithm for well-separated processor speeds.
Well-separation is a technical but natural requirement to ease the analysis.
It ensures that the speed/power cover a good efficiency spectrum, as explained below.
Let $\Delta_i\coloneqq\smash{\frac{P_{i}-P_{i-1}}{s_i-s_{i-1}}}$.
The speeds are \emph{well-separated} if there is a constant $c>1$ such that $\Delta_{i+1}=c\cdot\Delta_{i}$ for all $i$.
To understand this condition, note that there is a strong convex relationship between the speed and power in CMOS-based processors~\cite{Brooks:2000}, typically modelled as $\text{Power}=\text{Speed}^{\alpha}$ for some constant $\alpha>1$~\cite{Yao:1995}.
Thus, lower speeds give significantly better energy efficiency.
A chip designer aims to choose discrete speeds (from the continuous range of options) that are well-separated in terms of performance and energy efficiency.
A natural choice is to grow speeds exponentially (i.e., $s_{i+1}=c'\cdot s_{i}$ for a suitable $c'>1$).
With $P_i=s_i^{\alpha}$, we get that speeds are well-separated with the constant $c\coloneqq c'^{\alpha-1}$.

Our algorithm can be viewed as a homotopic optimization algorithm that maintains an energy optimal schedule while the recharge rate is continuously decreased.
Similar approaches for other speed scaling problems have been used in~\cite{puw08,ColeLNP12,AntoniadisBCKNPS14,Angelopoulos2015}.
The resulting combinatorial algorithm exposes interesting structural properties and relations to be maintained while decreasing the recharge rate and adapting the schedule, not unlike (but much more complex than) the homotopic algorithm from~\cite{AntoniadisBCKNPS14}.
While this allows us to prove a polynomial runtime for our algorithm (see Theorem~\ref{thm:algo_runtime}), the actual bound is quite high and only barely superior to bounds derived by generic convex program solvers.
We believe that this runtime is merely an artifact of our hierarchical analysis approach, which aims at simplifying the (already quite involved) analysis.
However, this might also indicate that other, non-homotopical approaches might be more suitable to tackle this scheduling variant.

\paragraph{Context \& Related Results}
The only other theoretical work (we are aware of) on this solar cell problem is by \citet{Bansal:2009}.
They considered arbitrary (continuous) speeds $s\in\R_{\geq0}$ and power consumption $s^{\alpha}$ (where $\alpha>1$ is a constant).
They showed that the offline problem can be expressed as a convex program.
Thus, using the well-known KKT conditions one can efficiently recognize optimal solutions, and standard methods (e.g., the Ellipsoid Method) efficiently solve this problem to any desired accuracy.
\Citet{Bansal:2009} also proved that the schedule that optimizes the total energy usage is a $2$-approximation for the objective of recharge rate.
Finally, they showed that the online algorithm BKP, which is known to be $\LDAUOmicron{1}$-competitive for total energy usage~\cite{Bansal:2007}, is also $\LDAUOmicron{1}$-competitive with respect to the recharge rate.
So, intuitively, the take-away from~\cite{Bansal:2009} was that schedules that naturally arise when minimizing energy usage are $\LDAUOmicron{1}$ approximations with respect to the recharge rate.
In particular, \citet{Bansal:2009} left as an open question whether there is a simple, combinatorial algorithm for the solar cell problem.

\paragraph{Outline}
Both our algorithm design and analysis are quite involved and require significant understanding of the relation between the recharge rate and optimal schedules.
Thus, we start with an informal overview in the next section.
The formal model description and definitions can be found in Sections~\ref{sec:structural_optimality_via_lp} and~\ref{sec:notation}.
The actual algorithm description is given in Section~\ref{sec:algorithm_description}.
Due to space restrictions, most proofs are left for the appendix.


\section{Approach \& Overview}\label{sec:approach_overview}
In the following, we state the central optimality conditions and give a simple illustrating example.
Afterward, we explain how to improve upon a given schedule via suitable transformations guided by these optimality conditions.
Finally, we explain how our algorithm realizes these transformations in polynomial time.

\paragraph{Optimality Conditions}
As the first step in our algorithm design, we consider the natural linear program for our problem and translate the complementary slackness conditions (which characterize optimal solutions) into structural optimality conditions.
This results in Theorem~\ref{thm:optimality}, which states\footnote{%
	Statements slightly simplified; Section~\ref{sec:structural_optimality_via_lp} gives the full formal conditions.
} that optimal solutions can be characterized as follows:
\begin{enumerate}
\item \emph{Feasibility:}
	All jobs are fully processed between their release times and deadlines and the battery is never depleted.
\item \emph{Local Energy Optimality:}
	The job portions scheduled within each \emph{depletion interval} (time between two moments when the battery is depleted) are scheduled in an energy optimal way.
\item \emph{Speed Level Relation (SLR):}
	Consider job $j$ that runs in two depletion intervals $I$ and $I'$.
	Let the average speed of $j$ in $I$ lie between discrete speeds $s_{a-1}$ and $s_{a}$.
	Similarly, let it lie between $s_{b-1}$ and $s_{b}$ in $I'$.
	The SLR states that the difference $b-a$ is independent of the job $j$.
	In other words, jobs jump roughly the same amount of discrete speed levels between depletion intervals\footnote{%
		Figure~\ref{fig:moving_work} gives an example where the SLR can be observed:
		The orange and light-blue jobs run both in depletion interval $I_3$ and $I_4$.
		The orange job's average speed \enquote{jumps} one discrete speed level from $I_3$ to $I_4$ (from below $s_2$ to above $s_2$).
		Thus, the light-blue job must also jump one discrete speed level (from below $s_3$ to above $s_3$).
	}.
\item \emph{Split Depletion Point (SDP):}
	There is a depletion point (time when the battery is depleted) $\tau>0$ such that no job with deadline $>\tau$ is run before $\tau$.
\end{enumerate}
As above for the SLR, we often consider the \emph{average speed} of a scheduled job during a depletion interval.
Note that one can easily derive an actual, discrete schedule from these average speeds:
If a job $j$ runs at average speed $s\in\intco{s_{a-1},s_{a}}$ during a time interval $I$ of length $\abs{I}$, we can interpolate the average speed with discrete speeds by scheduling $j$ first for $\smash{\frac{s_{a}-s}{s_{a}-s_{a-1}}\cdot\abs{I}}$ time units at speed $s_{a-1}$ and for $\smash{\frac{s-s_{a-1}}{s_{a}-s_{a-1}}\cdot\abs{I}}$ time units at speed $s_{a}$.
Using that speeds are well-separated\footnote{%
	In fact, $\Delta_{i+1}>\Delta_{i}$ is already sufficient.
	Also note that starting with the lower speed is essential: otherwise the battery's energy level might become negative.
}, it follows easily that this is an optimal discrete way to achieve average speed $s$.

\paragraph{A Simple Example}
To build intuition, consider a simple example.
The processor has two discrete speeds $s_1=1$ and $s_2=2$ with power consumption rates $P_1=1$ and $P_2=4$, respectively.
Job $j$ is released at time $0$ with deadline $4$ and work $3$.
Job $j'$ is released at time $1$ with deadline $2$ and work $2$.
The \emph{energy optimal} schedule runs job $j'$ at speed $2$ during the time interval $\intcc{1,2}$ and job $j$ at speed $1$ during the time intervals $\intcc{0,1}$ and $\intcc{2,4}$.
It needs recharge rate $R=2.5$.
There is a depletion point $\tau=2$ and two depletion intervals $I_1=\intco{0,\tau}$ and $I_2=\intco{\tau,\infty}$.
See the left side of Figure~\ref{fig:example} for an illustration.
While this schedule fulfills the first three optimality conditions for \emph{rate optimality}, the SDP condition is violated ($j$ is run both before and after $\tau$).
Thus, while it is energy optimal it is not recharge rate optimal.

Consider what happens if we decrease the recharge rate $R$ by an infinitesimal small amount $\varepsilon$ (i.e., decrease the slope of the dotted line in the left part of Figure~\ref{fig:example}).
This results in a negative energy in the battery at time $\tau$ (the solid line in Figure~\ref{fig:example} \enquote{spikes} through the dotted line at $\tau=2$).
This is not allowed, so we have to decrease the energy used before $\tau$.
To do so, we move some work from a job that is processed on both sides of $\tau$ from $I_1$ to $I_2$ (the violation of the SDP guarantees the existence of such a job).
Continuing to do so allows us to decrease the recharge rate until the SDP holds (see the right side of Figure~\ref{fig:example}).
Thus, the resulting schedule is recharge rate optimal (i.e., needs a solar cell of minimal recharge rate).
Also note that this schedule is no longer energy optimal (the total amount of used energy increased).

\begin{figure}
\centering
\begin{minipage}{0.45\linewidth}
\includegraphics[width=\linewidth,page=1]{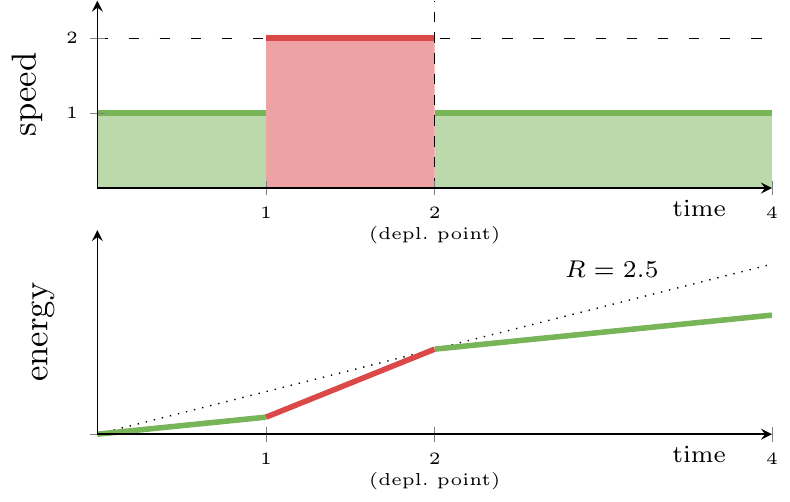}
\end{minipage}
\hfill
\begin{minipage}{0.45\linewidth}
\includegraphics[width=\linewidth,page=2]{saving_energy}
\end{minipage}
\caption{%
	The energy optimal (left) and recharge rate optimal (right) schedules.
	Job speeds are plotted as average speeds (i.e., the green job running at average speed $3/2$ in the depletion interval $\intco{2,4}$ on the right runs in the actual, discrete schedule at speed $1$ during $\intco{2,3}$ and at speed $2$ during $\intco{3,4}$).
}
\label{fig:example}
\end{figure}

\paragraph{Algorithmic Intuition}
Our algorithm extends on the schedule transformation we saw in the simple example above.
We start with an energy optimal schedule $S$ and a trivial bound on the recharge rate $R$ such that the first three optimality conditions hold.
We then lower $R$ while maintaining a schedule satisfying these first three conditions until, additionally, the SDP holds.
Lowering the recharge rate $R$ means we have to move work out of each depletion interval (or we get a negative energy in the battery).
Since we want to maintain the first three optimality conditions, we cannot move work arbitrarily.
To capture all constraints while moving work we employ a \emph{distribution muligraph} $G_D$.
Its vertices are the depletion intervals, and there is a directed edge for each way in which work can be transferred between depletion intervals.
See Figure~\ref{fig:moving_work} for an illustration.
\begin{figure}
\centering
\includegraphics[width=0.618\linewidth]{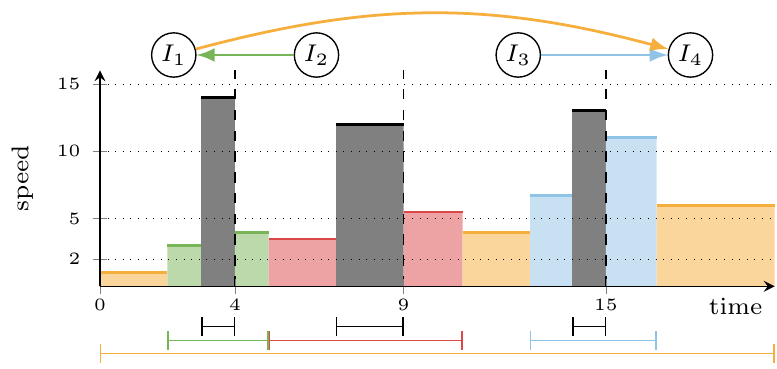}
\caption{%
	Four discrete speeds ($s_1=2,s_2=5,s_3=10,s_4=15$), seven jobs (release/deadlines indicated by the colored bars).
	Four depletion points $\tau\in\set{0,4,9,15}$ form four depletion intervals $I_1=\intco{0,4},I_2=\intco{4,9},I_3=\intco{9,15},I_4=\intco{0,\infty}$.
	A subgraph of the distribution graph $G_D$ is shown above.
}
\label{fig:moving_work}
\end{figure}

The heart of our algorithm is to find a suitable \emph{transfer path} for each depletion interval: a path over which work can be transferred to the rightmost depletion interval (possibly via multiple jobs).
Given such transfer paths, we can move work out of every depletion interval.
While this allows us to make progress, there are three types of events that can occur and must be handled:
\begin{itemize}
\item \emph{Edge Removal Event:} It is no longer possible to transfer work on a particular edge because there is no more work left on the job we were moving.
\item \emph{Depletion Point Appearance Event:} A new depletion point is created.
\item \emph{Speed Level Event:} Further transfer of work would cause a job's average speed in a depletion interval to cross a discrete speed (possibly violating the SLR).
\end{itemize}
In these cases, our algorithm attempts to find a different collection of transfer paths.
If this is not possible, the algorithm tries to update $G_D$ as follows:
\begin{itemize}
\item \emph{Depletion Point Removal Update:} Find a depletion point that can be removed.
	Removing the constraint that the battery is depleted at this point may allow for new ways to transfer work.
\item \emph{Cut Update:}
	Because of the SLR, jobs have to jump the same amount of discrete speed levels between two depletion intervals.
	Thus, all jobs must cross the next discrete speed at the same time.
	A cut update basically signals that all involved jobs reached a suitable discrete speed and can now cross the discrete speed level.
	See Section~\ref{sec:notation} and Definition~\ref{def:slr} for details.
\end{itemize}
As an example, consider what happens when moving work of the light blue job from $I_3$ to $I_4$ in Figure~\ref{fig:moving_work}.
After a while, its average speed in $I_3$ reaches the discrete speed $s_2$ (a speed level event).
The SLR forbids to further decrease this job's speed (it would jump two discrete speed levels, while the orange job jumps only one).
Instead, we start to move work of the orange job from $I_3$ to $I_4$ until it hits the discrete speed $s_1$.
All jobs processed before and after depletion point $\tau=15$ are now at suitable discrete speeds and we can allow both the orange and light blue job to further decrease their speeds (a cut update).

Our correctness proof shows that, if none of these updates is possible, the SDP holds.
A technical difficulty is that events might influence each other, resulting in complex dependencies (which we ignored in the above example).
A lot of the complexity of our algorithm/analysis stems from an urge to avoid these dependencies wherever possible.
However, it seems likely that a more careful study of these dependencies would yield a significant simplification and improvement.

\paragraph{Events \& Updates in Polynomial Time}
Our description above assumes that we move work continuously and stop at the corresponding events.
To implement this in our algorithm, we have to calculate the next event for the current collection of transfer paths and then compute the correct amount of work to move between all involved depletion intervals.
While the involved calculations follow from a simple linear equation system, the main difficulty is to show that the number of events remains polynomial.
To ensure this, our algorithm design facilitates the following hierarchy of invariants:
\begin{itemize}
\item \emph{Cut Invariant:} Cut updates are at the top of the hierarchy.
	Intuitively, this invariant states that job-speeds (or speed levels) tend to increase toward the right (since, as a net effect, work is generally moved to the right).
	This is, for example,  used in Lemma~\ref{lem:runtime_cut_fixing} to prove that there is only a polynomial number of cut updates.
\item \emph{Depletion Point Removal Invariant:} Depletion point updates are at the second level of the hierarchy.
	This invariant states that once a depletion point is removed it will not be added again (until the next cut update).
\item \emph{Speed Level Invariant:} Speed level updates are also at the second level of the hierarchy.
	This invariant states that this event creates a time interval to which no work is added (until the next cut update).
\item \emph{Edge Removal Invariant:} Edge removal events are at the bottom of the hierarchy.
	This invariant states that once work of a job was transferred to an earlier depletion interval (\enquote{to the left}), it will not be transferred to a later one (\enquote{to the right}) until the next cut, depletion point removal, or speed level update.
\end{itemize}
This hierarchy provides a monotone progress measure, but complicates the algorithm/analysis quite a bit.
In particular, we have to deal with two aspects:
\begin{enumerate*}
\item $G$ together with all transfer paths may be exponentially large.
	To handle this, we search for suitable transfer paths on a subgraph $H$ of $G$, containing only the best transfers to move work between any given pair of depletion intervals.
\item We have to define how to select these collections of transfer paths.
	On a high level, the algorithm prefers transfers that move work right to transfers that move work left.
	Between transfers moving work right it prefers shorter transfers, while between transfers moving work left it prefers longer transfers (cf.~Definition~\ref{def:eps:domination}).
\end{enumerate*}


\section{Structural Optimality via Primal-Dual Analysis}\label{sec:structural_optimality_via_lp}
We model our problem as a linear program and use complementary slackness conditions to derive structural properties that are sufficient for optimality.
These structural properties are used in both the design and the analysis of the algorithm.

\subsection{Model}
We consider the problem of scheduling a set of $n$ jobs $J\coloneqq\set{1,2,\dots,n}$ on a single processor that features $k$ different speeds $0<s_1<s_2<\dots<s_k$ and that is equipped with a solar-powered battery.
The battery is attached to a solar cell and recharges at a rate of $R\geq0$.
The power consumption when running at speed $s_i$ is $P_i>0$.
That is, while running at speed $s_i$ work is processed at a rate of $s_i$ and the battery is drained at a rate of $P_i$.
When the processor is idling (not processing any job) we say it runs at speed $s_0\coloneqq0$ and power $P_0\coloneqq0$.

Each job $j\in J$ comes with a \emph{release time} $r_j$, a \emph{deadline} $d_j$, and a processing volume (or work) $p_j$.
For each time $t$, a schedule $S$ must decide which job to process and at what speed.
Preemption is allowed, so that a job may be suspended at any point in time and resumed later on.
We model a schedule $S$ by two functions $S(t)$ (\emph{speed}) and $J(t)$ (\emph{scheduling policy}) that map a time $t \in \mathbb{R}$ to a speed index $S(t)\in\set{0,1,\dots,k}$ and a job $J(t)\in J$.
We say a job $j$ is \emph{active} at time $t$ if $t\in\intco{r_j,d_j}$.
Jobs can only be processed when they are active.
Thus, a \emph{feasible schedule} must ensure that $J^{-1}(j)\subseteq\intco{r_j,d_j}$ holds for all jobs $j$.
Moreover, a feasible schedule must finish all jobs and must ensure that the energy level of the battery never falls below zero.
More formally, we require $\int_{J^{-1}(j)}s_{S(t)}\dif{t}\geq p_j$ for all jobs $j$ and $\int_0^{t_0}P_{S(t)}\dif{t} \leq R \cdot t_0$ for all times $t_0$.
Our objective is to find a feasible schedule that requires the minimum recharge rate.

\subsection{Linear Programming Formulation}
For the following linear programming formulation, we discretize time into equal length time slots $t$.
Without loss of generality, we assume that their length is such that there is a feasible schedule for the optimal recharge rate $R$ that processes at most one job using at most one discrete speed in each single time slot\footnote{%
	The existence of such a schedule follows from standard speed scaling arguments.
	To see this, note that any schedule can be transformed to use earliest deadline first and interpolate an average speed in a depletion interval by at most one speed change between two discrete speeds.
	Thus, the number of job changes and speed changes is finite (depending on $n$) and we merely have to choose the time slots suitably small.
}.
Our linear program uses indicator variables $x_{jit}$ that state whether a given job $j$ is processed at a speed $s_i$ during time slot $t$.
Note that not only does this imply a possible huge number of variables but it is also not trivial to compute the length of the time slots.
Nevertheless, this will not influence the running time of our algorithm, since we merely use the linear program to extract sufficient structural properties of optimal solutions.
Our analysis will also use the fact that we can always further subdivide the given time slots into even smaller slots without changing the optimal schedule.
By rescaling the problem parameters, we can assume that the (final) time slots are of unit length.

With the variables $x_{jit}$ as defined above and a variable $R$ for the recharge rate, the integer linear program (ILP) shown in Figure~\ref{fig:linear_program:primal} corresponds to our scheduling problem.
The first set of constraints ensure that each job is finished during its release-deadline interval, while the second set of constraints ensures that the battery's energy level does not fall below zero.
The final set of constraints ensures that the processor runs at a constant speed and processes at most one job in each time slot.

\begin{figure}
\subfloat[ILP for our scheduling problem.]{%
	\begin{minipage}[b][4.9cm][t]{0.44\linewidth}
	\begin{align*}
	\min         &  \mathrlap{\quad R}\\
	\text{s.t.}  &&   \sum_{t\in\intco{r_j,d_j}}\sum_ix_{jit} \cdot s_i                        &\geq p_j   &   \forall j\\
	             &&   \mathllap{R \cdot t-\sum_{t'\leq t}\sum_{j\in J}\sum_{i=1}^kx_{jit'} \cdot P_i} &\geq 0     &   \forall t\\
	             &&   \sum_{j\in J}\sum_{i=1}^kx_{jit}                                  &\leq 1     &   \forall t\\
	             &&   \mathllap{x_{jit}\in\set{0,1}}                                    &           &   \forall j, i, t
	\end{align*}\label{fig:linear_program:primal}
	\end{minipage}
}
\hfill
\subfloat[Dual program for the ILP's relaxation.]{%
	\begin{minipage}[b][4.9cm][t]{0.49\linewidth}
	\begin{align*}
	\max         &  \mathrlap{\quad \sum_{j\in J}\alpha_j \cdot p_j-\sum_t\gamma_t}\\
	\text{s.t.}  &&   \alpha_j \cdot s_i-\sum_{t'\geq t}\beta_{t'} \cdot P_i-\gamma_t &\leq 0   &   \forall j,i,t\\
	             &&   \sum_t\beta_t \cdot t                                    &\leq 1   &   \\
	             &&   \alpha_j,\beta_t,\gamma_t &\geq 0   &   \forall j, t
	\end{align*}\label{fig:linear_program:dual}
	\end{minipage}
}
\caption{}
\label{fig:linear_program}
\end{figure}

\paragraph{Structural Properties for Optimality}
The complementary slackness constraints for the programs shown in Figure~\ref{fig:linear_program} give us necessary and sufficient properties for the optimality of a pair of feasible primal and dual solutions.
A description of these conditions can be found in Appendix~\ref{sec:structural_optimality_conditions}.
Although these conditions are only necessary and sufficient for optimal solutions of the ILP's \emph{relaxation}, our choice of the time slots ensures that there is an integral optimal solution to the relaxation.
Based on these complementary slackness constraints, we derive some purely combinatorial structural properties (not based on the linear programming formulation) that will guarantee optimality.
To this end, we will consider \emph{speed levels} of jobs in depletion intervals – essentially the discrete speed a job reached in a specific depletion interval – and how they change at depletion points.
In the following, if we speak of a speed $s$ between two discrete speeds (e.g., $s_2<s<s_3$) we implicitly assume $s$ to refer to the average speed in the considered time interval.
\begin{definition}[Speed Level Relation]\label{def:slr}
A schedule $S$ and a recharge rate $R$ obey the \emph{Speed Level Relation} (SLR) if there exist natural numbers $\SL(j,\ell)\in\N$ (\emph{speed levels}) such that
\begin{enumerate}
\item\label{def:slr:a} job $j$ processed at speed $s_{j,\ell}\in\intoo{s_{i-1},s_i}$ in depletion interval $I_{\ell}$ $\Rightarrow$ $\SL(j,\ell)=i$
\item\label{def:slr:b} job $j$ processed at speed $s_{j,\ell}=s_i$ in depletion interval $I_{\ell}$ $\Rightarrow$ $\SL(j,\ell)\in\set{i,i+1}$
\item\label{def:slr:c} jobs $j,j'$ both active in depletion intervals $I_{\ell_1}$ and $I_{\ell_2}$ with $\ell_1<\ell_2$ $\Rightarrow$ $\SL(j,\ell_2)-\SL(j,\ell_1)=\SL(j',\ell_2)-\SL(j',\ell_1)\in\N_0$ (in particular, the speed levels of a job are non-decreasing)
\item\label{def:slr:d} job $j$ processed in $I_l$ $\Rightarrow$ $\SL(j,l)\geq\SL(j',l)$ for all $j'$ active in $I_{l,j}=I_l\cap\intco{r_j,d_j}$
\end{enumerate}
\end{definition}
Intuitively, the SLR states that jobs jump the same number of discrete speeds between depletion intervals (cf.~Section~\ref{sec:approach_overview}), that speed levels are non-decreasing, and that the currently processed job is one of maximum speed level among active jobs.
Using this definition, we are ready to characterize optimal schedules in terms of the following combinatorial properties.
Note that for~\ref{thm:optimality:b} of the following theorem, one can simply use a YDS schedule (cf.~\cite{Yao:1995}) for the workload assigned to the corresponding depletion interval.
\begin{theorem}\label{thm:optimality}
Consider a schedule $S$ and a recharge rate $R$.
The following properties are sufficient\footnote{%
	If we restrict ourselves to \emph{normalized} (earliest deadline first, only one speed change per job in a depletion interval) schedules, they are in fact also necessary.
} for $S$ and $R$ to be optimal:
\begin{enumerate}
\item\label{thm:optimality:a} $S$ is feasible.
\item\label{thm:optimality:b} The work in each depletion interval is scheduled energy optimal.
\item\label{thm:optimality:c} The SLR holds.
\item\label{thm:optimality:d} There is a split depletion point: a depletion point $\tau_k>0$ such that no job with deadline greater than $\tau_k$ is processed before $\tau_k$.
\end{enumerate}
\end{theorem}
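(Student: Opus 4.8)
The plan is to run a primal--dual argument against the linear program in Figure~\ref{fig:linear_program:primal}. Given $S$ and $R$ satisfying properties~\ref{thm:optimality:a}--\ref{thm:optimality:d}, I read $S$ off as a feasible integral primal solution (legal since $S$ is feasible and since our choice of time slots makes the relaxation attain an integral optimum) and then construct a feasible dual solution $(\alpha,\beta,\gamma)$ which, together with $S$, satisfies complementary slackness. The shape of the dual is essentially dictated: since $S$ runs each job $j$ in a depletion interval $I_\ell$ at an average speed interpolating the two consecutive discrete speeds $s_{\SL(j,\ell)-1}$ and $s_{\SL(j,\ell)}$, and the primal realizes this by actually running both of these speeds, complementary slackness forces the dual constraints at $(j,\SL(j,\ell)-1,t)$ and $(j,\SL(j,\ell),t)$ to be tight; subtracting them yields $\alpha_j=B_\ell\Delta_{\SL(j,\ell)}$, where $B_\ell\coloneqq\sum_{t'\geq t}\beta_{t'}$ for $t$ interior to $I_\ell$. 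Accordingly I take $\beta$ supported on depletion points (so that $B_\ell$ is constant on $I_\ell$ and complementary slackness for the battery constraint is automatic), $\alpha_j\coloneqq B_\ell\Delta_{\SL(j,\ell)}$, and $\gamma_t\coloneqq\alpha_j s_i-B_\ell P_i$, where job $j$ at discrete speed $s_i$ is what runs at time $t$ ($\gamma_t\coloneqq0$ on idle slots; the two legal choices $s_i\in\set{s_{\SL(j,\ell)-1},s_{\SL(j,\ell)}}$ give the same $\gamma_t$ because $\Delta_{\SL(j,\ell)}(s_{\SL(j,\ell)}-s_{\SL(j,\ell)-1})=P_{\SL(j,\ell)}-P_{\SL(j,\ell)-1}$).

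First I would fix the $B_\ell$, equivalently the weights $\beta_\tau$ on depletion points. Well-definedness of $\alpha_j$ requires that $B_\ell\Delta_{\SL(j,\ell)}$ not depend on which interval $\ell$ it is read from; by well-separation, $\Delta_i=c^{\,i-1}\Delta_1$, this amounts to $B_m/B_{m+1}=c^{\,\delta_m}$ across each depletion point $\tau_m$, where $\delta_m=\SL(j,m+1)-\SL(j,m)$. Definition~\ref{def:slr}\ref{def:slr:c} is exactly what makes $\delta_m$ independent of $j$ and guarantees $\delta_m\geq0$, so the resulting $B_\ell$ are positive and non-increasing and hence $\beta_{\tau_m}=B_m-B_{m+1}\geq0$. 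Property~\ref{thm:optimality:d} is used precisely here: the split depletion point $\tau_k$ has no job straddling it, so I may terminate the construction by setting $B_\ell=0$ (hence $\beta_{\tau_\ell}=0$) for every interval after $\tau_k$ without violating any tightness requirement --- every job of positive $\alpha$ then lives entirely inside $[0,\tau_k]$, whereas every job running only in the unbounded tail receives $\alpha_j=0$ and imposes nothing, so all dual constraints at times past $\tau_k$ become trivial. Finally I rescale $\beta$ (and with it $\alpha$, $\gamma$) so that $\sum_t\beta_t t=1$; this is possible because some $\beta_\tau>0$, is forced by complementary slackness since $R>0$, and preserves all tight constraints.

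Next I would verify dual feasibility and the remaining complementary slackness conditions, all on $[0,\tau_k]$. Nonnegativity of $\alpha$ and $\beta$ is clear, and $\gamma_t\geq0$ reduces to $\Delta_{\SL(j,\ell)}s_{\SL(j,\ell)-1}\geq P_{\SL(j,\ell)-1}$, which holds since the power curve is convex through the origin. For the main constraint $\alpha_j s_i-B_\ell P_i-\gamma_t\leq0$: if $j$ is what runs at $t$, the affine function $s\mapsto(\alpha_j s-\gamma_t)/B_\ell$ agrees with the (piecewise-linear, convex) power curve at $s_{\SL(j,\ell)-1}$ and $s_{\SL(j,\ell)}$ and hence lies below it at every $s_i$, which is the claim; if $j'\neq j$ is merely active at $t$, then using $\alpha_{j'}=B_\ell\Delta_{\SL(j',\ell)}$ --- valid for any positive-$\alpha$ job active in $I_\ell$ by telescoping the ratios $B_m/B_{m+1}=c^{\,\delta_m}$ --- the constraint rearranges to $\Delta_{\SL(j',\ell)}s_i-P_i\leq\Delta_{\SL(j,\ell)}s_{\SL(j,\ell)-1}-P_{\SL(j,\ell)-1}$, and since $a\mapsto\Delta_a s_{a-1}-P_{a-1}$ is non-decreasing (again by well-separation) this follows from $\SL(j',\ell)\leq\SL(j,\ell)$, i.e.\ from Definition~\ref{def:slr}\ref{def:slr:d}; jobs with $\alpha_{j'}=0$ make the constraint trivial, and it is only needed for $t\in\intco{r_{j'},d_{j'}}$. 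The outstanding complementary slackness conditions hold by construction: a dual constraint is tight exactly at the running speed; $\beta$ lives on depletion points, where the battery constraint is tight by feasibility; $\gamma_t>0$ only on busy slots; and $\alpha_j>0$ only when job $j$ is processed exactly, which we may assume of $S$. LP strong duality then yields optimality of $(S,R)$ for the relaxation, and integrality of the relaxation's optimum --- obtained via a YDS schedule inside each depletion interval as in property~\ref{thm:optimality:b} --- transfers this to the original scheduling problem.

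I expect the main obstacle to be the bookkeeping around property~\ref{thm:optimality:d} and the unbounded last interval: one must argue that cutting the dual support at the split depletion point is both necessary (without such a clean cut, a job straddling the last depletion point would force every $B_\ell$ to $0$ and collapse the dual to the useless zero solution) and consistent (no job straddles $\tau_k$, so the cut contradicts no tightness requirement), while at the same time checking that the ``merely active'' dual constraints survive --- which is exactly where the full strength of the SLR, parts~\ref{def:slr:c} and~\ref{def:slr:d}, together with the well-separation assumption, is needed.
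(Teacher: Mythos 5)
Your proposal is essentially the paper's own proof: same LP, same complementary-slackness strategy, same dual construction (with your $B_\ell$ being exactly the paper's $\sum_{t'\geq\tau_\ell}\beta_{t'}$, your ratio condition $B_m/B_{m+1}=c^{\delta_m}$ being the paper's recursive system for $\beta_{\tau_\ell}$, and the same use of the split depletion point to truncate the dual support), and the same two feasibility checks that come down to well-separation. The only noteworthy differences are cosmetic --- your tangent-line/convexity framing for $\gamma_t\geq0$ and for the main dual constraint, and that you should, as the paper does, take $\tau_k$ to be the \emph{leftmost} split depletion point so that the speed-level jumps $\delta_m$ are actually defined (via a straddling job) at every interior $\tau_m$; with that small amendment the argument is complete.
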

We defer the proof to Appendix~\ref{sec:structural_optimality_conditions}.


\section{Notation}\label{sec:notation}
Given a schedule $S$, we need a few additional notions to describe and analyze our algorithm.
We defer any notation needed exclusively for proofs to the appendix.

\paragraph{Structuring the Input}
Let us start by formally defining depletion points and depletion intervals.
As noted earlier, depletion points represent time points where our algorithm maintains a battery level of zero and partition the time horizon into depletion intervals.
Note that these definitions depend on the current state of the algorithm.
\begin{definition}[Depletion Point]
Let $E_S(t)$ be the energy remaining at time $t$ in schedule $S$.
Then $\tau_i$ is a \emph{depletion point} if $E_S(\tau_i)=0$ (and the algorithm has labeled it as such).
$L$ is the number of depletion points, $\tau_0\coloneqq0$, and $\tau_{L+1}\coloneqq\infty$.
\end{definition}
\begin{definition}[Depletion Interval]
For $\ell>0$, the $\ell$-th depletion interval is $I_{\ell}\coloneqq\intco{\tau_{\ell-1},\tau_{\ell}}$.
We also define $s_{j,\ell}$ as the (average) speed of job $j$ during $I_{\ell}$.
\end{definition}
To simplify the discussion, we sometimes identify a depletion interval $I_{\ell}$ with its index $\ell$.
While moving work between depletion intervals, our algorithm uses the jobs' \emph{speed levels} together with the SLR as a guide:
\begin{definition}[Speed Level]
For all $j, \ell$ with $I_{\ell} \cap \intco{r_j,d_j} \neq \emptyset$, the \emph{speed level $\SL(j,\ell)$} of $j$ in $I_{\ell}$ is such that if $j$ is processed in $I_{\ell}$, then $s_{j,\ell} \in \intcc{s_{\SL(j,\ell)-1}, s_{\SL(j,\ell)}}$.
\end{definition}
Note that this definition should be understood as a variable of our algorithm.
In particular, it is not unique if the job runs at a discrete speed $s_{i-1}$.
In these cases, $\SL(j,\ell)$ can be either $i-1$ or $i$ (and the algorithm can set $\SL(j,\ell)$ as it wishes).
The algorithm initializes the speed level for every depletion interval where $j$ is active based on the initial YDS schedule and assigns speed levels maintaining the SLR throughout its execution.

Next, we give a slightly weaker version of the well-known EDF (Earliest Deadline First) scheduling policy (see Appendix~\ref{app:additional_notation} for the full definition).
The idea is to maintain EDF w.r.t.~depletion intervals but to allow deviations within depletion intervals.
For example, we avoid schedules with depletion intervals $I_1,\dots,I_4$ where job $j_1$ is scheduled in $I_1$ and $I_3$ and $j_2$ in $I_2$ and $I_4$.
This will ensure that the collection of transfer paths will be laminar, which is useful throughout the analysis.
\begin{definition}[Weak EDF, informal]
Schedule $S$ is \emph{weak EDF} if there is a schedule that is EDF in which each job is run in the same depletion intervals as in $S$.
\end{definition}
Next, we consider to what extent a schedule adheres to the optimality conditions (Theorem~\ref{thm:optimality}).
We distinguish between schedules that (essentially) adhere to the first two optimality conditions and schedules that also have the third optimality condition (SLR).
\begin{definition}[Nice \& Perfect]\label{def:nice_perfect}
Schedule $S$ is \emph{nice} if it is feasible, obeys YDS between depletion points, and satisfies weak EDF\@.
If, additionally, $S$ fulfills the SLR, we call it \emph{perfect}.
\end{definition}

\paragraph{Distributing Workload}
We now define \es, the building block for our algorithm.
Intuitively, they formalize possible ways to move work around between depletion intervals.
Our definition ensures that moving work over \es maintains niceness throughout the algorithm's execution.
Moreover, we also ensure that \es only affect the schedule's speed profile at their sources/targets.
\begin{definition}[\e]\label{def:eps}
The sequence $(\ell_a,j_a)_{a = 0}^s$ is called an \emph{\e} if we can, simultaneously for all $a$, move some non-zero workload of $j_a$ from $\ell_{a-1}$ to $\ell_a$ while maintaining niceness and without changing any job speeds in $\ell_1,\dots,\ell_{s-1}$.
The pair ($\ell_0,j_0$) (resp., $(\ell_s,j_s)$) is the \emph{source} and \emph{source job} (resp., \emph{destination} and \emph{destination job}) of the \e.
The \e is \emph{active} if it also maintains perfectness.
\end{definition}
Each edge drawn in Figure~\ref{fig:moving_work} is a (trivial) \e.
See Figure~\ref{fig:more_complex_epsilon_transfer} in Appendix~\ref{app:additional_example} for a more complex example of an \e.

Next we define the priority of an \e.
Our algorithm compares \es based on source and destination.
Once the source and destination have been fixed, the priority is used to determine which \e is used to transfer work.
As mentioned in Section~\ref{sec:approach_overview}, the basic idea is to: prefer transfers that move work right to transfers that move work left, between transfers moving work right prefer shorter transfers, and between transfers moving work left prefer longer transfers.
\begin{definition}[Transfer Priority]\label{def:eps:domination}
Let $T_1=(\ell^1_a,j^1_a)_{a=0}^{s_1}$ and $T_2=(\ell^2_a,j^2_a)_{a=0}^{s_2}$ be two different \es with $\ell^1_0=\ell^2_0$ and $\ell^1_{s_1}=\ell^2_{s_2}$.
Let $a_1^*=\arg\min_a\set{\ell^1_a\neq\ell^2_a}$ and $a_2^*=\arg\min_a\set{j^1_a\neq j^2_a}$.
We say that $T_1$ is \emph{higher priority} than $T_2$ if
\begin{enumerate}
\item $\ell^2_{a_1^*}<\ell^1_{a_1^*-1}<\ell^1_{a_1^*}$, or
\item $\ell^1_{a_1^*}<\ell^2_{a_1^*}$, and either $\ell^2_{a_1^*}<\ell^1_{a_1^*-1}$ or $\ell^1_{a_1^*-1}<\ell^1_{a_1^*}$, or
\item $a_1^*$ does not exist and the deadline for $j^1_{a_2^*}$ is earlier than the deadline for $j^2_{a_2^*}$.
\end{enumerate}
\end{definition}
Finally, we can define our multigraph of legal \es.
\begin{definition}[Distribution Graph]\label{def:distribution_graph}
The \emph{distribution graph} is a multigraph $G_D=(V_D,E_D)$.
$V_D$ is the set of depletion points and for every active \e $(\ell_a,j_a)_{a=0}^s$, there is a corresponding edge with source $\ell_0$ and destination $\ell_s$.
\end{definition}


\section{Algorithm Description}\label{sec:algorithm_description}
This section provides a formal description of the algorithm.
From a high level, the algorithm can be broken into two pieces:
\begin{enumerate*}
\item choosing which \es to move work along (in order to lower the recharge rate), and
\item handling events that cause any structural changes.
\end{enumerate*}
We start in Section~\ref{sec:handling_events} by describing the structural changes our algorithm keeps track of and by giving a short explanation of each event.
Section~\ref{subsec:algorithm_description} describes our algorithm.
Due to space constraints, the full correctness and runtime proofs are left for Appendix~\ref{sec:runtime_analysis} and~\ref{app:alg_correctness}.

\subsection{Keeping Track of Structural Changes}\label{sec:handling_events}
How much work is moved along each single \e depends inherently on the structure of the current schedule.
Thus, intuitively, an event is any structural change to the distribution graph or the corresponding schedule while we are moving work.
At any such event, our algorithm has to update the current schedule and distribution graph.
The following are the basic structural changes our algorithm keeps track of:
\begin{itemize}
\item \textbf{Depletion Point Appearance:}
	For some job $j$, the remaining energy $E_S(d_j)$ at $j$'s deadline becomes zero and the rate of change of energy at $d_j$ is strictly negative.
	If we were not to add this depletion point, the amount of energy available at $d_j$ would become negative, violating the schedule's feasibility.
	We can easily calculate when this happens by examining the rate of change of $R$ as well as the rate of change of $s_{j,\ell}$ for all jobs $j$ that run in the depletion interval $I_{\ell}$ containing $d_j$.

\item \textbf{Edge Removal:}
	An edge removal occurs when, for some job $j$, the workload of $j$ processed in a depletion interval $I_{\ell}$ becomes zero.
	In other words, all of $j$'s work has been moved out of $I_{\ell}$.
	Similar to before, we can easily keep track of the time when this occurs for any job $j$ processed in a given depletion interval, since all involved quantities change linearly.

\item \textbf{Edge Inactive:}
	An edge inactive event occurs when for some job $j$ its speed $s_{j,\ell}$ in a depletion interval $I_{\ell}$ becomes equal to some discrete speed $s_i$.
	Once more, we keep track of when this happens for each job processed in a given depletion interval.
\end{itemize}

\paragraph{Handling of Critical Intervals}
Note that by moving work along \es between two events $e_1$ and $e_2$, our algorithm causes
\begin{enumerate*}
\item the speed of exactly one YDS critical interval in each depletion interval to decrease and
\item the speed of some YDS critical intervals to increase.
\end{enumerate*}
For a single critical interval, these speed changes are monotone over time (between two events).
However, critical intervals might merge or separate during this process (e.g., when the speed of a decreasing interval becomes equal to a neighbouring interval).
In other words, the critical intervals of a given depletion interval might be different at events $e_1$ and $e_2$.
On first glance, this might seem problematic, as a critical interval merge/separation could cause a change in the rate of change of the critical interval's speed, perhaps with the result that the algorithm stops for events spuriously, or misses events it should have stopped for.
However, since only neighbouring critical intervals can merge and separate, this can be easily handled:
In any depletion interval, there are at most $\LDAUOmicron{n}$ critical intervals at event $e_1$.
Since only neighbouring critical intervals can merge/separate when going from $e_1$ to $e_2$, for each critical interval changing speed there are at most $\LDAUOmicron{n^2}$ possible candidate critical intervals that can be part of event $e_2$.
We just compute the next event caused by each of these candidates, and whether or not each candidate event can feasibly occur.
Then, the next event to be handled by our algorithm is simply the minimum of all feasible candidates.
This is an inefficient way to handle critical interval changes, but it significantly simplifies the algorithm description.
We leave the description of a more efficient way to handle \enquote{critical interval events} for the full version.

\paragraph{Handling Events}
When we have identified the next event, we must update the distribution graph and recalculate the rates at which we move work along the \es.
Given the definition of the distribution graph, updating the graph is fairly straightforward.
However, after updating the graph there might no longer be a path from every depletion interval to the far right depletion interval.
This can be seen as a cut in the distribution graph.
In these cases, to make progress, we either have to remove a depletion point or adapt the jobs' speed levels; If both of these fixes are not possible, our algorithm has found an optimal solution.
A detailed description of this can be found in Section~\ref{sec:handling_cuts} of the appendix.

\subsection{Main Algorithm}\label{subsec:algorithm_description}
Now that we have a description of each event type, we can formalize the main algorithm.
A formal description of the algorithm can be found in Listing~\ref{mainalgo}.
We give an informal description of its subroutines \textsc{CalculateRates}, \textsc{UpdateGraph}, and \textsc{PathFinding} below.

\begin{lstlisting}[float,keywords={for,while},caption={The algorithm for computing minimum recharge rate schedule.},label={mainalgo}]
Set $R$ to be recharge rate that ensures YDS schedule, $S$, is feasible
Let $G_D = (V_D,E_D)$ be the corresponding Distribution Graph.
$G'_D = \textsc{PathFinding}(G_D)$
($\Delta$, $\delta_{j,\ell}$, $T$)$\coloneqq\textsc{CalculateRates}(G'_D,S)$
while True:
	for each job $j$ and depletion interval $\ell$:
		set $s_{j,\ell} = s_{j,\ell} + \Delta \cdot \delta_{j,\ell}$
	set $R = R - \Delta$
	$\textsc{UpdateGraph}(T,G_D,S)$
	if $\exists$ fixable cut:
		fix cut with either a depletion point removal or SLR procedure
	else:   exit
	$G'_D = \textsc{PathFinding}(G_D)$
	($\Delta$, $\delta_{j,\ell}$, $T$)$\coloneqq\textsc{CalculateRates}(G'_D,S)$
\end{lstlisting}

\begin{description}
\item[\normalfont\textsc{UpdateGraph}$(T,G_D,S)$:] This subroutine takes an event type $T$, the distribution graph $G_D$ and the current schedule $S$ and performs the required structural changes.
It suffices to describe how to build the graph from scratch given a schedule (computing a schedule simply involves computing a YDS schedule between each depletion point).
Now the question becomes: Given two depletion points, how do we choose the \e between these two?
While perhaps daunting at first, this can be achieved via a depth-first search from the source depletion interval.
Whenever the algorithm runs into a depletion interval it has previously visited in the search, it chooses the higher priority \e of the two as defined by the priority relation.

\item[\normalfont\textsc{PathFinding}$(G_D)$:] We define \textsc{PathFinding}$(G_D)$ in Listing~\ref{pathfinding}.
Note the details of determining the highest priority edge are omitted but the implementation is rather straightforward.
The priority relation for choosing edges is: First choose the shortest right going edge, and otherwise choose the longest left going edge.
While this priority relation itself is rather straightforward, it requires a non-trivial amount of work to show that it yields suitable monotonicity properties to bound the runtime (see Appendix~\ref{sec:runtime_analysis}).

\begin{lstlisting}[float,keywords={for,while},caption={The \textsc{PathFinding} subroutine.},label={pathfinding}]
Let $S =\set{v_L}$, where $v_L$ is rightmost vertex
while exists an edge $e = (v_1,v_2)$ with $v_1,v_2\in S$ and $e$ is the highest priority such edge:
	add $v_1$ to S
\end{lstlisting}

\item[\normalfont\textsc{CalculateRates$(G'_D,S)$}:] This subroutine takes as input the set of paths from the distribution graph $G'_D$ and the current schedule $S$.
It returns for each job $j$ and each depletion interval $\ell$, the rate $\delta_{j,\ell}$ at which $s_{j,\ell}$ should change, $T$, the next event type, and $\Delta$ the amount the recharge rate should be decreased.
It is straightforward to see the set of paths chosen by the algorithm $G'_D$ can be viewed as a tree with the root being the rightmost depletion interval.
Assuming $R$ is decreasing at a rate of $1$, and working our way from the leaves to the root, we can calculate $\delta_{j,\ell}$ such that the rate of change of energy at all depletion points remains $0$.
With these rates, we can use the previously discussed methods to find both $T$ and $\Delta$.
\end{description}

\bibliographystyle{my-plainnat}
\bibliography{references}

\begin{thebibliography}{11}
\providecommand{\natexlab}[1]{#1}
\providecommand{\url}[1]{\texttt{#1}}
\expandafter\ifx\csname urlstyle\endcsname\relax
  \providecommand{\doi}[1]{doi: #1}\else
  \providecommand{\doi}{doi: \begingroup \urlstyle{rm}\Url}\fi

\bibitem[Angelopoulos et~al.(2015)Angelopoulos, Lucarelli, and
  Nguyen]{Angelopoulos2015}
S.~Angelopoulos, G.~Lucarelli, and K.~T. Nguyen.
\newblock Primal-dual and dual-fitting analysis of online scheduling algorithms
  for generalized flow time problems.
\newblock In \emph{Proceedings of the 23rd Annual European Symposium on
  Algorithms (ESA)}, pages 35--46. Springer, 2015.

\bibitem[Antoniadis et~al.(2014)Antoniadis, Barcelo, Consuegra, Kling, Nugent,
  Pruhs, and Scquizzato]{AntoniadisBCKNPS14}
A.~Antoniadis, N.~Barcelo, M.~E. Consuegra, P.~Kling, M.~Nugent, K.~Pruhs, and
  M.~Scquizzato.
\newblock Efficient computation of optimal energy and fractional weighted flow
  trade-off schedules.
\newblock In \emph{Symposium on Theoretical Aspects of Computer Science}, pages
  63--74, 2014.

\bibitem[Bansal et~al.(2007)Bansal, Kimbrel, and Pruhs]{Bansal:2007}
N.~Bansal, T.~Kimbrel, and K.~Pruhs.
\newblock Speed scaling to manage energy and temperature.
\newblock \emph{J. ACM}, 54\penalty0 (1):\penalty0 3:1--3:39, March 2007.

\bibitem[Bansal et~al.(2009)Bansal, Chan, and Pruhs]{Bansal:2009}
N.~Bansal, H.-L. Chan, and K.~Pruhs.
\newblock Speed scaling with a solar cell.
\newblock \emph{Theoretical Computer Science}, 410\penalty0 (45):\penalty0
  4580--4587, 2009.

\bibitem[Brooks et~al.(2000)Brooks, Bose, Schuster, Jacobson, Kudva,
  Buyuktosunoglu, Wellman, Zyuban, Gupta, and Cook]{Brooks:2000}
D.~M. Brooks, P.~Bose, S.~E. Schuster, H.~Jacobson, P.~N. Kudva,
  A.~Buyuktosunoglu, J.-D. Wellman, V.~Zyuban, M.~Gupta, and P.~W. Cook.
\newblock Power-aware microarchitecture: Design and modeling challenges for
  next-generation microprocessors.
\newblock \emph{IEEE Micro}, 20\penalty0 (6):\penalty0 26--44, 2000.
\newblock ISSN 0272-1732.

\bibitem[Cole et~al.(2012)Cole, Letsios, Nugent, and Pruhs]{ColeLNP12}
D.~Cole, D.~Letsios, M.~Nugent, and K.~Pruhs.
\newblock Optimal energy trade-off schedules.
\newblock In \emph{International Green Computing Conference}, pages 1--10,
  2012.

\bibitem[Pruhs et~al.(2008)Pruhs, Uthaisombut, and Woeginger]{puw08}
K.~Pruhs, P.~Uthaisombut, and G.~J. Woeginger.
\newblock Getting the best response for your erg.
\newblock \emph{ACM Transactions on Algorithms}, June 2008.

\bibitem[Remick et~al.(2016)Remick, Quinn, McFarland, Bergman, and
  Vakakis]{Remick:2016aa}
K.~Remick, D.~D. Quinn, D.~M. McFarland, L.~Bergman, and A.~Vakakis.
\newblock High-frequency vibration energy harvesting from impulsive excitation
  utilizing intentional dynamic instability caused by strong nonlinearity.
\newblock \emph{Journal of Sound and Vibration}, 370:\penalty0 259--279, 2016.
\newblock \doi{10.1016/j.jsv.2016.01.051}.

\bibitem[Stephen(2006)]{Stephen:2006aa}
N.~G. Stephen.
\newblock On energy harvesting from ambient vibration.
\newblock \emph{Journal of Sound and Vibration}, 293\penalty0 (1--2):\penalty0
  409--425, 2006.
\newblock \doi{10.1016/j.jsv.2005.10.003}.

\bibitem[Vullers et~al.(2009)Vullers, van Schaijk, Doms, Hoof, and
  Mertens]{Vullers:2009aa}
R.~Vullers, R.~van Schaijk, I.~Doms, C.~V. Hoof, and R.~Mertens.
\newblock Micropower energy harvesting.
\newblock \emph{Solid-State Electronics}, 53\penalty0 (7):\penalty0 684--693,
  2009.

\bibitem[Yao et~al.(1995)Yao, Demers, and Shenker]{Yao:1995}
F.~F. Yao, A.~J. Demers, and S.~Shenker.
\newblock A scheduling model for reduced cpu energy.
\newblock In \emph{Foundations of Computer Science}, pages 374--382, 1995.

\end{thebibliography}

\newpage
\appendix

\section{Additional Example}\label{app:additional_example}
\begin{figure}
\centering
\includegraphics[width=0.618\linewidth]{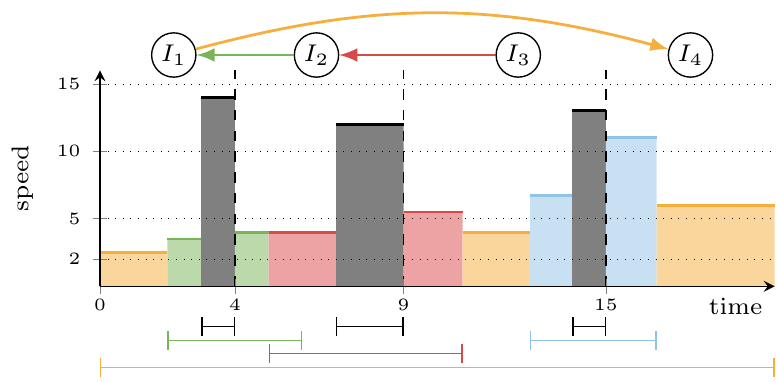}
\caption{%
	Variant of Figure~\ref{fig:moving_work}.
}
\label{fig:more_complex_epsilon_transfer}
\end{figure}
Figure~\ref{fig:more_complex_epsilon_transfer} is a variant of Figure~\ref{fig:moving_work} with the release-deadline interval of the green and red job changed such that they can form one combined \e.
Note that we can move work of the red job from $I_3$ to $I_2$ and a suitable amount of the green job from $I_2$ to $I_1$ such that their speeds in $I_2$ do not change (due to this combined movement) but the slice of the green job gets thinner while the slice of the red job gets thicker.
Also note that each edge is an \e of its own.
One (not necessarily optimal) way to decrease the recharge rate in this example is to move (net) work from $I_3$ to $I_1$ via the combined red-green \e, work from $I_2$ to $I_1$ via the green \e, and work from $I_1$ to $I_4$ via the orange \e.
Note that the last \e has to move enough work to make up for the additional work coming in from $I_2$ and $I_3$.

\section{Additional Notation}\label{app:additional_notation}
The following notation is used throughout proofs in the appendix but is not strictly needed for the rest of the paper.

\begin{definition}[Power Function Slopes]
For the speeds $s_i$ and their powers $P_i$ we define $\Delta_i\coloneqq\frac{P_i-P_{i-1}}{s_i-s_{i-1}}$.
\end{definition}
Remember that $\Delta_i=c^{i-1}\Delta_1$ for a constant $c>1$ by well-separation (cf.~Section~\ref{intro}).

The following two definitions are relatively technical, but essentially describe a weaker version of the EDF (Earliest Deadline First) scheduling policy.
\begin{definition} [first-run sequence]
Assume $d_1 < d_2 < d_3 < \dots < d_n$.
To construct the \emph{first-run}  sequence $(I_{\ell}^j)_{j = 1}^{k}$ of a depletion interval $I_{\ell}$ let $j_{\ell_1}, \dots , j_{\ell_k}$ be the $k$ jobs run in $I_{\ell}$ ordered by the first time they are run within $I_{\ell}$.
Then, $(I_{\ell}^j) = (\ell_1, \dots , \ell_k)$.
The first-run sequence of a schedule $S$ is the concatenation of all depletion interval first-run sequences from first to last.
\end{definition}

\begin{definition}[Weak EDF]
We say that a schedule satisfies \emph{weak EDF} if the corresponding first-run sequence $(S_k)$ has the following property.
For every $j \in [n]$, let $f_j$ and $l_j$ be the first and last appearances of $j$ in $(S_k)$.
Then, for all $i$ such that $f_j < i < l_j$, $S_i > j$.
\end{definition}

Finally, the last additional notion we'll be using captures when a job can move work to a given depletion interval.
This will be of particular importance when adjusting speed levels, as we have to make sure that these remain consistent.
\begin{definition}[Reachable]
A depletion interval $\ell$ is \emph{reachable} (resp., \emph{actively reachable}) by $j$ if there is an \e (resp., active \e) with source job $j$ and destination $\ell$.
\end{definition}

\section{Proof of Structural Optimality Conditions}\label{sec:structural_optimality_conditions}
In the following we provide the complementary slackness conditions obtained from the primal-dual formulation of our problem in Section~\ref{sec:structural_optimality_via_lp}.
Using these, we then prove Theorem~\ref{thm:optimality}.
\begin{align}
x_{jit}   &> 0   &   \Rightarrow   &&   \alpha_js_i-\sum_{t'\geq t}\beta_{t'}P_i-\gamma_t   &= 0    &,\label{eqn:complsl:1}\\
R         &> 0   &   \Rightarrow   &&   \sum_t\beta_tt                                      &= 1    &,\label{eqn:complsl:2}\\
\alpha_j  &> 0   &   \Rightarrow   &&   \sum_{t\in\intco{r_j,d_j}}\sum_ix_{jit}             &= p_j  &,\label{eqn:complsl:3}\\
\beta_t   &> 0   &   \Rightarrow   &&   \sum_{t'\leq t}\sum_{j\in J}\sum_{i=1}^kx_{jit'}P_i &= Rt   &,\label{eqn:complsl:4}\\
\gamma_t  &> 0   &   \Rightarrow   &&   \sum_{j\in J}\sum_{i=1}^kx_{jit}                    &= 1    &.\label{eqn:complsl:5}
\end{align}

\begin{proof}[of Theorem~\ref{thm:optimality}]
The feasibility of $S$ immediately gives us a set of candidate primal variables that fulfill Equation~\eqref{eqn:complsl:3} of the complementary slackness conditions.
Assuming time slots to be small enough also ensures that the $x$-variables are integral and that Equation~\eqref{eqn:complsl:5} is fulfilled\footnote{To see this, first note that we can normalize any (also an optimal) schedule using the EDF scheduling policy.
By choosing the time slots small enough, each job is processed alone and at a constant speed within a slot.
Note that we don't need to know the time slots' size for this argument, the mere existence of such time slots is sufficient (since the resulting optimality conditions are oblivious of the time slots).}.
In the following, we show how to define a set of feasible dual variables such that the remaining complementary slackness conditions hold.
This immediately implies optimality.

Before we define the dual variables, let us define some helper variables that describe how speed levels change at depletion points.
Fix a set of speed levels that adheres to the SLR\footnote{If we can choose, we choose the smallest possible speed level.} and consider the depletion points $0<\tau_1<\tau_2<\dots<\tau_L$ of $S$.
By~\ref{thm:optimality:d}, there is at least one depletion point $\tau_k$ such that no job $j$ with $d_j>\tau_k$ is processed in $\intco{0,\tau_k}$.
Without loss of generality, let $\tau_k$ be the leftmost depletion point with this property.
By this choice, for any depletion point $\tau_{\ell}$ with $\ell\in\set{1,2,\dots,k-1}$ there is a job $j_{\ell}$ that is active immediately before and after $\tau_{\ell}$.
The speed level of $j_{\ell}$ increases by $\SL(j_{\ell},\ell+1)-\SL(j_{\ell},l)\in\N_0$ from the $\ell$-th to the $\ell+1$-th depletion interval.
By the SLR, this increase is independent of the concrete choice of $j_{\ell}$ (any active job's speed level increases by the same amount from $I_{\ell}$ to $I_{\ell+1}$).
Thus, we can define $a_{\ell}\coloneqq\SL(j_{\ell},\ell+1)-\SL(j_{\ell},\ell)\in\N_0$ as the increase of speed level at $\tau_{\ell}$.

We are now ready to define the dual variables.
For $t\not\in\set{\tau_1,\tau_2,\dots,\tau_{k}}$ we set $\beta_t=0$.
The remaining $\beta_t$ variables are defined by the (unique) solution to the following system of linear equations:
\begin{equation}
\begin{aligned}
&\sum_{\ell'=\ell}^{k}\beta_{\tau_{\ell'}}=c^{a_{\ell}}\sum_{\ell'=\ell+1}^{k}\beta_{\tau_{\ell}}   &   \text{for }\ell\in\set{1,2,\dots,k-1},\\
&\sum_{\ell=1}^k\tau_{\ell}\beta_{t_{\ell}}=1
.
\end{aligned}
\end{equation}
Here, $c>1$ is the constant from the definition of well-separation (see Section~\ref{intro}).
That is, for all $i$ we have $\Delta_{i+1}(\coloneqq\frac{P_{i+1}-P_{i}}{s_{i+1}-s_{i}})=c\cdot\Delta_{i}$.
By construction, these $\beta_t$ variables fulfill Equations~\eqref{eqn:complsl:2} and~\eqref{eqn:complsl:4}.

It remains to define suitable $\alpha_j$ variables such that Equation~\eqref{eqn:complsl:1} holds and to show that these dual variables are feasible.
So fix a job $j$ and let $I_{\ell}$ be the first depletion interval in which $j$ is processed.
We set $\alpha_j=\Delta_{\SL(j,\ell)}\cdot\sum_{t\geq\tau_{\ell}}\beta_t$.
Using a simple induction together with the definition of the $\beta_{t}$ and $a_{\ell}$, this implies $\alpha_{j}=\Delta_{\SL(j,\ell')}\cdot\sum_{t\geq\tau_{\ell'}}\beta_{t}$ for all $\ell'\geq\ell$.
To set $\gamma_t$, remember that we assume time slots to be small enough that at most one job is processed.
If no job is processed, we set $\gamma_t=0$.
Otherwise, let $j$ be the job processed in $t$ and $\ell$ the depletion interval that contains $t$.
We set $\gamma_t=\alpha_j s_{\SL(j,\ell)}-\sum_{t'\geq t}\beta_{t'}P_{\SL(j,\ell)}$.
By construction, these variables fulfill the remaining complementary slackness condition (Equation~\eqref{eqn:complsl:1}), $\alpha_{j}\geq0$, $\beta_{t}\geq0$, and the second dual constraint holds.
Thus, it remains to show that for all $t$ we have $\gamma_t\geq0$ and that for all $j$, $i$, and $t$ we have $\alpha_js_i-\sum_{t'\geq t}\beta_{t'}P_i-\gamma_t\leq0$.
For the inequality $\gamma_t\geq0$, let $j$ be the job processed in $t$ (if there is no job, we have $\gamma_t=0$ by definition) and let $I_l$ be the depletion interval that contains $t$.
Note that $\sum_{t'\geq t}\beta_{t'}=\sum_{t'\geq\tau_{\ell}}\beta_{t'}$.
Then the desired inequality follows from
\begin{align}
     \frac{\gamma_t}{s_{\SL(j,\ell)}\sum_{t'\geq t}\beta_{t'}}
=    \frac{\alpha_j}{\sum_{t'\geq \tau_{\ell}}\beta_{t'}}-\frac{P_{\SL(j,\ell)}}{s_{\SL(j,\ell)}}
=    \Delta_{\SL(j,\ell)}-\frac{P_{\SL(j,\ell)}}{s_{\SL(j,\ell)}}
\geq 0
\end{align}
(the last inequality follows since speeds are well-separated).
Now fix a job $j$, a speed index $i$, and a time slot $t$ in which $j$ is active.
Let $l$ denote the depletion interval that includes $t$ and let $j'$ be the job that is actually processed in $t$.
We have to show $\alpha_js_i-\sum_{t'\geq t}\beta_{t'}P_i\leq\gamma_t$.
This is trivial if $\alpha_j=0$.
Otherwise, using the definition of $\alpha_j$ and $\gamma_t$ and dividing by $\sum_{t'\geq t}\beta_{t'}$, it is equivalent to
\begin{align*}
&\iff   &   \Delta_{\SL(j,\ell)}s_i-P_i &\leq \Delta_{\SL(j',\ell)}s_{\SL(j',\ell)}-P_{\SL(j',\ell)}\\
&\iff   &   P_{\SL(j',\ell)}-P_i        &\leq \Delta_{\SL(j',\ell)}s_{\SL(j',\ell)}-\Delta_{\SL(j,\ell)}s_i
.
\end{align*}
Since we assume YDS is used in between depletion points, we know $j'$ runs at least as fast in $I_{\ell}$ as $j$.
Thus we have $s_{j',\ell}\geq s_{j,\ell}$ and, in particular, $\SL(j',\ell)\geq\SL(j,\ell)$.
Thus, it is sufficient to show $P_{\SL(j',\ell)}-P_i\leq\Delta_{\SL(j',\ell)}(s_{\SL(j',\ell)}-s_i)$, which follows once more from the well-separation of the speeds.
\qed
\end{proof}


\section{Runtime Analysis}\label{sec:runtime_analysis}
In this section we provide an analysis on the runtime and correctness of our algorithm.
We begin with some notes on how the algorithm handles certain cases,
We then bound the number of different events that can occur.
Finally, we analyze the runtime of the calculations made by our algorithm in between events.

\subsection{Intricacies of the Algorithm}

In this subsection, we describe informally some intricacies of the algorithm that, though not vital to a high-level understanding of the algorithm, are key in its formal analysis.

\paragraph{Valid \es}
The definition of \e allows for many counterintuitive \es:
For example, ones that take the same job multiple times, or enter the same critical interval multiple times.
It is easy for the depth-first search that chooses \es to prune such undesirable \es.
Here we describe the set of \es pruned, and why.

\begin{itemize}
\item \textbf{\es that take the same job multiple times, or enter the same critical interval multiple times.}  It is easy to see that such \es are in some sense not minimal, and an \e with strictly fewer edges could be obtained.
\item \textbf{\es violating weak EDF.}
This allows us to say the subgraph of the distribution graph taken by the algorithm has edges that are laminar, and that the algorithm never has reason to take \es that cross each other, as well as that the edges of \es taken by our algorithm are laminar.
\item \textbf{\es that take a right edge that is completely contained within a previously taken left edge.} Though less intuitive, one can show that such \es can be replaced by a series of left \es, in a manner similar to that used in the proof of Lemma~\ref{lem:composition} below.
By disallowing such \es, we can say more about the \es taken by the algorithm, making analysis simpler.
\item \textbf{\es that take work of a job in the opposite direction of previously taken \es.}
This allows us to formally prove that any job's workload moves in two phases in between two cut events: first only right and then only left.
This insight is key to bounding the number of edge removal events.
\end{itemize}

\paragraph{Avoiding Critical Interval Events}
As described, the algorithm does not stop for critical interval events.
To gain some insight into how this is accomplished, we briefly describe how to calculate the speed level event where a critical interval's speed becomes the upper speed of its current speed level.
In a depletion interval, there may be multiple jobs at a speed level, with different releases and deadlines, and so multiple possible ways such a speed level event can occur.
However, we know that if this event occurs, it occurs at a critical interval whose borders are releases (or the first time the job can be run in the depletion interval, according to SLR) or deadlines.
Thus, for every pair of releases and deadlines, we can compute which jobs must be run in that interval within the depletion interval.
If it were the case that the next event is in fact a speed level event caused by this critical interval, we can calculate when it would occur by looking at the rate at which just these jobs are getting work, and calculating when the resulting critical interval speed would become the maximum for its speed level.
By considering all possible critical intervals, we can determine which speed level event will actually occur first.
For other types of events, similar calculations can be performed.

\subsection{Bounding the Runtime}
Here we show that the number of events that cause the algorithm to recalculate are bounded by a polynomial.
The idea is to first bound the number of \emph{cut events}: situations, in which there is a depletion interval without a path of \es to move work to the far right depletion interval (see Section~\ref{sec:handling_cuts} for details).
Then, we show that between any two cut events, there are only a polynomial number of other events.
While using such a hierarchical structure to bound events may artificially increase the runtime bound, it is helpful in simplifying the analysis.
We conclude this subsection with Theorem~\ref{thm:algo_runtime}, which bounds the total runtime.

\begin{lemma}\label{lem:runtime_cut_fixing}
The algorithm fixes a cut in the distribution graph at most $k n^2$ times.
\end{lemma}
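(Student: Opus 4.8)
The plan is to exploit the \emph{Cut Invariant} from the hierarchy of invariants described in Section~\ref{sec:approach_overview}: intuitively, a cut update forces a collection of jobs to simultaneously cross the next discrete speed level going from one depletion interval to the next, and this crossing is monotone — once done it is never undone until a later cut update. Concretely, I would first make precise what a cut event does to the speed-level assignment. When the algorithm fixes a cut (via the SLR/cut-update procedure rather than a depletion point removal), all jobs crossing some depletion point $\tau_\ell$ advance by one discrete speed level there; i.e.\ the quantity $a_\ell$ (the speed-level jump at $\tau_\ell$, exactly as defined in the proof of Theorem~\ref{thm:optimality}) increases by one. So the natural progress measure is $\Phi \coloneqq \sum_{\ell} \SL(j_\ell^{\max}, \ell)$ or, equivalently, the sum of the speed-level jumps $\sum_\ell a_\ell$ over all current depletion points, and I would argue that each cut update strictly increases this measure and that it never decreases except possibly when a depletion point is created or removed.

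The key steps, in order, would be: (1) State and use the Cut Invariant to show that between two cut updates no job's speed level in any fixed depletion interval ever decreases, so $\Phi$ is non-decreasing between cut events; (2) show that a cut update increments $\Phi$ by at least one (it moves at least one job up at least one discrete speed level at some depletion point); (3) bound $\Phi$ from above: there are at most $k$ discrete speed levels, and — this is the delicate bookkeeping — the number of (job, depletion-interval) slots that can ever carry a speed level is $O(n^2)$, because there are at most $n$ jobs and, crucially, at most $O(n)$ depletion intervals that are relevant to any job (depletion points are tied to job deadlines where energy hits zero, and by weak EDF / the SDP structure the number of depletion intervals a single job spans is controlled). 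Combining, $\Phi$ ranges over an interval of size $O(k n^2)$, giving the bound; tightening the slot count to exactly $kn^2$ would use that each of the $n$ jobs contributes a speed-level jump at each of at most $O(n)$ depletion points, each jump of size at most $k$.

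The part I expect to be the main obstacle is step (3), and within it two sub-issues: first, pinning down that the relevant number of (job, depletion interval) pairs is $O(n^2)$ rather than something larger — depletion points can appear and disappear over the run of the algorithm, so one must argue that \emph{new} depletion points created after a cut update do not reset the progress measure (this is exactly the role of the Depletion Point Removal Invariant, which says a removed depletion point is not re-added until the next cut update, so within a cut-free phase the set of depletion points is essentially stable and contributes nothing new to $\Phi$'s downward movement); and second, handling the interaction with depletion point \emph{removal} updates, which shrink the index set of $\Phi$ — here I would charge the potential lost by a removal against the SLR structure, or redefine $\Phi$ to sum over (job, deadline) pairs rather than (job, depletion interval) pairs so that it is insensitive to which depletion points currently exist. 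I would then invoke Definition~\ref{def:slr}\ref{def:slr:c} (speed levels are non-decreasing along depletion intervals and all jobs jump together) to guarantee the increment is globally consistent, so that the single scalar $\Phi$ genuinely tracks all the simultaneous crossings that one cut update triggers.
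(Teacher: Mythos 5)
Your overall strategy (monotone potential tied to speed-level jumps, bounded by $kn^2$) is indeed the paper's strategy, but the specific potential you propose in step (2) does not work, and the fix the paper uses is precisely the ingredient you are missing.

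Your candidate potential is the \emph{unweighted} sum $\Phi = \sum_\ell a_\ell$ (equivalently $\sum_\ell \SL(j_\ell^{\max},\ell)$). The problem is that a cut fix is not a one-sided bump: by the cut-fixing procedure (Lemma~\ref{lem:fix_normal_cuts}), fixing a cut with left boundary depletion point $\tau_l$ and right boundary $\tau_r$ \emph{increases} the speed-level jump $\delta_r$ by one and simultaneously \emph{decreases} $\delta_l$ by one (the speed-level differences drop by one at the left border of the blocked region $U$ and rise by one at the right border). So for your unweighted $\Phi$ the net change at a cut event is $+1-1=0$ whenever a left boundary exists, and the potential fails to make progress. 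Step (2) of your outline (``a cut update increments $\Phi$ by at least one, since it moves at least one job up'') overlooks this compensating decrease.

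The paper repairs exactly this by weighting: order jobs by deadline, set $\delta_{j,\mathcal{S}}$ to be the jump $\delta_i$ at the depletion point coinciding with $d_j$ (and $0$ otherwise), and take $\Phi(\mathcal{S}) = \sum_{j=1}^n j\cdot\delta_{j,\mathcal{S}}$. Because $\tau_r$ lies to the right of $\tau_l$, the job index carrying weight $r$ strictly exceeds the one carrying weight $l$, so each cut fix increases $\Phi$ by at least $r-l\ge 1$. The bound $0\le\Phi\le kn^2$ is then immediate from $\delta_i\le k$, weight $\le n$, and at most $n$ nonzero summands. This also short-circuits most of your step (1) and step (3) worries: speed levels (hence the $\delta_i$) change \emph{only} at cut fixes, a freshly created depletion point carries $\delta=0$, and a depletion point is merged away only when its $\delta$ has reached $0$, so $\Phi$ is flat between cut events and no charging scheme for depletion-point churn is needed. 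Your instinct to redefine $\Phi$ over (job, deadline) pairs instead of (job, depletion-interval) pairs is in the right direction --- that is essentially what the paper's $\delta_{j,\mathcal{S}}$ encoding does --- but without the index weight it still cancels at cuts.
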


\begin{proof}
Observe that Property 1 (c) of Theorem~\ref{thm:optimality} can be restated as follows: for each non-degenerate depletion point $i$ (ordered from left to right) there exists a number $\delta_i \in [k] \cup \{0\}$ such that for any job $j$ and any depletion intervals $\ell_1$ and $\ell_2$, $\ell_1 < \ell_2$, in which $j$ is active, we have that
\[
\mathcal{L}(j,\ell_2) -\mathcal{L}(j,\ell_2) = \sum_{i=\ell_1}^{\ell_2-1} \delta_i.
\]
The algorithm assigns speed levels to jobs in intervals in which they are not active such that this definition is satisfied whenever $j$ is alive.

For any intermediate schedule $\mathcal{S}$ produced by the algorithm, order jobs by increasing deadline, and let $\delta_{j,\mathcal{S}} = \delta_i$ for the job $j$ with smallest index whose deadline is the same as the $i$th depletion point, and $\delta_{j,\mathcal{S}} = 0$ otherwise.
Consider the following potential function:
\[
\Phi(\mathcal{S}) = \sum_{j=1}^n j \cdot \delta_{j,\mathcal{S}}
\]
Recall that speed levels are only modified when a cut in the Transfer Graph is fixed, so this is the only time that the $\delta_i$ change.
For a fixed cut, let $l$ be the index of the left depletion point defining the cut (if it exists), and $r$ be the index of the right depletion point defining the cut (which always exists).
It is straightforward to observe that the algorithm's modification of speed levels to fix a cut increases $\delta_r$ by $1$ and, if $l$ exists, decreases $\delta_l$ by $1$.
Thus, since there is at most one depletion point per time, $\Phi$ increases by at least one every time a cut in the Transfer Graph is fixed (and this is the only event that changes $\Phi$).
It is also clear that $0 \leq \Phi \leq k n^2$, since for any $i$, $\delta_i \leq k$, and the Lemma follows.
\qed
\end{proof}

\paragraph{Bounding Events via Isolation}
Before we can bound the number of events that occur between cut events, we need several auxiliary results.
These form the most technical result of the paper, but turn out to provide strong tools, such that bounding the actual events later on will be relatively straightforward.
We first provide some results about our choice of \es.
The most important part will be when we introduce \emph{isolated areas}.
Intuitively, we will show that during the executing of our algorithm, some time intervals will become isolated in the sense that no workload enters them and any workload that leaves them can do so only in a very restricted way.
This turns out a strong monotonic property that helps to bound the number of events.

We say the source (resp., destination) is \emph{outside} an interval $[t_1,t_2]$ if the critical interval the source job (resp., destination job) is running in does not intersect $[t_1,t_2]$, and it is \emph{inside} otherwise.
\begin{observation}
\label{obs:pathOfTransfers}
When the algorithm chooses an \e $T$ with source $s_T$ and destination $d_T$, there is a path of \es from the source of that \e to $I_L$.
If $T$ is a left \e, no previously chosen \e has source or destination within $(d_T, s_T]$.
If $T$ is a right \e, no previously chosen \e with source or destination within $[s_T,d_T)$ is a left \e.
\end{observation}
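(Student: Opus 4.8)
The plan is to split the statement into three parts, each one following the high-level intuition already laid out in Section~\ref{sec:approach_overview} and in the ``Intricacies of the Algorithm'' subsection: (i) whenever the algorithm chooses an \e $T$, there is a path of \es from $s_T$ to the rightmost depletion interval $I_L$; (ii) if $T$ is a left \e, then no previously chosen \e has source or destination in $(d_T,s_T]$; (iii) if $T$ is a right \e, then no previously chosen left \e has source or destination in $[s_T,d_T)$. For part~(i), I would appeal directly to the structure of \textsc{PathFinding} (Listing~\ref{pathfinding}): the set $S$ built by \textsc{PathFinding} always contains $v_L$, and a vertex $v_1$ is added only because there is an edge $(v_1,v_2)$ with $v_2\in S$. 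By an easy induction on the order in which vertices are added to $S$, every vertex in $S$ has a directed path (through chosen edges) to $v_L$; since the algorithm only moves work along \es whose source interval is in $S$, the source $s_T$ of any chosen \e lies in $S$ and hence has such a path. This part should be essentially immediate.

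For parts~(ii) and~(iii), the key observation is that these are exactly the invariants enforced by the pruning rules described in the ``Valid \es'' paragraph — specifically, the rules forbidding \es ``that take a right edge that is completely contained within a previously taken left edge'' and \es ``that take work of a job in the opposite direction of previously taken \es,'' together with the hierarchy invariants (Edge Removal Invariant, Speed Level Invariant, Depletion Point Removal Invariant) which guarantee that between two cut events, once work has been moved left out of a region that region stays ``isolated.'' The argument would be by induction on the sequence of \es chosen by the algorithm since the last cut event. When the algorithm selects $T$, it does so because $T$ is the highest-priority available \e with its given source/destination; the priority relation (Definition~\ref{def:eps:domination}) prefers right transfers over left ones and, among left transfers, longer ones. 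I would argue that if a previously chosen \e $T'$ had a source or destination strictly inside $(d_T,s_T]$ while $T$ is a left \e, then either $T'$ was a left \e moving work into that region (contradicting the Edge Removal Invariant / the isolation property, since a left \e from $s_T$ should not be available after work was moved into $(d_T,s_T]$), or $T'$ was a right \e whose endpoint falls inside the left edge $T$, which is precisely the configuration forbidden by the third pruning rule after $T$ becomes chosen — so I would instead have to show $T$ itself would have been pruned, a contradiction. Part~(iii) is the mirror image: a right \e $T$ is never chosen if a left \e $T'$ with an endpoint in $[s_T,d_T)$ was chosen earlier, because then $T$'s right edge would be contained in $T'$'s left edge, again forbidden by the pruning rule and the ``opposite direction'' rule.

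The main obstacle I expect is making the inductive bookkeeping precise — in particular, correctly tracking what ``previously chosen'' means relative to the current batch of transfers (i.e., since the most recent cut update, since these invariants are stated to hold only between cut events), and verifying that the pruning rules plus the priority relation genuinely rule out every bad configuration rather than just the obvious ones. A subtle point is that \es can be \emph{composed} (the combined red--green \e in Figure~\ref{fig:more_complex_epsilon_transfer}), so ``source or destination within an interval'' must be interpreted carefully at the level of the elementary critical-interval endpoints, using the ``inside/outside'' terminology fixed just before the observation; I would want to phrase everything in terms of whether the critical interval of the source/destination job intersects $[t_1,t_2]$. Once the pruning rules and the laminarity of chosen \es (also guaranteed by pruning \es that violate weak EDF) are in hand, each of the three sub-claims reduces to a short case analysis, and I would present it as such rather than grinding through the linear-algebra of the rate computations, which is irrelevant here.
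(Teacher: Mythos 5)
The paper offers no proof of Observation~\ref{obs:pathOfTransfers}: it is stated bare, evidently intended to follow from the \textsc{PathFinding} priority rule, so there is no ``paper proof'' to compare you against. I will therefore assess your plan on its own merits.

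Your treatment of part~(i) is correct and is surely what the authors had in mind: $S$ is seeded with $v_L$, a vertex is added only via an edge into $S$, and a trivial induction on the order of insertion gives a directed path from each vertex of $S$ to $v_L$.

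Your plan for parts~(ii) and~(iii) has two concrete problems. First, it is circular: you invoke the Edge Removal Invariant and the ``isolation persists'' machinery, but those are built out of Lemmas~\ref{lem:isolatedProperty}--\ref{lem:edgeRemovalEvents}, and Lemma~\ref{lem:isolatedProperty} itself invokes Observation~\ref{obs:pathOfTransfers} in its Case~2. So whatever argument you supply here cannot lean on the isolated-area lemmas or the hierarchy invariants derived from them. Second, you conflate two distinct priority notions: Definition~\ref{def:eps:domination} orders two \es \emph{with the same source and destination} (it governs which \e the graph records between a fixed pair of depletion intervals), whereas the priority used by \textsc{PathFinding} (``shortest right edge, else longest left edge'') orders the candidate \emph{edges} $(v_1,v_2)$ with $v_2\in S$ during tree construction and compares edges with different endpoints. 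The observation's claims~(ii) and~(iii) must follow from the \emph{second} priority together with Lemma~\ref{lem:composition} and weak EDF (which do not depend on the observation), not from Definition~\ref{def:eps:domination} or the later invariants. Relatedly, ``previously chosen'' in the observation should be read as \emph{earlier in the current \textsc{PathFinding} call}, not as ``since the last cut event'' as your induction proposes; the observation is a statement about one batch of selected edges, and widening the scope only makes the claim harder and drags in the circular machinery.

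The right shape of the argument is: whenever a left edge $(s_T,d_T)$ is selected, no right edge is currently available, and $(s_T,d_T)$ is the \emph{longest} available left edge; one then argues, via Lemma~\ref{lem:composition} (crossing \es can be composed) and the laminarity supplied by weak EDF, that if some $w\in(d_T,s_T)$ were already in $S$, a shorter left edge or a right edge from $s_T$ into $S$ would exist, contradicting the choice of $T$. Analogously for part~(iii). Your proposal gestures at this but does not carry it out, and the specific appeals you make would not survive the circularity check.
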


\begin{lemma}
\label{lem:transferOrder}
If two critical intervals $C_1$ and $C_2$ in the same depletion interval, with $C_1$ to the left of $C_2$, both have active \es to the same destination depletion interval $\ell$, then there is an active \e from $C_2$ with destination $\ell$ that is higher priority than all active \es from $C_1$ with destination $\ell$.
\end{lemma}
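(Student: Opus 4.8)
The plan is to argue directly from the priority relation (Definition~\ref{def:eps:domination}) together with the structural constraints imposed on \es by weak EDF and the pruning rules described in the "Valid \es" paragraph. Fix the depletion interval, call it $m$, and let $C_1$ lie to the left of $C_2$ inside $I_m$. Suppose $T_1 = (\ell^1_a, j^1_a)_{a=0}^{s_1}$ is any active \e from $C_1$ to $\ell$ and $T_2 = (\ell^2_a, j^2_a)_{a=0}^{s_2}$ is an active \e from $C_2$ to $\ell$. First I would dispose of the case $\ell \neq m$: both transfers leave $I_m$, so $\ell$ is either entirely to the right of $I_m$ or entirely to the left. If $\ell$ is to the right, both are right \es; if $\ell$ is to the left, both are left \es. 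I handle these two cases separately, and in each case I want to exhibit a specific active \e from $C_2$ beating every active \e from $C_1$.

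For the \textbf{right case} ($\ell$ to the right of $I_m$), the natural candidate from $C_2$ is the \emph{shortest} right \e from $C_2$ to $\ell$, which is the one \textsc{PathFinding} would select. Since $C_2$ is to the right of $C_1$ within the same depletion interval $m$, and since weak EDF forces the destination jobs and the intermediate structure to be laminar, a right \e starting at $C_2$ "travels less far to the left" before it starts moving right, so comparing it to any right \e from $C_1$ the first place they differ is the source depletion interval itself or an early index, and the $C_2$-transfer has the strictly larger (more rightward) speed-level index there — this is exactly the situation covered by clause~(b) of Definition~\ref{def:eps:domination} ($\ell^1_{a_1^*} < \ell^2_{a_1^*}$ with the appropriate side condition). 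I would make this precise by noting $a_1^* = 0$ (or the first index at which the paths separate) and checking that $C_2$ being to the right of $C_1$ translates into the required inequality on the $\ell$-indices; the pruning rule forbidding a right edge contained in a previously taken left edge keeps the comparison clean.

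For the \textbf{left case} ($\ell$ to the left of $I_m$), the candidate from $C_2$ should be the \emph{longest} left \e from $C_2$ to $\ell$. Here the key point is that, because $C_2$ is to the right of $C_1$ inside $I_m$, a left \e emanating from $C_2$ spans strictly more of the time horizon than one from $C_1$ with the same destination $\ell$; so at the first index $a_1^*$ where the two paths disagree we have $\ell^2_{a_1^*} < \ell^1_{a_1^*-1} < \ell^1_{a_1^*}$ (the $C_1$-path has already "turned around" while the $C_2$-path is still reaching further left), which is clause~(a), or we land in the second subcase of clause~(b). I would verify that laminarity (weak EDF) guarantees the two left \es nest rather than cross, so that "the $C_2$ transfer is longer" is well-defined and yields the stated index inequality. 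Finally, since these comparisons hold against \emph{every} active \e from $C_1$ to $\ell$, transitivity of the priority order (which I would note follows from its definition via the lexicographic-type comparison on the first point of difference) gives that the chosen $C_2$-transfer is higher priority than all of them.

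The \textbf{main obstacle} I anticipate is the bookkeeping in translating the geometric statement "$C_2$ is to the right of $C_1$ within $I_m$" into the precise inequalities on depletion-interval indices $\ell^1_{a_1^*-1}, \ell^1_{a_1^*}, \ell^2_{a_1^*}$ demanded by the three clauses of Definition~\ref{def:eps:domination}, especially making sure the side conditions in clause~(b) are met in the right subcase and that the pruning of "bad" \es (right edges inside left edges, transfers reusing a job) is actually invoked where needed. A secondary subtlety is confirming that $a_1^*$ exists in all relevant subcases (so that clause~(c), the tie-break on deadlines, is not needed); if the two paths were to agree on all depletion intervals and differ only in jobs, I would argue that weak EDF and the distinctness of $C_1, C_2$ force a disagreement at the source interval's critical-interval level, reducing again to clauses~(a)–(b).
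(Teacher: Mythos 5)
The paper itself offers essentially no proof here---the entire argument is ``This follows easily from the priority definition''---so your proposal is really the first attempt at a detailed argument, and its skeleton (split on whether $\ell$ lies left or right of $I_m$, then invoke clauses of Definition~\ref{def:eps:domination}) is the natural one. But there are several gaps and at least one concrete error.

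First, the slip about $a_1^*$. You write ``noting $a_1^* = 0$ (or the first index at which the paths separate)'', but $a_1^* = 0$ can never happen: the priority relation is only defined between \es with $\ell^1_0 = \ell^2_0$, and both \es in the lemma start from the same depletion interval $m$, so $a_1^* \ge 1$ whenever it exists. This matters because clauses~(a) and~(b) compare $\ell^1_{a_1^*-1}$ with $\ell^1_{a_1^*}$, and the comparison really starts with the \emph{first edge}, not the source depletion interval.

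Second, and more seriously, your handling of the case where $a_1^*$ does not exist is incorrect. You write that if the depletion-interval sequences agree and only the jobs differ, you would ``force a disagreement at the source interval's critical-interval level, reducing again to clauses~(a)--(b).'' That reduction is impossible: clauses~(a) and~(b) are literally conditions on the $\ell$-indices at $a_1^*$, and if those sequences are identical there is nothing to compare---the only applicable clause is~(c), the deadline tie-break. So to cover this case you must actually argue that (with $T_1$ = the $C_2$-transfer) $j^1_{a_2^*}$ has an earlier deadline than $j^2_{a_2^*}$, and your plan never does this. It is not a priori obvious: $C_2$ is to the \emph{right} of $C_1$, and weak EDF tends to put later-deadline jobs to the right, so whether the chosen $C_2$-job has the earlier deadline needs an argument, not an assertion.

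Third, the structural translations that carry the whole proof (``a right \e starting at $C_2$ travels less far to the left before it starts moving right'', ``a left \e emanating from $C_2$ spans strictly more'') are stated but never derived. These are exactly the index inequalities $\ell^1_{a_1^*} < \ell^2_{a_1^*}$, $\ell^2_{a_1^*} < \ell^1_{a_1^*-1} < \ell^1_{a_1^*}$, etc., that clauses~(a)--(b) demand; you flag them as the ``main obstacle'' but the proof plan does not give a mechanism for overcoming it. In particular, it is not obvious that the shortest right \e out of $m$ originates in $C_2$ rather than $C_1$---the first hop is governed by the job's deadline, which need not be monotone in the position of the critical interval. You would need to invoke weak EDF / laminarity more carefully (probably via the pruning rules in Appendix~\ref{sec:runtime_analysis}) to turn the geometric picture into the index inequalities, and this is the content the lemma is really asking for.

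In short: right framing, but the step from ``$C_2$ is to the right of $C_1$'' to the explicit clause of Definition~\ref{def:eps:domination} is missing, and the tie-break case is mishandled in a way that cannot be patched by the reduction you propose.
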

\begin{proof}
This follows easily from the priority definition of \es in Definition~\ref{def:eps}.
\qed
\end{proof}

\begin{definition}[\e span and crossing]
For an \e $T$, let $l$ be the time that the leftmost of source and destination critical intervals of $T$ begins, and $r$ be the time that the rightmost of source and destination critical intervals of $T$ ends.
Then $[l,r]$ is the \emph{span} of $T$.
Two \es $T_1$ and $T_2$ are \emph{crossing} if their source and destination critical intervals are all unique, and the intersection of their spans is nonempty.
\end{definition}

\begin{lemma}
\label{lem:composition}
Let $T_1$ and $T_2$ be two \es that cross.
Then there is an path of \es $T_3$ with source that of $T_1$ and destination that of $T_2$, and every intermediate destination and source is active.
Additionally, if the source of $T_1$ is decreasable, and the destination of $T_2$ is increasable, then $T_3$ is active.
\end{lemma}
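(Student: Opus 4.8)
The plan is to prove Lemma~\ref{lem:composition} by an explicit construction that "splices" the two crossing \es $T_1$ and $T_2$ along a critical interval that both of their spans contain. Since $T_1$ and $T_2$ cross, their spans intersect in a nonempty time interval; within that intersection there must be a depletion interval $I_m$ that is touched by (the path underlying) both \es — more precisely, as we walk $T_1$ from its source toward $I_L$ and $T_2$ from its source toward $I_L$, both paths must pass through some common depletion interval, since both ultimately reach $I_L$ (this is where we invoke Observation~\ref{obs:pathOfTransfers}: whenever the algorithm chose $T_1$ or $T_2$, there was a path of \es from its source to $I_L$). The first step is therefore to pin down a "meeting point": either a common depletion interval, or — handling the more delicate case — a common critical interval reached by both. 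I would then form $T_3$ by following $T_1$ from its source up to that meeting point and then following $T_2$ onward from the meeting point to $T_2$'s destination.

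**The main work** is to verify that this concatenation is genuinely a \emph{path of \es} (i.e., each step along the way is an individual \e and the intermediate sources/destinations are active) rather than just a sequence of depletion intervals. Here I would use Lemma~\ref{lem:transferOrder} to argue that, at the meeting depletion interval, the relevant critical intervals line up so that there is an active \e leaving the meeting point toward $T_2$'s destination that dovetails with the active \e arriving from $T_1$'s side; the left-to-right ordering of critical intervals guaranteed by that lemma is exactly what ensures the "handoff" preserves niceness and perfectness. For the intermediate destinations and sources being active, I would invoke the definition of active \e (Definition~\ref{def:eps}) segment by segment: each sub-\e of $T_1$ before the meeting point and each sub-\e of $T_2$ after it was already active (as a building block of an \e the algorithm was willing to consider), so activeness is inherited locally, and the only new junction is the meeting point, where Lemma~\ref{lem:transferOrder} supplies the needed active transfer.

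**For the "Additionally" clause**, once we know the source of $T_1$ is decreasable and the destination of $T_2$ is increasable, every source along $T_3$ is decreasable and every destination increasable: the intermediate sources/destinations on $T_3$ coincide with those on $T_1$ and $T_2$ (which are active, hence those intermediate nodes already satisfy the decreasable/increasable conditions implicitly through activeness), the global source is decreasable by hypothesis, and the global destination is increasable by hypothesis. Feeding these through Definition~\ref{def:eps}'s notion of \emph{active} \e then yields that $T_3$ itself is active. I would close by noting that this composition also respects the pruning rules from the "Valid \es" discussion (no job or critical interval is used twice) since $T_1$ and $T_2$ have all-distinct source and destination critical intervals by the crossing hypothesis, so the spliced path need not revisit anything — and if in degenerate cases a critical interval would be revisited, the standard minimality argument (replace by a sub-\e with strictly fewer edges, as used for pruning) trims it.

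**The hard part will be** the meeting-point argument in the case where the two spans overlap but the underlying paths of $T_1$ and $T_2$ do not obviously share a depletion interval inside the overlap — one has to argue carefully, using laminarity of the algorithm's chosen \es (from the weak-EDF pruning) together with Observation~\ref{obs:pathOfTransfers}, that the path from $T_1$'s source to $I_L$ and the path from $T_2$'s source to $I_L$ must first merge at or before the right end of the overlapping span, and that this merge point can be taken to host the required active handoff. Getting the left/right \e case distinction from Observation~\ref{obs:pathOfTransfers} right (a left \e forbids prior sources/destinations strictly inside $(d_T,s_T]$; a right \e forbids prior \emph{left} \es inside $[s_T,d_T)$) is what makes this step technical rather than routine.
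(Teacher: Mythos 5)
Your meeting-point strategy does not match the problem's actual structure, and it defers exactly the part of the argument that the paper's proof has to do explicitly. First, Observation~\ref{obs:pathOfTransfers} only applies to \es that the algorithm has already chosen, whereas Lemma~\ref{lem:composition} is stated for arbitrary crossing \es and is invoked in the analysis (e.g.\ inside Lemmas~\ref{lem:isolatedProperty} and~\ref{lem:isolated}) on \es the algorithm may never take; so you cannot assume there are paths of \es from the sources of $T_1$ and $T_2$ to $I_L$. Second, overlapping spans do not give you a shared depletion interval on the two paths: a single edge of an \e can jump over depletion intervals (the job only needs to be alive in both endpoints), so $T_1$ might visit depletion intervals $\{1,4\}$ while $T_2$ visits $\{3,6\}$, with overlapping spans and no shared vertex. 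And even when the two paths to $I_L$ do eventually merge, the merge vertex can lie far to the right of $T_2$'s destination and is therefore useless for building a path that must \emph{end} at $T_2$'s destination. The paper never looks for a shared depletion interval. It uses the span overlap to locate an \emph{interleaving} index $a_1$ of $T_1$ whose critical interval lands between two consecutive critical intervals $\ell_{a_2}$ and $\ell_{a_2+1}$ of $T_2$, and then --- precisely in the case you set aside as ``the hard part'' --- explicitly constructs a bridging path $T'$ of \es, built from jobs with strictly earlier deadlines than $j_{a_1}$, to carry work from $T_1$'s side across to $T_2$'s side.

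That bridge $T'$ is the actual content of the lemma, and Lemma~\ref{lem:transferOrder} cannot stand in for it: it only compares the priority of two \emph{already-given} active \es from critical intervals in the same depletion interval to the same destination, and produces nothing new when the critical intervals of $T_1$ and $T_2$ fail to coincide at the interleaving point. Your activeness argument is correspondingly underpowered: the paper reduces $T_3$'s activeness to the decreasability of $T_1$'s source and increasability of $T_2$'s destination because the bridging jobs in $T'$ and the intermediate hops keep their speeds fixed, which is a property of the explicit construction rather than something inherited from activeness of sub-\es of $T_1$ and $T_2$.
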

\begin{proof}
We split the proof into cases, based on the directions of $T_1$ and $T_2$, and their sources and destinations.
Let $a_f$ be the final index of depletion intervals of $T_2$.
We illustrate only one case, as the remaining cases use essentially the same arguments.

\textbf{\boldmath Case 1: $T_1$ is a right \e and $T_2$ is a right \e, and the source of $T_1$ is left of the source of $T_2$.}
Let $a_1$ be the lowest index such that $\ell_{a_1}$ of $T_1$ is or is to the right of the critical interval from $\ell_{a_2}$ of $T_2$, and to the left of the critical interval of $\ell_{a_2+1}$ of $T_2$, which must exist since $T_1$ and $T_2$ are crossing.
Let $e$ be the edge from $\ell_{a_2}$ to $\ell_{a_2+1}$ in $T_2$.
If the $T_1$ critical interval in $\ell_{a_1}$ is the $T_2$ critical interval in $\ell_{a_2}$, then it is clear that we can create $T_3 = (\ell_a,j_1)_{a=0}^{a_1-1} \cup  (\ell_a,j_a)_{a=a_2}^{a_f}$.
Otherwise, it must be that $\ell_{a_1} \neq \ell_{a_2}$, since they would have to run in the same critical interval otherwise, as the deadline of $j_{a_2}$ cannot be before $\ell_{a_2}+1$, and the release time of $j_{a_1}$ cannot be after $\ell_{a_2}$.
Note also that $j_{a_1}$ must have a later deadline than $j_{a_2}$, since it can move to $\ell_{a_1}$ without violating weak EDF, and similarly, $j_{a_2}$ completes in or before $\ell_{a_1}$.
Let $T' = (\tilde \ell_a,\tilde j_a)_{a=0}^k$ be the longest path of \es such that each $j_a$ has earlier deadline than $j_{a_1}$, and each $j_a$ can move work into the critical interval where $j_{a}-1$ completes.
Either such a $T'$ exists, or $j_{a_1}$ can be run in $\ell_{a_2+1}$ (in which case we can take $T'$ to be empty).
Thus we obtain $T_3 = (\ell_a,j_1)_{a=0}^{a_1-1} \cup (\ell_{a_1-1},j_{a_1-1}) \cup T'  (\ell_a,j_a)_{a=a_2}^{a_f}$ (which is possibly non-minimal, but can be reduced in size).
\qed
\end{proof}

\begin{lemma}
\label{lem:crossing}
If at any event, the algorithm chooses \es $T_1$ and $T_2$, then $T_1$ and $T_2$ are not crossing.
\end{lemma}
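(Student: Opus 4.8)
The plan is to argue by contradiction. Suppose that at some event \textsc{PathFinding} adds to its tree two crossing \es $T_1$ and $T_2$, and say (without loss of generality) that $T_1$ is added before $T_2$. Since every non-root depletion interval enters the set $S$ exactly once — namely via the unique tree edge drawn to it — we have, at the moment $T_1$ is chosen, that $d_{T_1}\in S$ while $s_{T_1}\notin S$ and $s_{T_2}\notin S$ (the latter because $s_{T_2}$ enters $S$ only when $T_2$ is chosen). The goal is to exhibit an \e that was a legal candidate for \textsc{PathFinding} at that moment — destination already in $S$, source not in $S$ — and that \textsc{PathFinding} would have preferred to $T_1$ under its rule (shortest right \e, otherwise longest left \e), contradicting the choice of $T_1$.

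To build this competing \e I would invoke Lemma~\ref{lem:composition} with the roles of $T_1$ and $T_2$ exchanged: since $T_1$ and $T_2$ cross, there is a path of \es from $s_{T_2}$ to $d_{T_1}$ whose intermediate sources and destinations are all active; and because the algorithm only ever moves work along chosen \es, the source of $T_1$ and of $T_2$ is decreasable and the destination of $T_1$ and of $T_2$ is increasable, so the second part of Lemma~\ref{lem:composition} applies and this path is active, hence decomposes into edges of $G_D$. Now one does the case analysis on the directions of $T_1$ and $T_2$ (the four combinations right/right, left/left, right/left, left/right, each further split by the relative order of sources and destinations — the same case structure as in the proof of Lemma~\ref{lem:composition}). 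The crossing hypothesis forces the span of the composed transfer to lie strictly inside that of $T_1$ (or of $T_2$): for instance, when $T_1$ and $T_2$ are both right \es with $s_{T_1}$ left of $s_{T_2}$, one has $s_{T_1}<s_{T_2}\le d_{T_1}$, so the transfer from $s_{T_2}$ to $d_{T_1}$ is again a right \e but strictly shorter than $T_1$; reducing it to a minimal \e (using the pruning rules for valid \es: no repeated job or critical interval, weak EDF, no right edge contained in a left edge) yields a genuine edge of $G_D$ that is a strictly shorter right candidate than $T_1$, a contradiction. The left/left case is the mirror image, using the longest-left-\e tie-break together with Lemma~\ref{lem:transferOrder} to identify the preferred left \e and Observation~\ref{obs:pathOfTransfers} to certify that no earlier-chosen \e has an endpoint strictly inside the interval being crossed, so the composed transfer is in fact available. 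The two mixed cases likewise collapse a right/left crossing into a shorter right \e and a left/right crossing into a longer left \e.

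The step I expect to be the main obstacle is precisely the reduction from the \emph{path of \es} produced by Lemma~\ref{lem:composition} to a \emph{single candidate edge} with the endpoint on which the contradiction hinges lying in $S$, and comparing its length against $T_1$ or $T_2$ under \textsc{PathFinding}'s global shortest-right/longest-left rule rather than under the same-source/same-destination priority of Definition~\ref{def:eps:domination}. Getting the span-based notion of crossing to translate cleanly into the index-based length comparison, and checking that in every one of the cases the relevant endpoint of the composed \e really has already entered $S$ at the moment of the contradiction, is where the bulk of the careful bookkeeping lies.
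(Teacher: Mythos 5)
Your proposal takes essentially the same route as the paper: the paper's own proof is a one-line appeal to Lemma~\ref{lem:composition} and weak EDF, and your argument elaborates exactly that into the expected by-contradiction argument (compose $T_1$ and $T_2$ via Lemma~\ref{lem:composition}, observe that the resulting transfer would be a higher-priority candidate under \textsc{PathFinding}'s shortest-right/longest-left rule, invoke the weak-EDF pruning to ensure the composed transfer is a legitimate single edge). The obstacles you flag at the end are genuine points that need the kind of bookkeeping you describe, but the paper's terse proof elides them as well, so your sketch is a faithful and in fact more explicit rendering of the intended argument.
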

\begin{proof}
This follows as an easy consequence of both Lemma~\ref{lem:composition} as well as the definition of weak EDF.
\qed
\end{proof}

We now introduce notation and a definition that will be helpful in the coming proofs.
Fix any cut event, and let $\Gamma = \{1,\dots,\tau\}$ denote the events, in order, that the algorithm stops for between that event and the next cut event.
\begin{definition}[Isolated Area]
Let $\gamma \in \Gamma$ and $t$ be some time in the schedule, and $\ell_t$ be the depletion interval containing $t$.
Then the interval $[t,t']$ is an \emph{isolated area}, denoted by $\nu_t(\gamma)$
if it is possible to assign speeds to jobs in $\ell_t$ such that they obey their releases and deadlines, are run in earliest deadline first order, and $t'$ is a depletion point after $t$ such that there is no active \e with source inside $[t,t']$, and destination outside $[t,t']$, without crossing $t$ in $\ell_t$ (i.e., any \e with destination in $\ell_t$ can place work to the right of $t$ in $\ell_t$, and any \e with destination in a depletion interval either before $\ell_t$ or after $t'$ can be split into two active halves: one with destination in $\ell_t$ to the right of $t$, and the other with source $\ell_t$ to the left of $t$).
Additionally, the isolated interval is \emph{maximal} if $t'$ is the latest deplation point satisfying this definition.
$\ell_t$ is referred to as the \emph{exit} of $\nu_t(\gamma)$.
\end{definition}

\begin{observation}
\label{obs:isolatedExit}
For any isolated area, the \e $T$ chosen by the algorithm whose source is exit of the isolated area must have destination outside the isolated area.
Additionally, the path of \es to $I_L$ from any depletion interval in the isolated area must include $T$.
\end{observation}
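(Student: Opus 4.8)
The plan is to unfold the definition of an isolated area $\nu_t(\gamma)$ and argue directly from the structure that the algorithm's \textsc{PathFinding} routine maintains. First I would recall that $\nu_t(\gamma) = [t,t']$ is an isolated area with exit $\ell_t$, meaning there is no active \e with source inside $[t,t']$ and destination outside $[t,t']$ \emph{except} possibly via $\ell_t$ crossing $t$: that is, any outgoing transfer must route through the exit $\ell_t$ and place work to the right of $t$ in $\ell_t$ (or be split into an active half landing in $\ell_t$ to the right of $t$ and an active half sourcing from $\ell_t$ to the left of $t$). Since the algorithm needs, by Observation~\ref{obs:pathOfTransfers}, a path of \es from every depletion interval to $I_L$, and since $I_L$ is the rightmost depletion interval and hence cannot lie strictly inside the finite interval $[t,t']$ (recall $\tau_{L+1}=\infty$), the work sourced from any depletion interval inside the isolated area must eventually exit $[t,t']$. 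By the isolation property, the only way out is through $\ell_t$ via an \e whose source is the exit $\ell_t$ and whose destination lies outside $[t,t']$. This gives the first claim: the \e $T$ chosen by the algorithm with source the exit must have destination outside the isolated area.

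For the second claim — that the path of \es to $I_L$ from any depletion interval in the isolated area must include $T$ — I would argue that $T$ acts as the unique ``bridge'' out of the isolated area. Consider any depletion interval $\ell$ inside $[t,t']$ and its path $P$ of chosen \es to $I_L$. As argued above, $P$ must contain an edge leaving $[t,t']$; by the isolation property together with the laminarity/non-crossing structure from Lemma~\ref{lem:crossing}, any such exiting edge has source $\ell_t$ and destination outside $[t,t']$. It then suffices to show the algorithm chooses exactly one such exiting \e out of $\ell_t$, namely $T$: this follows because \textsc{PathFinding} grows the set $S$ (rooted at $v_L$) by adding, for each vertex, the single highest-priority edge connecting it back into $S$, so $\ell_t$ contributes at most one chosen \e to the collection, and that \e is $T$. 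Hence every path from an interior depletion interval to $I_L$ must pass through $\ell_t$ and then use $T$.

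The main obstacle I expect is the careful handling of the ``split into two active halves'' clause in the definition of isolated area: an \e with destination outside $[t,t']$ on the \emph{far} side (after $t'$) or on the \emph{near} side (before $\ell_t$) is only excluded up to being decomposable into an active transfer into $\ell_t$ (right of $t$) composed with an active transfer out of $\ell_t$ (left of $t$). I would need to argue that the algorithm, facing such a composite possibility, would have actually chosen the corresponding composed \es — and in particular the half sourcing from $\ell_t$ to the left of $t$ — so that the net flow still leaves $[t,t']$ through a genuine chosen \e with source the exit. Pinning down that the chosen collection of transfer paths realizes this decomposition (rather than some pathological alternative) is where Lemma~\ref{lem:composition}, the non-crossing guarantee of Lemma~\ref{lem:crossing}, and the pruning of ``valid \es'' (in particular the prohibition on right edges contained in previously taken left edges, and on taking a job's work in opposite directions) all come together; the bookkeeping there is the delicate part, while the high-level ``only way out is through the exit'' intuition is straightforward.
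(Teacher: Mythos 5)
The paper states this as an Observation and offers no proof, so the only question is whether your argument closes on its own terms. The high-level picture you draw — flow from inside $[t,t']$ must reach $I_L$, which lies outside; the isolation property says the only way out is through $\ell_t$; hence the chosen \e at $\ell_t$ goes out and every interior path funnels through it — is the right intuition. But the load-bearing step, that \emph{any chosen exiting edge has source $\ell_t$}, is asserted rather than derived, and the definition of isolated area does not give it to you for free. That definition only says an active \e with source inside $[t,t']$ and destination outside must \emph{cross $t$ in $\ell_t$}, meaning it is \emph{splittable} at $\ell_t$ into two active halves. It does not say \textsc{PathFinding} actually uses the split version.

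Concretely, nothing you write rules out the following: \textsc{PathFinding} adds some $\ell''$ strictly inside $[t,t']$ via a single chosen left \e $\ell''\to\ell'''$ with $\ell'''$ to the left of $\ell_t$ and $\ell_t$ merely an intermediate hop of that \e. Since the edge-selection rule prefers \emph{longer} left-going edges, $\ell''\to\ell'''$ would a priori beat the shorter $\ell_t\to\ell'''$; $\ell_t$ could then enter $S$ later via a right edge into $\ell''$, so that $T$'s destination lies \emph{inside} $[t,t']$ and the interior path bypasses $T$ — falsifying both claims. Excluding this scenario is not ``bookkeeping'': it \emph{is} the content of the Observation. One needs to actually combine the priority relation (Definition~\ref{def:eps:domination}), the pruning rules (no right edge nested under a previously taken left edge, no job moving in both directions), Observation~\ref{obs:pathOfTransfers}, Lemma~\ref{lem:composition}, and the laminarity guarantee of Lemma~\ref{lem:crossing} to show that $\ell_t$'s exit edge is selected before any strictly interior vertex of $[t,t']$ is reached. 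Your final paragraph lists exactly these ingredients, but the proposal stops short of assembling them, so the proof is incomplete precisely where it matters.
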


\begin{lemma}
\label{lem:isolatedProperty}
For any maximal isolated area $\nu_t(\gamma)$, no \e taken by the algorithm has source outside $\nu_t(\gamma)$ and destination inside $\nu_t(\gamma)$.
\end{lemma}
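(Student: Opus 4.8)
The plan is to argue by contradiction: suppose the algorithm at some event $\gamma' \in \Gamma$ (after the fixed cut event but no later than the next cut event) takes an \e $T$ whose source is outside $\nu_t(\gamma)$ and whose destination is inside $\nu_t(\gamma)$. I would first fix the canonical speed assignment in the exit depletion interval $\ell_t$ witnessing the isolated area (earliest deadline first, obeying releases and deadlines), so that the notion ``crosses $t$ in $\ell_t$'' is well defined. The key is that, by the definition of isolated area, at the event $\gamma$ any active \e with destination inside $[t,t']$ either already places its work to the right of $t$ in $\ell_t$, or, if its destination depletion interval lies before $\ell_t$ or after $t'$, it splits into an active half with destination in $\ell_t$ to the right of $t$ together with an active half with source $\ell_t$ to the left of $t$. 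So an \e with destination \emph{strictly inside} $\nu_t(\gamma)$ and source \emph{outside} must, combined with the exit \e $T_{\mathrm{exit}}$ of Observation~\ref{obs:isolatedExit} (whose source is the exit $\ell_t$ and whose destination is outside), produce a pair of \es that the algorithm would be taking simultaneously and that \emph{cross} in the sense of Definition~\ref{def:eps}: their spans overlap around the time $t$ in $\ell_t$, and (after pruning the ``invalid'' \es described in the Intricacies subsection) their source/destination critical intervals are distinct. This contradicts Lemma~\ref{lem:crossing}.

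Concretely, the steps in order: (i) invoke Observation~\ref{obs:isolatedExit} to get that the path of \es from every depletion interval in $\nu_t(\gamma)$ to $I_L$ passes through the exit \e $T_{\mathrm{exit}}$ with source $\ell_t$ and destination outside the area; (ii) assume for contradiction the algorithm also takes (at the same event, or via the path-of-transfers structure of Observation~\ref{obs:pathOfTransfers}) an \e $T$ with source outside $\nu_t(\gamma)$ and destination a depletion interval $\ell'$ with $\ell_t \preceq \ell' \preceq \ell_{t'}$ strictly inside; (iii) use the isolated-area definition together with Lemma~\ref{lem:composition} (composition of crossing \es) to splice $T$ with a segment of $T_{\mathrm{exit}}$, or with the forced ``split'' of $T$ guaranteed by the isolated-area definition, obtaining two \es whose spans intersect near $t$ and whose source/destination critical intervals are all distinct once invalid \es are pruned; (iv) observe these two are crossing by the Definition~\ref{def:eps}, contradicting Lemma~\ref{lem:crossing}. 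A subtlety to handle is the direction of $T$: if $T$ is a left \e its source (outside, hence to the right of $t'$ or to the left of $t$) and destination $\ell'$ inside force the span to engulf part of $\nu_t(\gamma)$; if $T$ is a right \e, Observation~\ref{obs:pathOfTransfers} says no previously chosen left \e has source or destination in $[s_T, d_T)$, which pins down enough structure to again land in the crossing case. Also one must use maximality of $\nu_t(\gamma)$ (that $t'$ is the latest depletion point with the property) to rule out the degenerate case where $T$'s destination lands exactly on the boundary $t'$ rather than strictly inside — in that case $T$ does not violate the statement.

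The main obstacle I anticipate is bookkeeping the ``crosses $t$ in $\ell_t$'' condition cleanly: the isolated-area definition permits \es with destination in $\ell_t$ as long as they deposit work to the right of $t$, so I cannot simply say ``no \e enters $\ell_t$''. I have to track \emph{where within} $\ell_t$ the work is placed (which critical interval of the canonical EDF assignment), and argue that an \e violating the lemma necessarily deposits work to the left of $t$ or draws work from outside through the barrier at $t$, which is precisely what makes it cross $T_{\mathrm{exit}}$. The second delicate point is ensuring the two \es we exhibit genuinely have \emph{all} source/destination critical intervals distinct, as required by the definition of crossing; this is where the pruning of \es that reuse jobs or critical intervals (from the Intricacies subsection) and weak EDF (Lemma~\ref{lem:crossing}'s proof uses it) are essential, and I would lean on exactly the same reduction used implicitly in Lemma~\ref{lem:crossing}. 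Modulo these two points, the argument is a short contradiction using Observations~\ref{obs:pathOfTransfers} and~\ref{obs:isolatedExit} and Lemmas~\ref{lem:composition} and~\ref{lem:crossing}.
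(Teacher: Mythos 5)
Your proposal tries to reduce every case to a crossing contradiction with the exit transfer $T_{\mathrm{exit}}$, and that does capture part of what the paper does (the paper's Case~2, where $T$'s source lies in a depletion interval strictly to the left of the exit and $T_e$ is a right \e, is exactly a crossing contradiction via Observation~\ref{obs:isolatedExit} and Lemma~\ref{lem:crossing}). But the crossing argument does not cover the case where $T$'s source is to the right of $\nu_t(\gamma)$, and this is where the paper's proof does the real work. When $s_T$ is to the right, $T$ and $T_{\mathrm{exit}}$ need not cross at all: $T_{\mathrm{exit}}$ may be a left \e whose span lies entirely to the left of $\ell_t$, while $T$'s span lies to the right of $t$. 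There is no overlap to exploit. Your claim that Observation~\ref{obs:pathOfTransfers} ``pins down enough structure to again land in the crossing case'' is not substantiated and is, as far as I can tell, not true.

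For that right-side case the paper instead does a sub-case analysis built on two ideas you do not use: \emph{maximality} of the isolated area, and \emph{priority comparison}. Concretely, the paper considers $I_r$, the rightmost depletion interval reachable from $s_T$ by a path of active right \es, and shows that if no active \e with source between the right border of $\nu_t(\gamma)$ and $I_r$ points back to the left of the exit, then $\nu_t(\gamma)$ could have been extended to $I_r$, contradicting maximality; and if such a backward \e exists, then either a longer left \e would have been taken before $T$, or Lemma~\ref{lem:composition} gives a higher-priority \e from $s_T$ than $T$, so the algorithm would not take $T$. Neither branch is a crossing contradiction. You do invoke maximality, but only to handle a supposed degenerate case where $T$'s destination lies ``exactly on the boundary $t'$'' --- that is not really a case (depletion intervals are half-open, and the destination is unambiguously inside or outside), and more importantly it is not the role maximality plays in the actual proof. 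Without the extension/priority argument for the right-side source, the proof has a genuine gap.
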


\begin{proof}
Consider any active \e $T$ with source $s_T$ outside of $\nu_t(\gamma)$ and destination inside $\nu_t(\gamma)$.
We show that the algorithm does not choose $T$:
\begin{itemize}
\item \textbf{\boldmath Case 1: The source is leftmost depletion interval intersecting $\nu_t(\gamma)$.}
This follows immediately from Observation~\ref{obs:isolatedExit}.
\item \textbf{\boldmath Case 2: The source is to the left of the exit of $\nu_t(\gamma)$.}
Let $T_e$ be the \e with source the exit of $\nu_t(\gamma)$.
By Observations~\ref{obs:pathOfTransfers} and~\ref{obs:isolatedExit}, if the algorithm chose $T$, it must have chosen $T_e$ first.
If $T_e$ is a left \e, this contradicts Obersvation~\ref{obs:pathOfTransfers}.
If $T_e$ is a right \e, by Observation~\ref{obs:isolatedExit}, this contradicts Lemma~\ref{lem:crossing}.
\item \textbf{\boldmath Case 3: The source is to the right of $\nu_t(\gamma)$.}
We show that either the algorithm does not take $T$, or the right border of $\nu_t(\gamma)$ could be extended.
Let $I_r$ be the rightmost depletion interval that can be reached by a path of active right \es from $s_T$, and $I_l$ be the first depletion interval to the right of $\nu_t(\gamma)$.
\begin{enumerate}
\item If $I_r = I_L$, then the algorithm would take a right \e from $s_T$.
\item If there does not exist an active \e with source between $I_l$ and $I_r$, and destination to the left of the exit of $\nu_t(\gamma)$, then $\nu_t(\gamma)$ could extend to $I_r$, contradicting the definition of $\nu_t(\gamma)$.
\item If there exists an active \e with source between $s_T$ and $I_r$, and destination to the left of the exit of $\nu_t(\gamma)$, then the longest such \e would be taken before $T$, and the path of right \es from $s_T$ would be taken rather than $T$.
\item If there exists an active \e with source between $I_l$ and $s_T$, and destination to the left of the exit of $\nu_t(\gamma)$, then by Lemma~\ref{lem:composition}, there exists an active \e from $s_T$ to the left of the exit of $\nu_t(\gamma)$, and this \e is higher priority than $T$, so the algorithm does not take $T$.
\qed
\end{enumerate}
\end{itemize}
\end{proof}

\begin{lemma}
\label{lem:isolated}
Let $\nu_t(\gamma)$ be an isolated area.
If $\nu_t(\gamma)$ exists and is nonempty, then for any $\gamma' \in \Gamma$ with $\gamma' > \gamma$, the maximal isolated area $\nu_t(\gamma')$ exists and $\nu_t(\gamma') \supseteq \nu_t(\gamma)$.
\end{lemma}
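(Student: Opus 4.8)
The plan is to prove the statement by induction along the sequence of events $\Gamma$ between the fixed cut event and the next one, showing that the isolated-area property, once established at some event $\gamma$, is preserved at the subsequent event $\gamma'$ (and that maximality only enlarges the area). It suffices to handle the case $\gamma' = \gamma + 1$, i.e.\ a single event, and then chain. So fix the maximal isolated area $\nu_t(\gamma) = [t, t']$ with exit $\ell_t$, and let $\gamma'$ be the next event. I would first argue that $[t,t']$ still satisfies the defining property at $\gamma'$, and then define $\nu_t(\gamma')$ as the \emph{maximal} such interval, which automatically contains $[t,t']$ and hence $\nu_t(\gamma)$.

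The core of the argument is showing that between $\gamma$ and $\gamma'$ the relevant structure inside $[t,t']$ cannot be destroyed. By Lemma~\ref{lem:isolatedProperty}, no \e taken by the algorithm has source outside $\nu_t(\gamma)$ and destination inside; combined with Observation~\ref{obs:isolatedExit}, every transfer path from a depletion interval inside $\nu_t(\gamma)$ to $I_L$ passes through the exit \e $T_e$, whose destination is outside. Hence, while moving work between $\gamma$ and $\gamma'$, the \emph{net} effect on $[t,t']$ is that work only leaves (through $\ell_t$, to the left of $t$), and no work enters from the left or from strictly inside-to-outside in a way that crosses $t$. This means: (i) the depletion point $t'$ is not removed (removing it would require work to be redistributed across it, which the transfers respect since they don't place work across $t'$ into $\nu_t(\gamma)$ — and if the event $\gamma'$ \emph{is} a depletion-point removal, one checks it is not $t'$, using the depletion-point-removal invariant from Section~\ref{sec:approach_overview}); and (ii) any new depletion point created strictly inside $[t,t']$ only \emph{refines} the partition, so the property "there is no active \e with source inside and destination outside without crossing $t$ in $\ell_t$" is inherited — splitting a depletion interval inside the area cannot create an \e escaping the area that did not conceptually exist before, by the composition argument of Lemma~\ref{lem:composition} and the pruning rules on valid \es (weak EDF, laminarity). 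The speed-level / cut invariant guarantees no cut update happens strictly between $\gamma$ and $\gamma'$ (those are exactly the boundary events of $\Gamma$), so $c$ and the $a_\ell$ values are fixed throughout, and all speed changes are monotone (linear) between the two events; this is what lets me transport the "can assign EDF-feasible speeds to jobs in $\ell_t$" clause of the isolated-area definition from $\gamma$ to $\gamma'$.

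Given that $[t,t']$ still satisfies the defining property at $\gamma'$, define $\nu_t(\gamma')$ to be the maximal isolated area at $\gamma'$ starting at $t$; by definition of maximality its right endpoint is $\geq t'$, so $\nu_t(\gamma') \supseteq \nu_t(\gamma)$, and existence/nonemptiness is immediate since $[t,t']$ is a witness. Chaining over all events in $\Gamma$ between $\gamma$ and $\gamma'$ gives the lemma.

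The main obstacle I anticipate is case~(ii): ruling out that an event strictly inside $[t,t']$ — particularly a depletion-point appearance or an edge-inactive (speed-level) event on a critical interval straddling $t$ or $t'$ — could create a new active \e with source inside $\nu_t(\gamma)$ and destination outside that \emph{does not} factor through $\ell_t$ and $t$ as required. Handling this cleanly needs the full strength of the valid-\e pruning (no \e takes a right edge contained in a previously taken left edge; no \e moves a job's work against the direction of a previously taken \e) together with Lemma~\ref{lem:composition}, exactly as in the proof of Lemma~\ref{lem:isolatedProperty}. I would mirror that proof's case analysis (source leftmost/inside/right of the area) to show any such putative escaping \e is either not chooseable or forces the right border of $\nu_t(\gamma')$ to extend — which is consistent with, rather than contradicting, $\nu_t(\gamma') \supseteq \nu_t(\gamma)$.
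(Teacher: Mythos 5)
Your overall strategy matches the paper's: induction on the events in $\Gamma$, splitting the step from $\gamma$ to $\gamma+1$ into the effect of moving work and the effect of the event itself, and leaning on Lemma~\ref{lem:isolatedProperty}, Observation~\ref{obs:isolatedExit}, and Lemma~\ref{lem:composition}. That is the right skeleton. However, two parts of your argument are either missing or wrong.

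First, the ``work movement'' step is asserted rather than proved. You say that since no chosen \e enters the area and the only exit goes through $\ell_t$, ``the net effect on $[t,t']$ is that work only leaves.'' But that does not directly imply what the lemma needs, namely that no \emph{new active \e} with source inside and destination outside (not crossing $t$ in $\ell_t$) appears just before event $\gamma+1$. The paper's proof does real work here: given a hypothetical escaping active \e $T=(\ell_a,j_a)_a$ at $\gamma+1$, it walks the edges backwards, and at each step either the job $j_i$ was already in $C_i$ at $\gamma$ (so the edge existed then) or some chosen \e $T'$ moved it there; if $T'$'s source was inside, it splices to get an escaping \e at $\gamma$, and if outside, Lemma~\ref{lem:isolatedProperty} forces $T'$'s destination outside too, so the target critical interval is rerouted. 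It then separately repairs activity at source and destination to get an \emph{active} escaping \e $R'$ at $\gamma$, contradicting isolation. Your sketch never constructs this contradiction witness, and without it the invariance claim is unsupported.

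Second, your handling of depletion-point removal is incorrect as stated. You claim one ``checks it is not $t'$'' using the depletion-point-removal invariant. That invariant only says a removed depletion point does not reappear before the next cut; it does \emph{not} forbid removing $t'$. In fact $t'$ \emph{can} be removed (this is exactly when there are no active \es out of the adjacent depletion interval), and the paper's proof explicitly embraces this: when the removed point is the right endpoint of $\nu_t(\eta)$, the isolated area \emph{expands} to absorb the merged depletion interval, so $\nu_t(\eta+1) \supsetneq \nu_t(\eta)$. This also breaks your proposed framing of ``show $[t,t']$ itself is still isolated at $\gamma'$ and then take the maximal one'': the definition of an isolated area requires $t'$ to be a depletion point, so if $t'$ is removed, $[t,t']$ is literally not an isolated area at $\gamma'$ — you need the expansion argument, not a persistence-of-$[t,t']$ argument.
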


\begin{proof}
We proceed by induction on events.
The base case, for event $\gamma$, follows by definition.
For the inductive step, suppose the lemma holds at event $\eta$, and we will show that the lemma continues to hold at event $\eta+1$.
We accomplish this by showing that any depletion interval that is part of $\nu_t(\eta)$ must be part of $\nu_t(\eta+1)$, and thus $\nu_t(\eta) \subseteq \nu_t(\eta+1)$.
We first consider the movement of work between events, and second consider the effect of the event $\eta+1$.

\textbf{Work Movement.}
Assume work is moved between $\eta$ and $\eta+1$.
We (conceptually) stop the algorithm just before enough work is moved to cause $\eta+1$.
We show that if there is an active \e with destination outside of $\nu_t(\eta)$ at this time (for ease, we write at $\eta+1$), then either the destination is part of $\nu_t(\eta+1)$, or some active \e with source inside $\nu_t(\eta)$ and destination outside $\nu_t(\eta)$ existed at $\eta$, contradicting that $\nu_t(\eta)$ is an isolated area.
Let $T = (\ell_a, j_a)_{a=0}^s$ be an active \e with source in $\nu_t(\eta)$ and destination outside $\nu_t(\eta)$ at $\eta+1$, and for $a=0,\dots,s$ let $C_a$ be the critical interval in $\ell_a$ used by that edge of the \e.

We show that an \e $R$ with source inside $\nu_t(\eta)$ and destination outside $\nu_t(\eta)$ must exist at $\eta$, and then show that an \e $R'$ with the same properties was active at $\eta$.
We consider transfer edges $i$ (taking work from $\ell_i$ to $\ell_i+1$) one at a time beginning with $s-1$ down to $0$.
Let $C_d$ be the current destination critical interval for the \e we are building, which is initially $C_s$.
We show that, assuming there is an \e from $C_{i+1}$ to $C_d$, then there is one from $C_{i}$ to either $C_d$ or some other critical interval outside $\nu_t(\eta)$ (i.e., we choose a new $C_d$).

First note that if $j_i$ was present in $C_i$ at $\eta$, edge $i$ must exist at $\eta$ as well (since no event happened), and thus an \e from $C_i$ to $C_d$ exists.
Otherwise, some \e $T'$ was taken at $\eta$ must have moved job $j$ into $C_i$.
If the source of $T'$ is in $\nu_t(\eta)$, then this source has an \e to $C_{i+1}$, and therefore an \e to $C_d$, thus we have found $R$.
If the source of $T'$ is outside $\nu_t(\eta)$, then the destination of $T'$ is also outside $\nu_t(\eta)$ by Lemma~\ref{lem:isolatedProperty}.
Thus there is an \e from $C_i$ to the destination of $T'$, so we change $C_d$ to be the destination critical interval of $T'$.

It remains to construct an \e $R'$ that is active at $\eta$.
Let $s_R$ and $d_R$ be the source and destination critical intervals of $R$.
We first construct an \e $R_1$ with destination $d_R$ with a decreasible source at $\eta$.
If $s_R$ is decreasible at $\eta$, then $R_1 = R$.
Otherwise, we know $s_R$ is decreasible at $\eta+1$, so if $s_R$ was not decreasible at $\eta$, the algorithm must have taken some \e $R_1'$ with destination $s_R$ at $\eta$, and the source of $R_1'$ must be in $\nu_t(\eta)$ by definition of isolated area.
Combine $R_1'$ and $R$ to yield $R_1$, an \e whose source is decreasible.
Similarly, if $d_R$ is decreasible at $\eta$, then $R' = R_1$.
Otherwise, we know $d_R$ is increasible at $\eta+1$, so if $d_R$ was not increasible at $\eta$, the algorithm must have taken some \e $R_2'$ with source $d_R$ at $\eta$, and the destination of $R_2'$ must be in ourside of $\nu_t(\eta)$ by Lemma~\ref{lem:isolatedProperty}.
Thus, combine $R_1$ and $R_2'$ to obtain $R'$

\textbf{Events.}
We show that the event $\eta+1$ cannot cause the isolated interval to decrease in size or cease to exist.
In each case, we show no new active \es with source inside $\nu_t(\eta)$ and destination outside $\nu_t(\eta)$ could become available.
\begin{itemize}
\item \textbf{Depletion Point Addition Events:}
For any depletion point added, the only \es affected are those with source or destination in the depletion interval that gained the depletion point, and whether or not those \es were active did not change.
\item \textbf{Depletion Point Removal Events:}
This even has a similar effect on \es as depletion point addition events, except when the depletion point removed is the rightmost depletion point of $\nu_t(\eta)$.
In this latter case, let $\ell_R$ be the depletion interval that merged with the rightmost depletion interval of $\nu_t(\eta)$.
The fact that the depletion point was removed means that there are now no active \es from $\ell_R$ to any other depletion interval, thus at $\eta+1$ the isolated area can be expanded to include the (now removed) depletion interval $\ell_R$.
\item \textbf{Edge Inactive Events:}
These cause \es to cease to be active, thus no new \es can appear as a result of these events.
\item \textbf{Critical Interval Merge and Separation Events:}
Critical intervals merging and separating can combine or split \es, but do not cause \es to become active from inactive.
Thus, the only place where merge events can cause a new active \e is at the exit of the isolated area; However, these new \es must cross $t$ and thus do not cause the isolated area to cease to exist.
\item \textbf{Edge Removal Events:}
These can only remove destinations for \es, and thus can only enlarge the isolated area.
\qed
\end{itemize}
\end{proof}

\paragraph{The Bounds on Events}
With these technical Lemmas we are now ready to bound the number of non cut events.
The hierarchy used assumes depletion point additions/removals and speed level events occur at the same level, but below cuts, and that edge removals occur at the bottom of the hierarchy.

\begin{lemma}
There are at most $O(n)$ depletion point addition and removal events.
\end{lemma}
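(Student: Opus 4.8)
The plan is to bound these events \emph{within a single phase}, i.e., during any maximal stretch of the execution between two consecutive cut fixes (equivalently, while the jobs' speed levels are fixed), and then to combine this with Lemma~\ref{lem:runtime_cut_fixing}, which caps the number of cut fixes by $kn^2$. To set this up I would first record two structural facts. (i) Every depletion point sits at a job deadline: by the \emph{Depletion Point Appearance} rule, a new depletion point at time $t$ is introduced only when $E_S(d_j)=0$ for a job $j$ with $d_j=t$ and the energy would otherwise go negative there; hence at any moment there are at most $n$ depletion points and addition events can only occur at the $n$ values $d_1,\dots,d_n$. (ii) Within a phase a labeled depletion point never disappears on its own: \textsc{CalculateRates} always moves work so that the rate of change of residual energy is $0$ at every labeled depletion point, so it keeps energy exactly $0$; the only mechanism that un-labels a depletion point is the depletion-point-removal update, which is one of the two ways of fixing a cut and therefore coincides with a phase boundary.

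Given (i) and (ii), the counting is short. By the \emph{Depletion Point Removal Invariant} from Section~\ref{sec:approach_overview}, once a depletion point is removed it is not re-created until the next cut fix; together with (ii) this means that within one phase each deadline $d_j$ witnesses at most one addition event and at most one removal event (an addition at $d_j$ can only be followed by a further addition after an intervening removal, which the invariant forbids before the phase ends; the symmetric statement bounds removals). Summing over the $n$ deadlines gives at most $2n = O(n)$ depletion point addition and removal events per phase, which is the claimed bound; multiplying by the $O(kn^2)$ phases of Lemma~\ref{lem:runtime_cut_fixing} additionally bounds the total over the whole execution.

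The only real content is fact (ii) together with the Depletion Point Removal Invariant, and I expect the latter to be the main obstacle. Here the isolated-area machinery does the work: after removing a depletion point $\tau$, the two adjacent depletion intervals merge, and one argues — exactly as in the \emph{Depletion Point Removal Events} case of the proof of Lemma~\ref{lem:isolated} — that the part of the merged interval to the left of $\tau$ forms an isolated area that only grows, so no net work is ever transferred across $\tau$ into it and $E_S(\tau)$ cannot be driven back to $0$ before the next cut fix. A secondary, more routine point, which I would dispatch by a short case analysis over the event types (depletion point addition, edge removal, edge inactive, critical-interval merge/separation), is that no event other than a cut fix can un-label a depletion point or make two depletion points coincide; degenerate depletion points (those created, or left after an event, with nonnegative energy derivative) are handled in the same way and contribute only $O(n)$ additional events per phase.
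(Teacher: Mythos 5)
Your counting framework matches the paper's exactly: there are at most $n$ candidate depletion point locations (job deadlines), and the Depletion Point Removal Invariant (once removed, not re-added until the next cut update) gives at most one appearance and one disappearance per candidate per phase, for $O(n)$ events. The paper's proof is the same. However, your sketch of why the invariant holds gets a key direction wrong. You isolate the part of the merged interval to the \emph{left} of the removed point $\tau$ and conclude $E_S(\tau)$ cannot return to zero because no net work enters from the right. That does not follow: with the previous depletion point $\tau_{i-1}$ still pinned at energy zero, $E_S(\tau) = R(\tau - \tau_{i-1}) - \int_{\tau_{i-1}}^{\tau} P$, so as $R$ decreases $E_S(\tau)$ would \emph{drop} unless work leaves $[\tau_{i-1},\tau]$ to the left at a sufficient rate — merely freezing cross-$\tau$ traffic is not enough. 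The paper instead argues that $[\tau,\tau']$, from $\tau$ to the \emph{next} depletion point on the right, becomes an isolated area (note the definition $\nu_t(\gamma)=[t,t']$ requires $t'>t$, so your "left of $\tau$" version is not even an isolated area in the paper's sense). Once no work leaves $[\tau,\tau']$, the identity $E_S(\tau)=\int_\tau^{\tau'}P - R(\tau'-\tau)$ together with $R$ decreasing forces $E_S(\tau)$ to be nondecreasing, so $\tau$ never resurfaces.

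A second, smaller misreading: you treat depletion-point removal as a cut event that ends a phase. In the paper's hierarchy, depletion-point removal updates sit at the \emph{second} level, strictly below cut (SLR) updates, and the invariant is stated as "once a depletion point is removed it will not be added again (until the next cut update)." Under your reading that invariant — and hence this lemma — would be vacuous within a phase. In fact the whole content of the lemma is that removals occurring \emph{within} a phase are irreversible, and proving that is exactly what the isolated-area argument above is for.
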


\begin{proof}
We show that, between cut events, once a depletion point is removed, it never returns.
Since there are at most $n$ depletion points in the schedule, there can be at most $2n$ depletion point addition or removal events.
Intuitively, the removal of a depletion point creates an isolated area, which by Lemma~\ref{lem:isolated} persists.
We then argue that, since work is never removed from the right of the old depletion point, no new depletion point can appear there.

Suppose at event $\gamma$ a depletion point is removed at $t$, and let $t'$ be the time of the next depletion point.
We show that $[t,t']$ is an isolated area.
This follows because we do not remove a depletion point unless there is no active \e from the corresponding depletion interval to outside of it.

Fix any depletion interval, and observe that, as work is moved by the algorithm, the total energy available at any time point to the right of the critical interval being decreased must be increasing, due to the fact that the recharge rate is decreasing and the next depletion point must be maintained.
Since for any $\gamma'>\gamma$, by Lemma~\ref{lem:isolated} $\nu_t(\gamma')$ exists, and by the fact that the algorithm does not merge critical intervals that appear on both sides of $t$, no critical interval to the right of $t$ is ever the source of an \e, and thus the energy at $t$ is always increasing, and so $t$ can never be a depletion point again.
\qed
\end{proof}

\begin{definition}[work barrier]
\label{lem:workbarrier}
For a time $t$, a depletion point $t'$ is a \emph{$t$ work barrier} if there is no active \e with source in $[t,t']$ and destination to the right of $t'$.
\end{definition}

\begin{lemma}
\label{lem:workbarrierPersistence}
If $t'$ is a $t$ work barrier at $\gamma$ caused by a right \e as described in Lemma~\ref{lem:workbarrier}, then for any $\gamma' > \gamma$, there is some $t_2 \geq t'$ such that $t_2$ is a $t$ work barrier at $\gamma'$.
\end{lemma}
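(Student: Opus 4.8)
The plan is to prove this exactly in the style of Lemma~\ref{lem:isolated}, by induction on the events in $\Gamma$ that occur between the fixed cut event and the next one. The base case is the hypothesis (take $t_2 = t'$). For the inductive step I would assume that at event $\eta$ some depletion point $t_\eta \ge t'$ is a $t$ work barrier, and produce a depletion point $t_{\eta+1} \ge t_\eta$ that is a $t$ work barrier at event $\eta+1$. As in Lemma~\ref{lem:isolated}, the argument separates into two parts: the effect of the work moved by the algorithm between $\eta$ and $\eta+1$, and the effect of the structural event $\eta+1$ itself. The role of the hypothesis that the barrier is \emph{caused by a right \e} is to guarantee the monotonicity that underlies the whole induction: after that speed level event, SLR forbids the blocking right \e (it would jump two speed levels), and, combined with the fact that the next depletion point at $t_\eta$ must be maintained while $R$ decreases, no work is ever added to $[t,t_\eta]$ from the right until the next cut event.

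For the work-movement part, I would conceptually stop the algorithm just before enough work has moved to trigger $\eta+1$ and suppose, for contradiction, that at that moment there is an active \e $T = (\ell_a,j_a)_{a=0}^s$ with source inside $[t,t_\eta]$ and destination strictly to the right of $t_\eta$. Writing $C_a$ for the critical interval of $\ell_a$ used by $T$, I would reconstruct, edge by edge from $a=s-1$ down to $a=0$, an \e that was already available at $\eta$: if $j_a$ was present in $C_a$ at $\eta$ that edge existed at $\eta$; otherwise some \e $T'$ taken at $\eta$ moved $j_a$ into $C_a$, and because no \e taken at $\eta$ has source to the right of $t_\eta$ and destination inside $[t,t_\eta]$ (the analog of Lemma~\ref{lem:isolatedProperty} for work barriers, which follows from the work-barrier property at $\eta$ and the direction invariant above), the source of $T'$ must lie inside $[t,t_\eta]$, so splicing $T'$ on keeps the partial source inside the barrier. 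Observation~\ref{obs:pathOfTransfers}, Lemma~\ref{lem:composition}, and the priority rules of Definition~\ref{def:eps:domination} are invoked, exactly as in Lemma~\ref{lem:isolatedProperty}, to exclude the spurious reconstructions. The reconstruction then ends in one of two ways: either it witnesses an active \e at $\eta$ with source in $[t,t_\eta]$ and destination right of $t_\eta$, contradicting that $t_\eta$ was a work barrier; or it is forced to redirect its destination to a critical interval just past $t_\eta$, in which case the work piled up against $t_\eta$ shows that the next depletion point $t_{\eta+1} > t_\eta$ is itself a $t$ work barrier.

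For the structural event $\eta+1$ I would go through the event types and check that none of them creates an active \e with source in $[t,t_\eta]$ and destination to the right of $t_\eta$, possibly after shifting the barrier one depletion point rightward. Depletion point additions and edge-inactive events are local to a single depletion interval or only destroy active \es, so they cannot break the barrier; edge-removal events only delete destinations, so they can only enlarge the barrier region; critical-interval merges and separations combine or split \es but never turn an inactive \e active, so any new \e appears only at the exit and must cross $t$. The one delicate case is a depletion point removal at $t_\eta$ itself: then $t_\eta$ merges with the depletion interval $\ell_R$ on its right, and the removal rule guarantees there is no active \e out of $\ell_R$, so the barrier extends to the next depletion point $t_{\eta+1}>t_\eta$; removals at other depletion points are harmless. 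Combining the two parts completes the induction. I expect the main obstacle to be the work-movement part, specifically making the backward reconstruction of the \e fully rigorous and pinning down precisely when it yields a contradiction versus when it only pushes the barrier to the right — this is the work-barrier analog of Lemma~\ref{lem:isolatedProperty} and is where essentially all the technical difficulty lies.
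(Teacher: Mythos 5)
Your plan diverges from the paper's, which is much shorter: the paper first notes that the right endpoint of any isolated area $\nu_t(\gamma)$ is automatically a $t$ work barrier, then shows that $[t,t']$ is itself an isolated area at $\gamma$, and concludes by invoking Lemma~\ref{lem:isolated} to get a (weakly growing) isolated area $\nu_t(\gamma')\supseteq\nu_t(\gamma)$ whose right endpoint $t_2\geq t'$ is the desired barrier. The hypothesis ``caused by a right \e'' enters not via SLR or speed-level jumps, as you suggest, but through a crossing argument: if $[t,t']$ failed to be isolated there would be an active \e $T'$ with source inside $[t,t']$ and destination to the left of $t$; since $t$ is the right endpoint of the source of the blocking right \e $T$, $T'$ crosses $T$, and Lemma~\ref{lem:composition} composes them into a higher-priority active \e, contradicting the work barrier (equivalently, the algorithm's choice of $T$).

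Your own version has a gap exactly where you flag the difficulty. The inductive invariant you propose --- ``some $t_\eta\geq t'$ is a $t$ work barrier'' --- is too weak to close the work-movement half of the induction. You invoke an ``analog of Lemma~\ref{lem:isolatedProperty} for work barriers'' to claim that no \e taken at $\eta$ has source to the right of $t_\eta$ and destination inside $[t,t_\eta]$, but the work-barrier definition only restricts \es whose \emph{source} is in $[t,t']$ and whose destination is right of $t'$; it says nothing about \es entering the region from the right. The actual proof of Lemma~\ref{lem:isolatedProperty} (especially Case~3) leans on the two-sided isolation and on the \emph{maximality} of $\nu_t(\gamma)$ --- e.g., to argue that the isolated area could otherwise be extended --- and neither is available for a bare work barrier. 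Your ``direction invariant'' (no work enters $[t,t_\eta]$ from the right) is exactly the missing isolated-area property, but you assert it rather than prove it, and the justifications you offer (SLR, maintenance of the depletion point at $t_\eta$) do not deliver it. The repair is precisely the paper's move: strengthen the invariant to ``$[t,t_\eta]$ is contained in an isolated area,'' establish this once at $\gamma$ via the crossing argument, and then carry it forward by citing Lemma~\ref{lem:isolated} rather than redoing its two-phase induction.
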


\begin{proof}
It is easy to see that for any isolated area $\nu_t(\gamma)$, the right endpoint is a $t$ work barrier.
We show that the isolated area $\nu_t(\gamma)$ exists, and thus the work barrier exists at $\gamma'$ as the right endpoint of $\nu_t(\gamma')$.

Assume the work barrier iss caused by a right \e $T$ taken by the algorithm, then suppose to obtain a contradiction that $[t,t']$ is not an isolated area.
Then there is some active \e $T'$ in $[t,t']$ with destination to the left of $t$, which is the right endpoint of the source of $T$.
By Lemma~\ref{lem:composition}, we can create an active \e from the source of $T'$ to the destination of $T$, contradicting there is a work barrier at $t'$.
\qed
\end{proof}

\begin{lemma}
\label{lem:speedlevelEvents}
There are at most $O(n)$ speed level events.
\end{lemma}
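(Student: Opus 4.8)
The plan is to mirror the structure of the preceding depletion‑point argument: show that each speed level event, like a depletion point removal, permanently carves out a region of the time horizon into which no further work is moved, so that a given speed level event cannot recur (between two cut events). Concretely, a speed level event occurs when some critical interval's speed in a depletion interval $I_\ell$ reaches the upper discrete speed $s_i$ of its current speed level, and — by the SLR — every job that is active across the relevant depletion point and is being moved must simultaneously sit at a discrete speed; the algorithm then refuses to decrease these speeds further until a cut update is performed. The key claim I would isolate is: immediately after a speed level event at time $t$ (say $t$ is the right border of the critical interval whose speed just hit $s_i$), the interval $[t,t']$ — where $t'$ is the next depletion point — becomes an isolated area $\nu_t(\gamma)$ in the sense of the Isolated Area definition.

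First I would verify this claim. The speed level event means there is no active \e that would decrease the speed of the freshly‑saturated critical interval (doing so would jump two discrete speed levels for the involved job while others jump one, violating the SLR), and by Lemma~\ref{lem:transferOrder}/the priority definition this blockage propagates: no active \e can have source to the left of $t$ inside $\ell_t$ and destination crossing out past $t'$ without being "split" at $t$. This is exactly what is needed to certify $[t,t']$ as an isolated area. Then, invoking Lemma~\ref{lem:isolated}, the maximal isolated area $\nu_t(\gamma')$ persists and only grows for all subsequent events $\gamma'>\gamma$ up to the next cut event. By Lemma~\ref{lem:isolatedProperty} no \e taken by the algorithm ever moves work into $\nu_t(\gamma')$, and by the monotonicity observed in the depletion‑point proof (recharge rate decreasing, next depletion point maintained, no critical interval straddling $t$ is ever a source) the speed of the saturated critical interval can never again drop back below $s_i$ — so the same speed level event at $t$ cannot fire twice between two cut events.

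Finally I would do the counting. Each speed level event is associated with a critical interval whose borders are, as noted in the "Avoiding Critical Interval Events" discussion, determined by job releases / first‑runnable times / deadlines; there are $O(n)$ such boundary times, hence $O(n)$ candidate saturated critical intervals, and each contributes at most one speed level event between two cut events. Combined with Lemma~\ref{lem:runtime_cut_fixing} ($kn^2$ cut events) this would in fact give $O(kn^3)$ overall, but within a single cut epoch the bound is $O(n)$, which is what the lemma claims. The main obstacle I anticipate is the first step — rigorously arguing that a speed level event really does create an isolated area, i.e. that the SLR‑induced refusal to decrease a saturated speed, together with the priority rules for choosing \es, genuinely blocks all work from crossing $t$ in the required sense; this requires carefully ruling out the "right \e contained in a left \e" and "opposite‑direction" configurations using the pruning rules from the Valid \es discussion and Lemma~\ref{lem:composition}, much as in Lemma~\ref{lem:isolatedProperty} Case~3.
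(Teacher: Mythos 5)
Your proposal follows the same strategy as the paper's proof: establish that a speed level event creates an isolated area $\nu_t(\gamma)$, invoke Lemma~\ref{lem:isolated} for its persistence, and charge each event to one of the $O(n)$ candidate critical intervals. The paper fills in the gap you flagged by a case analysis that treats lower and upper speed level events separately (and further splits on whether the relevant \e is left- or right-going), using work barriers (Lemma~\ref{lem:workbarrier}, Lemma~\ref{lem:workbarrierPersistence}), Lemma~\ref{lem:composition}, and Lemma~\ref{lem:transferOrder} — whereas you only sketch the upper-speed case and take $t$ to be the right border of the saturated critical interval, while the paper takes the left border.
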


\begin{proof}
Let $\gamma$ be a lower speed level event, and $t$ be the left border of the critical interval causing this event.
We argue that there is an isolated area $\nu_t(\gamma)$.
As a result, no job that could be placed to the right of $t$ will ever be the source of an \e again, and thus this critical interval will never cause itself to decrease again (which could cause it to hit a lower speed level again, or cause it to not be at an upper speed level).
Since there are at most $O(n)$ critical intervals, there are at most $O(n)$ such events.

We now show that $\nu_t(\gamma)$ exists.
\begin{itemize}
\item \textbf{Case 1: Lower Speed Level Events.}
Consider the \e $T$, with source $C_s$ that was taken from the critical interval causing the lower speed level event at $\gamma-1$.
Let $t$ be the left endpoint of $C_s$.
There are two cases, depending on the direction of $T$.
\begin{itemize}
\item \textbf{\boldmath Subcase 1: $T$ is a right \e.}
By Lemma~\ref{lem:workbarrier}, if $t_1$ is the right endpoint of $C_s$, there was a $t_1$ work barrier at some depletion point $t'$, and thus there was no active \e from the right of $C_s$ to the right of $t'$.
Since $C_s$ hit a lower speed level, and no other event occurred, at $\gamma$ there is no active \e from the start of $C_s$ to the right of $t'$.
If we can show that there is no active \e from $[t,t']$ to the left of $t$, then we have shown that $\nu_t(\gamma)$ exists.
If such an active \e did exist, then it would have crossed $T$.
By Lemma~\ref{lem:composition} and Lemma~\ref{lem:transferOrder}, we would could construct a higher priority active \e to the destination of $T$ than $T$, contradicting that the algorithm took $T$.

\item \textbf{\boldmath Subcase 2: $T$ is a left \e.}
Let $t'$ be the work barrier caused by the parent of $T$ ($T$ must have a parent, since it is left-going and there is a path from the destination of $T$ to $\ell_L$).
By the definition of work barrier, there is no active \e in $[t,t']$ with destination to the right of $t'$.
If there were an active \e between $t$ and $t'$ with destination to the left of $t$, then by Lemma~\ref{lem:composition} we could construct $T'$ with source to the right of the source of $T$, and the same destination as $T$, and thus $T'$ would have been higher priority than $T$ by Lemma~\ref{lem:transferOrder}, so the algorithm would have taken it instead of, or as the parent of, $T$.
Thus $[t,t']$ is an isolated area.
\end{itemize}

\item \textbf{Case 2: Upper Speed Level Events.}
Consider the longest \e $T$, with source $C_s$ and destination $C_d$ where $C_d$ is the critical interval causing the upper speed level event at $\gamma-1$.
There are two cases, depending on the direction of $T$.
\begin{itemize}
\item \textbf{\boldmath Subcase 1: $T$ is a right \e.}
Consder the first event that causes $C_d$ to decrease again, and let $T'$ be the \e with source $C_d$.
At this event, if $T'$ is a right \e, let $t$ be the right endpoint of $C_d$.
Then there is a $t$ work barrier.
If $T'$ is a left \e, then if $t$ is the depletion point immediately to the right of $C_d$, there is a $t$ work barrier to the right of $C_s$.
In both cases, by Lemma~\ref{lem:workbarrierPersistence}, this work barrier persists, and thus any right \e with destination $C_d$ must be going to the sink of a left \e, contradicting that the algorithm would have chosen it.

\item \textbf{\boldmath Subcase 2: $T$ is a left \e.}
Let $t$ be the left endpoint of $C_d$.
Let $t'$ be the work barrier caused by the parent of $T$ ($T$ must have a parent, since it is left-going and there is a path from the destination of $T$ to $\ell_L$).
By the definition of work barrier, there is no active \e in $[t,t']$ with destination to the right of $t'$.
If there were an active \e between $t$ and $t'$ with destination to the left of $t$, it can be composed with $T$ by Lemma~\ref{lem:composition} to obtain a higher priority \e than $T$ that the algorithm could have chosen.
Thus $[t,t']$ is an isolated area.
\qed
\end{itemize}
\end{itemize}
\end{proof}

\begin{definition}[\boldmath$(j,\ell)$ (active) work barrier]
Let $t_1$ be the first time that $j$ can be run in $\ell$ (according to the speed levels of jobs in $\ell$).
A time (depletion point) $t'$ is a \emph{$(j,\ell)$ work barrier} (\emph{$(j,\ell)$ active work barrier}) if no job $j'$ with release time after $t_1$, and deadline before that of $j$, is part of an \e (path of active right \es) crossing $t'$ that does not contain an edge taking $j$, or some other job with earlier deadline than $j$ and release time before $t_1$, from $\ell$.
\end{definition}

\begin{lemma}
\label{lem:jworkbarrier}
Suppose the algorithm chooses an \e $T$ that contains an edge moving $j$ from $\ell_1$ to $\ell_2$.
Then there is a $(j,\ell_1)$ (active) work barrier somewhere to the right of $\ell_1$ and to the left of where $j$ is run in $\ell_2$ (to the right of $\ell_1$).
\end{lemma}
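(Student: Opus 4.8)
The plan is to read this as the job-level counterpart of the work-barrier statement used above, and to put the barrier at the depletion point that sits immediately to the left of where $T$ deposits $j$'s work in $\ell_2$. First I would record two structural facts about $T$. The edge of $T$ that moves $j$ is a right edge, so $\ell_1$ lies strictly to the left of $\ell_2$; and since $j$ has positive work in $\ell_1$ (which is being moved) and can receive work in $\ell_2$, the release--deadline interval of $j$ meets every depletion interval from $\ell_1$ through $\ell_2$, i.e.\ $j$ is active on all of them. Write $t_1$ for the first time $j$ may be run in $\ell_1$ (as in the definition of a $(j,\ell_1)$ work barrier), let $C_j\subseteq\ell_2$ be the critical interval into which $T$ transfers $j$'s work, and let $t'$ be the depletion point immediately to the left of $C_j$, namely $t'=\tau_{\ell_2-1}$; this lies at or to the right of $\ell_1$ and (weakly) to the left of where $j$ is run in $\ell_2$. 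I would then show that $t'$ is a $(j,\ell_1)$ work barrier; the argument for the \emph{active} version is obtained verbatim after replacing ``\e'' by ``path of active right \es'' throughout.

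The core is a contradiction argument. Suppose $t'$ is not a $(j,\ell_1)$ work barrier, so there is a witness \e $T'$ that crosses $t'$, contains a job $j'$ with $r_{j'}>t_1$ and $d_{j'}<d_j$, and whose edges do \emph{not} carry $j$ --- nor any job of deadline below $d_j$ and release below $t_1$ --- out of $\ell_1$. Because $d_{j'}<d_j$, weak EDF forbids $j'$ from occurring in the first-run sequence between the first and last run of $j$, so $j'$ is completed no later than $j$ and imposes no run obstruction within the stretch over which $j$ is carried; and because $r_{j'}>t_1$ while $j$ is active on all of $\ell_1,\dots,\ell_2$, the span of $T'$ sits nested inside that stretch. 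In particular $T'$ and the $j$-edge of $T$ both straddle $t'$, so their spans meet and Lemma~\ref{lem:composition} composes $T'$ with the sub-path of $T$ issuing from $\ell_1$ into a single \e with source $\ell_1$. By Lemma~\ref{lem:transferOrder} together with clause~(c) of Definition~\ref{def:eps:domination} --- its source job $j'$ has an earlier deadline than $j$ --- this composite is an \e out of $\ell_1$ that, by the algorithm's shortest-right / longest-left / deadline-tie-break preferences, it would have selected in place of (or as a strictly higher-priority alternative to) the $j$-edge that $T$ uses at $\ell_1$. That contradicts the algorithm's choice of $T$, so $t'$ is a $(j,\ell_1)$ work barrier.

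Carrying the composition step out in full needs the same flavour of case split used in Lemmas~\ref{lem:composition} and~\ref{lem:isolatedProperty}: one distinguishes whether $T'$ is a left or a right \e, where its source and destination critical intervals lie relative to $t_1$, $C_j$ and $\ell_1$, and whether any of them coincide with critical intervals already used by $T$; in the left-\e subcase one additionally invokes the work barrier induced by the parent of $T'$ (Lemma~\ref{lem:workbarrierPersistence}), mirroring Subcase~2 of the speed-level analysis. Every subcase should terminate in the same priority contradiction. I expect the genuine difficulty to be the bookkeeping hidden in the definition of a $(j,\ell_1)$ work barrier: one must pin down exactly which job plays the role of the exempt ``early-deadline, early-release'' job, and verify that the composite \e built in the contradiction never itself drains such a job out of $\ell_1$ --- otherwise the supposed improvement over $T$ disappears. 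Beyond that, the argument is routine given the composition and priority machinery already in place.
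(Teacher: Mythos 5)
Your overall plan is the same as the paper's: assume a ``bad'' \e $T'$ exists, compose pieces of $T'$ and $T$ via Lemma~\ref{lem:composition}, and derive a priority contradiction with the algorithm's choice of $T$. That much matches, so the proposal is on the right track. But there are some places where it departs from, or underspecifies relative to, the paper's argument, and the gaps are not purely ``routine bookkeeping'' as you suggest.

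First, you commit the barrier to the single candidate $t'=\tau_{\ell_2-1}$. The paper makes no such commitment: it negates the existential statement, takes an arbitrary offending \e $T'$ anywhere in the range, and shows that its existence contradicts the algorithm's choice of $T$. That is a stronger and more flexible stance, and it sidesteps the question you'd need to answer in your framing --- namely, why the priority contradiction must go through for a $T'$ that crosses $\tau_{\ell_2-1}$ \emph{in particular}, even when $T'$'s source and destination are positioned awkwardly relative to where $j$ is deposited. Second, your composite is claimed to have ``source $\ell_1$ [and] source job $j'$,'' and you then invoke only the deadline tie-break of Definition~\ref{def:eps:domination}(c). The paper instead composes to an \e $R$ with the \emph{same destination as $T$}, and its priority comparison runs through several cases depending on whether $j$ and $j'$ share a critical interval, where $T$'s own source lies relative to $\ell_1$, and the weak-EDF interaction between the edge of $T$ entering $\ell_1$ and the span of $T'$ --- not just clause~(c). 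Third, you omit the step the paper flags explicitly: the witness edge of $T'$ carrying $j'$ is necessarily a \emph{right} edge (because $r_{j'}>t_1$ puts $j'$'s release inside $\ell_1$), and $j'$ must be at the \emph{same speed level} as $j$ for $T'$ to cross where $j$ can first run; without that, the composition does not yield an \emph{active} \e, and the contradiction evaporates. Your proposed case split (``$T'$ left or right, invoke Lemma~\ref{lem:workbarrierPersistence} in the left case'') is not the split the paper uses and introduces an extra dependency the paper avoids. Finally, the ``active'' half of the lemma is not literally the same argument with words swapped: the paper handles it with a separate one-line observation that an active $j'$-\e could be spliced into the path to $I_L$, again contradicting priority; ``obtained verbatim'' undersells that a distinct (if short) argument is still needed.

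In short: same high-level strategy, but the places you leave as bookkeeping are exactly where the paper does real work (the speed-level observation, the right-edge observation, and the multi-case priority comparison), and your commitment to $\tau_{\ell_2-1}$ adds a burden the paper does not take on. You would need to fill those in before this is a proof.
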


\begin{proof}
We prove the existence of the $(j,\ell_1)$ work barrier first.
For the sake of contradiction, suppose this does not hold, i.e., there exists \e $T'$ that takes some job $j'$ with release time after the first time $j$ can run in $\ell_1$, and with deadline before $j$, that does not contain an edge taking $j$, or some other job appropriate job, from $\ell_1$.
Then by Lemma~\ref{lem:composition}, we can compose the pieces of $T'$ and $T$ together to get an \e $R$ taking $j'$ (eventually) to $\ell_2$, to the destination of $T$.
We will show that $R$ is active and higher priority than $T$, contradicting that the algorithm chose $T$.
First note that the edge of $T'$ taking $j'$ must be right going, as the release of $j'$ is within $\ell_1$.
Additionally, $j'$ must be at the same speed level as $j$, or else it would not be able to cross the first time $j$ can run in $\ell_2$.

If $j$ and $j'$ are in the same critical interval, then we can create an \e from the source of $T$ to the destination of $T$ taking $j'$ instead of $j$ at $\ell_1$, which is clearly higher priority than $T$, as $j'$ is released later and has deadline earlier than $j$.
Otherwise, $j'$ is not at a lower speed level in $\ell_1$ since it must be at a higher speed than $j$, so we can take an \e with $j'$ as the source, which is higher priority than $T$ as long as the source of $T$ is $\ell_1$ or to the left of $\ell_1$.
If the source of $T$ is to the right of $\ell_1$, then it must be before $\ell_2$, since right edges cannot be below left edges of \es.
Note that the edge from $T$ into $\ell_1$ before taking $j$ must had source $\ell'$ to the right of the destination of the edge taking $j'$, or else there would be a weak EDF violation.
However, $T'$ must cross $\ell'$, since the destination of the $j$ edge in $T$ is to the right of $\ell'$.
Thus, by Lemma~\ref{lem:composition}, there is a way to compose $T$ and $T'$ to create an \e that does not use $j$ or some other job with earlier deadline than $j$ and release time before the first run time of $j$ in $\ell_1$.

We additionally note that there must be a $(j,\ell_1)$ active work barrier to the right of $\ell_1$.
Otherwise, we could use the active \e from such a $j'$ as part of a path of \es, which would be higher priority than $T$.
\qed
\end{proof}

\begin{lemma}
\label{lem:jworkbarrierPersistence}
If $t'$ is a $(j,\ell)$ (active) work barrier at $\gamma$ created from $j$ moving work right from $\ell$, and there is still some work of $j$ to the left of $\ell$, then for any $\gamma' > \gamma$ before $j$ begins moving work left, the algorithm takes no \e crossing $t'$ at $\gamma'$.
\end{lemma}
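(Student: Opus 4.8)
The plan is to mirror the structure of the proof of Lemma~\ref{lem:workbarrierPersistence}, but tracking the more refined $(j,\ell)$ work barrier instead of the plain $t$ work barrier. First I would fix the cut event and the associated event sequence $\Gamma$, and set $\gamma$ to be the event at which the $(j,\ell)$ work barrier $t'$ is created by $j$ moving work right out of $\ell$. The key object to introduce is an appropriate ``job-restricted'' isolated area anchored at the time $t_1$ (the first time $j$ can be run in $\ell$ according to the current speed levels): I claim that at $\gamma$ the interval $[t_1,t']$ is isolated in the sense that no active \e has source inside it and destination outside it without crossing $t_1$ in $\ell$, \emph{when restricted to jobs with deadline earlier than $d_j$ and release before $t_1$}. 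This restricted isolation is exactly what the definition of $(j,\ell)$ work barrier buys us: any path of active right \es crossing $t'$ must contain an edge taking $j$ (or an even earlier-deadline, earlier-released job) out of $\ell$, so it cannot ``escape'' without first routing through the relevant work in $\ell$.

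Next I would run the induction over $\Gamma$, exactly as in Lemma~\ref{lem:isolated}, in two parts. For the \textbf{work-movement} part, I would show that as work is moved between events $\eta$ and $\eta+1$ (before $j$ starts moving work left), no new active \e with source in $[t_1,t']$ and destination outside $[t_1,t']$ that avoids taking $j$ out of $\ell$ can appear; the argument reconstructs such an \e backwards, edge by edge, using that any job newly placed into a critical interval was moved there by a previously chosen \e, and then invoking Lemma~\ref{lem:isolatedProperty} / Lemma~\ref{lem:composition} to splice pieces together into a higher-priority \e that the algorithm should have taken — contradiction, unless the destination is in fact inside the (possibly enlarged) barrier region. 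For the \textbf{event} part, I would go through the event types (depletion point addition/removal, edge inactive, critical-interval merge/separation, edge removal) and check, as in Lemma~\ref{lem:isolated}, that none of them can create a violating active \e; the only subtle case is a critical-interval merge at the exit, where any new \e is forced to cross $t_1$ in $\ell$ and hence does not break the restricted isolation, plus depletion-point removal at the right endpoint, which only enlarges the barrier region (so $t_2 \geq t'$). Once the restricted isolated area is shown to persist, its right endpoint $t_2 \geq t'$ is still a $(j,\ell)$ work barrier at $\gamma'$, and by the very definition of work barrier the algorithm takes no \e crossing $t_2 \geq t'$ that avoids moving $j$ out of $\ell$; combined with the hypothesis that $j$ has not yet started moving work left (so the only edges involving $j$ at $\ell$ are right-going and hence consistent with the barrier), this gives that no chosen \e crosses $t'$ at $\gamma'$, as claimed.

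The main obstacle I expect is the bookkeeping in the work-movement step: unlike the plain work barrier, the $(j,\ell)$ barrier is sensitive to \emph{which} job an escaping \e carries, so when I reconstruct a violating \e backwards I must be careful that the splicing via Lemma~\ref{lem:composition} does not silently introduce an edge that takes $j$ (or an earlier-deadline, earlier-released job) out of $\ell$ — in which case the reconstructed \e is harmless and there is nothing to contradict. Handling this cleanly requires the same case analysis on the direction and relative position of the composed \es as in Lemma~\ref{lem:jworkbarrier} (same critical interval vs.\ different critical interval, source of $T$ left of / at / right of $\ell$), together with the weak-EDF laminarity of chosen \es (Lemma~\ref{lem:crossing}) to rule out the genuinely crossing configurations. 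The remaining steps are essentially routine once the restricted isolated area is set up correctly, since they reuse verbatim the event-by-event checks already carried out for Lemma~\ref{lem:isolated}.
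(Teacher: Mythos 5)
Your plan takes a genuinely different route from the paper, but it has gaps that would need to be resolved before it could be called a proof.

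The paper proves the non-active case by a direct, forward case analysis: it fixes $C_l$, the rightmost critical interval currently reachable by a problematic job $j'$ (release after $t_1$, deadline before $d_j$), and then argues that the only ways $C_l$'s reach could extend beyond $t'$ are (i) $C_l$ merging with a neighbouring critical interval $C_r$, or (ii) work of a job from beyond $t'$ entering $C_l$; it then kills each subcase (upper/lower speed-level constraints on $C_l$ and $C_r$, whether $j'$ shares $j$'s critical interval, where the incoming job's deadline sits, etc.) and only invokes an isolated-area argument locally, inside a few of those subcases and for the \emph{active}-work-barrier half. You instead want to set up a ``job-restricted isolated area'' as the central invariant and run the two-phase induction of Lemma~\ref{lem:isolated} (work movement between events, then event handling). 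That is a legitimately different decomposition: yours is an invariant-persistence argument, the paper's is a ``what could break it'' enumeration anchored at $C_l$.

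However, there are concrete problems with your sketch. First, the restriction you write — ``jobs with deadline earlier than $d_j$ and release \emph{before} $t_1$'' — has the release condition backwards: the jobs the $(j,\ell)$ work barrier constrains are those released \emph{after} $t_1$ with deadline before $d_j$; ``earlier deadline and release before $t_1$'' is the characterization of the \emph{exempted} jobs whose presence on a crossing \e makes it harmless. Your definition of the restricted isolated area therefore doesn't match the barrier it is supposed to certify, and the whole inductive invariant would be tracking the wrong family of \es. Second, even with the right restriction, the induction of Lemma~\ref{lem:isolated} is not a drop-in: you would need job-restricted analogues of Observation~\ref{obs:isolatedExit} and Lemma~\ref{lem:isolatedProperty}, and the work-movement reconstruction step must be rerun with the extra bookkeeping you acknowledge (ensuring that splicing via Lemma~\ref{lem:composition} does not silently introduce an edge taking $j$, or an exempted job, out of $\ell$ — precisely the configurations that the restricted barrier is allowed to ignore). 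That is nontrivial new case analysis, not ``essentially routine.'' Third, the final step from ``the restricted isolated area persists with right endpoint $t_2\geq t'$'' to ``the algorithm takes no \e crossing $t'$'' is not immediate: the isolated-area invariant only forbids escape through depletion points other than the exit, whereas the lemma forbids crossing the specific point $t'$, which may lie strictly inside $[t_1,t_2]$; you need an additional argument (this is where the paper's explicit identification of $C_l$ and the weak-EDF/priority contradictions do the work). Finally, note that the paper already has a different, more specialised notion of ``$j$-isolated area'' (in Lemma~\ref{lem:edgeRemovalEvents}) and uses it locally inside this proof; your proposal is essentially trying to promote that idea into the master invariant, which is a plausible unification but would require proving its persistence from scratch rather than by citing Lemma~\ref{lem:isolated}.
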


\begin{proof}
We prove the lemma for work barriers first.
Let $C_l$ be the last critical interval that is reachable by some job $j'$ with release after $j$ can be first run in $\ell$, and deadline before $j$.
It is clear that $j'$ must be at the same speed level as $j$ for this to be a problem.
There are two cases: either the work barrier could be removed by $C_l$ merging with another critical interval, or work from a job from beyond the work barrier enters $C_l$.
\begin{itemize}
\item \textbf{\boldmath$C_l$ merges with another critical interval.}
Let $T$ be the \e at $\gamma$ causing the work barrier.
We first show that, at $\gamma$, if $C_l$ can merge with another critical interval $C_r$ that would give $j'$ an \e over the work barrier to the destination of $j$ in the \e causing the work barrier, then either $C_l$ is at an upper speed level, or $C_r$ is at a lower speed level.
If not, $C_r$ could be the source of an \e with destination that of $T$, and source to the left of $T$, making which would be higher priority than $T$.
Additionally, we can obtain a series of \es that would be higher priority than $T$, depending on two cases
\begin{itemize}
\item \textbf{\boldmath Case 1: $j'$ is merged with $j$ in $\ell$.} In this case, $T$ could use $j'$ instead of $j$ and end in $C_l$.
\item \textbf{\boldmath Case 2: $j'$ is not merged with $j$ in $\ell$.}
Then $j$ is not at an upper speed level in $\ell$, and $j'$ is not at a lower speed level in $\ell$.
Thus, an active \e exists taking $j'$ to $C_l$, and $T$ could end in $\ell$.
\end{itemize}

If $C_l$ decreases due to $j'$ leaving over an edge $e$, then there is a $j'$-isolated area at the destination of $e$, and all \es must go through this depletion interval, and thus an \e crossing the work barrier would have to leave out some point other than the destination of $e$, contradicting that the algorithm took that \e (see Lemma~\ref{lem:edgeRemovalEvents}).
Thus it must be that $C_l$ is not at an upper speed level, and is not decreasing, and the critical interval it merges with is increasing.
The only remaining possibility is that $C_r$ was at a lower speed level at $\gamma$.
Note that any job $\tilde j$ in $C_r$ that could be used when $C_r$ and $C_l$ merge must have deadline after that of $j$, or be released after $\ell$, as otherwise the \e uses a job with deadline before that of $j$ that was alive in $\ell$ when $j$ could be first run there, or it contradicts the location of the work barrier at $\gamma$.
Note also that $C_r$ increasing cannot be due to the addition from the left of $j$ or any other job with release time before the first time $j$ can run in $\ell$ and deadline before that of $j$, as this would imply that $j'$ could be taken into $C_l$ instead, contradicting the \e taking the other job was used.
Similarly, $C_r$ increasing cannot be due to the addition from the right of $j$ or any other job with release time before the first time $j$ can run in $\ell$ and deadline before that of $j$, as this would create a $j$-isolated area, and no \e would enter it from outside (see Lemma~\ref{lem:edgeRemovalEvents}).

If $C_r$ increasing is due to some job $\tilde j$ with deadline before that of $j$, it must be coming from the right, as otherwise there would be an active \e from $j'$ using this job already.
The source of this \e must be to the left of $\ell$, or to the right of $C_l$.
In the first case, we could construct a higher priority \e with same source and destination ending in $C_l$.
In the second case, there is an isolated area at the right endpoint of $C_l$, and thus the only way to move work out of the isolated area is through $C_l$, so no \e taken would cross the work barrier.

Now suppose $C_r$ increasing is due to some job $\tilde j$ with deadline after that of $j$.
First assume that this \e taking $\tilde j$ is a right \e.
Note that $C_r$ must be in the destination of $j$ in $T$.
However, by the fact that $C_r$ was not increasing at $\gamma$, the destination of $j$ was not the destination of $T$, implying there is some work barrier to the right of this destination.
However, by Lemma~\ref{lem:workbarrierPersistence}, this work barrier could not have disappeared, contradicting that a right \e was being taken to this destination of $j$.
On the other hand, this \e is a left \e, there is an isolated area at the right endpoint of $C_l$, and thus the only way to move work out of the isolated area is through $C_l$, so no \e taken would cross the work barrier.

\item \textbf{\boldmath Work from some job beyond the work barrier enters $C_l$.}
Let this job be $\tilde j$.
First note that $\tilde j$ must be entering $C_l$ from the right, since it must be a higher priority job than $j$ as it's running between two times when $j$ is run, and must be released before the first time $j$ can run in $\ell$, or else it would not be able to go beyond the work barrier, but if so then the work barrier definition is not concerned with \es involving such jobs.
Thus, because the edge taking $\tilde j$ is a left edge, there is a $\tilde j$-isolated $I$ area starting at $C_l$.
Any \e using $j'$ before $\tilde j$ would contradict the property that no \es enter the isolated area (see Lemma~\ref{lem:edgeRemovalEvents}), and thus the algorithm never takes such an \e.
\end{itemize}

The active work barrier persists via an argument identical to that of Lemma~\ref{lem:workbarrierPersistence}, as we can again show there is an isolated area that ends at the work barrier.
\qed
\end{proof}

Finally, we bound the number of edge removal events.
We will need one Observation regarding the algorithm's avoidance of cycles.

\begin{observation}\label{obs:cycles}
The algorithm will never take an $\epsilon-$trasfer $(l_a,j_a)_{a = 0}^{s}$ such that for $a_1 \neq a_2$, $l_{a_1} = l_{a_2}$.
Intuitively this tells us the algorithm will never use an \e with a cycle.
\end{observation}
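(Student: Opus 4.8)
The plan is to argue by contradiction. Suppose the algorithm takes an \e $T=(\ell_a,j_a)_{a=0}^{s}$ with $\ell_{a_1}=\ell_{a_2}$ for some $a_1<a_2$. Recall that every \e the algorithm uses is an edge of the distribution graph $G_D$ produced by the depth-first search in \textsc{UpdateGraph} and then selected by \textsc{PathFinding}; since \textsc{PathFinding} adds each depletion interval to its tree at most once, it suffices to rule out a \emph{single} \e that revisits a depletion interval, and such an \e must be the highest-priority active \e between its source $\ell_0$ and destination $\ell_s$ (that is what the depth-first search retains). So it is enough to exhibit an active \e with the same source and destination that has strictly higher priority than $T$.

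The main step is a \emph{short-circuiting} argument: delete the closed sub-walk of $T$ between positions $a_1$ and $a_2$, i.e.\ set
\[
T' \;=\; (\ell_0,j_0),\dots,(\ell_{a_1},j_{a_1}),(\ell_{a_2+1},j_{a_2+1}),\dots,(\ell_s,j_s).
\]
I would first check that $T'$ is again a valid \e with source $\ell_0$ and destination $\ell_s$: because $\ell_{a_2}=\ell_{a_1}$, in $T'$ the job $j_{a_2+1}$ leaves $\ell_{a_1}$ while $j_{a_1}$ enters it, and both movements are already legal in $T$, so one can pick the per-edge amounts so that the speed profile of $\ell_{a_1}$ is unchanged; every other intermediate interval of $T'$ behaves exactly as in $T$. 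Niceness (weak EDF and YDS between depletion points) and perfectness/SLR carry over because $T'$ is obtained from an active \e purely by deleting edges, and Lemma~\ref{lem:composition} gives a clean certificate that the spliced transfer is active if a more careful justification is wanted. Thus $T'$ is an active \e with the same source and destination as $T$, but strictly fewer edges (edge cases such as the loop sitting at the very end are handled the same way).

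Finally I would derive the contradiction by showing $T'$ has strictly higher priority than $T$. I would pick the revisit to splice carefully — take $a_1$ minimal over all repeats and $a_2$ its matching index — so that $T$ and $T'$ agree on positions $0,\dots,a_1$ and first diverge at position $a_1^*=a_1+1$, with $\ell^{T'}_{a_1^*-1}=\ell^{T}_{a_1^*-1}=\ell_{a_1}$, $\ell^{T}_{a_1^*}=\ell_{a_1+1}$ (the first step of the loop) and $\ell^{T'}_{a_1^*}=\ell_{a_2+1}$ (the first step after the loop). A short case analysis on the relative order of $\ell_{a_1},\ell_{a_1+1},\ell_{a_2+1}$, using that $T$ is nice so that weak EDF constrains how the loop may leave and re-enter $\ell_{a_1}$, together with Lemma~\ref{lem:transferOrder} and Observation~\ref{obs:pathOfTransfers}, then exhibits one of conditions (a)--(c) of Definition~\ref{def:eps:domination} in favour of $T_1=T'$, contradicting that the depth-first search retained $T$. \textbf{The main obstacle is exactly this last step:} the priority relation is not simply ``fewer edges wins'' (for left-going transfers the algorithm prefers the \emph{longer} one), so one cannot conclude from the edge count alone and must use the structural constraints to show the short-circuited transfer is the preferred one. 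As an alternative route that sidesteps the priority bookkeeping, one can observe directly that the \textsc{UpdateGraph} depth-first search keeps only one \e per source--destination pair and prunes non-minimal \es; since an \e revisiting a depletion interval can be shortened by the very splicing above, the same pruning principle applies and no such \e is ever produced.
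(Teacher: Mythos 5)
The paper gives no formal proof for this observation; it is justified only informally in the ``Intricacies of the Algorithm'' subsection, where cycle-containing \es are listed among those that the depth-first search in \textsc{UpdateGraph} explicitly prunes, with the one-line rationale that they are ``not minimal'' and an \e with strictly fewer edges could be obtained. Your alternative route at the end of the proposal --- appealing directly to the pruning built into \textsc{UpdateGraph}, with the splicing argument as the certificate that a shorter \e exists --- is exactly the paper's justification, and your splicing construction is the precise sense in which the paper's ``not minimal'' claim is meant.

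Your primary route, by contrast, attempts something the paper does not: deriving the observation from the priority relation in Definition~\ref{def:eps:domination} alone. You are right to flag this as the obstacle. The priority relation does not favor fewer edges; in particular, rule (b) can prefer a \emph{longer} left-going transfer, so if the closed sub-walk $\ell_{a_1},\ldots,\ell_{a_2}$ dips further left than the short-circuited continuation $\ell_{a_2+1}$, the looping transfer $T$ might actually beat $T'$ at the first divergence point, and your sketched case analysis would fail. The paper avoids this entirely by pruning cycle-containing \es \emph{before} the priority comparison is applied. So your alternative route is the correct and paper-aligned proof, while your primary route has a genuine gap that you yourself correctly identified; it would need the pruning assumption to close it, which collapses it into the alternative route anyway.
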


With this, we now get the following bound on the number of edge removals.

\begin{lemma}
\label{lem:edgeRemovalEvents}
The number of edge removal events between cuts is at most $O(n^3)$.
\end{lemma}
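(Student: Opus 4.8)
The plan is to bound edge removal events hierarchically, placing them at the bottom, below cut, depletion-point, and speed-level events. Fix two consecutive cut events. By the lemma bounding depletion-point addition/removal events by $O(n)$ and by Lemma~\ref{lem:speedlevelEvents}, there are at most $O(n)$ depletion-point-removal events and $O(n)$ speed-level events strictly between them; together with the two bounding cut events these partition the execution into $O(n)$ \emph{phases}. It therefore suffices to show that each phase contains only $O(n^2)$ edge removal events, since then the total between the two cut events is $O(n)\cdot O(n^2)=O(n^3)$.

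Inside a phase I would invoke the two-phase structure of job movement — guaranteed by the pruning rule forbidding an \e to take work of a job in the direction opposite to a previously taken \e (see the discussion of valid \es) — so that for every job $j$ there is a split time before which $j$ only transfers work rightward and after which it only transfers work leftward. An edge removal for the pair $(j,\ell)$ is the moment the remaining work of $j$ in $I_\ell$ drops to zero; I claim each of the $\le n^2$ pairs triggers at most one such event in the right sub-phase and at most one in the left sub-phase. For the right sub-phase, the first time the algorithm moves $j$ rightward out of $I_\ell$, Lemma~\ref{lem:jworkbarrier} produces a $(j,\ell)$ work barrier $t'$ to the right of $I_\ell$, which Lemma~\ref{lem:jworkbarrierPersistence} keeps a work barrier (while $j$ still has work left of $I_\ell$) for the remainder of the right sub-phase; when finally all of $j$'s work leaves $I_\ell$, that edge removal creates a $j$-isolated area with exit $I_\ell$ — the same isolated-area construction used repeatedly in the proof of Lemma~\ref{lem:jworkbarrierPersistence} — which persists by Lemma~\ref{lem:isolated}, and by Lemma~\ref{lem:isolatedProperty} prevents any chosen \e from having a source outside and a destination inside it, so no work of $j$ re-enters $I_\ell$ before the phase ends. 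The left sub-phase is symmetric, using the left-going analogues already appearing as subcases in the proofs of Lemmas~\ref{lem:speedlevelEvents} and~\ref{lem:jworkbarrierPersistence}; crossing-freeness of the chosen \es (Lemma~\ref{lem:crossing}) and the cycle-avoidance Observation~\ref{obs:cycles} are what rule out $j$ simultaneously receiving and sending work in $I_\ell$, which is needed to make the per-sub-phase count constant.

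The step I expect to be the main obstacle is the interplay of several jobs being moved along the same tree of \es at once: one must verify that emptying $(j,\ell)$ really produces a \emph{maximal} isolated area rather than merely a work barrier, which requires checking via Lemmas~\ref{lem:composition} and~\ref{lem:transferOrder} that no higher-priority active \e can sneak work back across $t'$, and that this isolated area is consistent with the work barriers already created for other jobs in the same phase. This is exactly the sort of dependency the paper warns about; once it is resolved (largely by reusing the case analyses already carried out for Lemmas~\ref{lem:jworkbarrierPersistence} and~\ref{lem:isolated}), the final count of $O(n)$ phases times $n^2$ pairs times a constant, giving $O(n^3)$, is immediate.
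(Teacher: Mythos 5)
Your proposal reproduces the paper's essential strategy (two-phase structure of job movement, $(j,\ell)$ work barriers via Lemmas~\ref{lem:jworkbarrier} and~\ref{lem:jworkbarrierPersistence}, and $j$-isolated areas), but it organizes the count differently. The paper establishes, once per cut-to-cut interval, that each job first only sends work rightward and then only leftward, and argues directly that once all of $j$'s work has left $I_\ell$ by a right edge, a $(j,\ell)$ work barrier prevents re-entry for the remainder of that cut-to-cut interval -- so each pair triggers at most a constant number of removals, and the $O(n^3)$ claim is just slack on a bound that is really $O(n^2)$. You instead insert an explicit decomposition into $O(n)$ sub-phases delimited by depletion-point and speed-level events and charge $O(n^2)$ per sub-phase; that also delivers the stated $O(n^3)$ bound and matches the hierarchy the paper advertises, even though the paper's own proof never invokes the $O(n)$ sub-phasing. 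Two small mismatches are worth noting. First, you say the edge removal itself ``creates a $j$-isolated area with exit $I_\ell$''; in the paper the $j$-isolated area is created when $j$ \emph{first moves work leftward} (this is what enforces the right-then-left two-phase structure), while the edge-removal non-recurrence argument rests on the $(j,\ell)$ work barrier and Lemma~\ref{lem:jworkbarrierPersistence}, not on a fresh isolated area at each removal. Second, the per-job right/left split is a property of the whole cut-to-cut interval, not something that restarts at each of your $O(n)$ sub-phase boundaries, so you should not need the sub-phasing at all if the persistence lemmas carry across depletion-point and speed-level events -- they do, since Lemma~\ref{lem:jworkbarrierPersistence} already handles merges, speed-level events, and depletion-point changes in its case analysis. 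In short, your plan is sound and consistent with the paper's machinery; it just pays an extra $O(n)$ factor that the paper's own argument avoids.
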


\begin{proof}
The high level idea of the proof is to show that for each job there are two phases of the algorithm between cuts.
The first phase involves moving work from this job left to right and the second phase involves moving work right to left.
To show this, we demonstrate that whenever a job moves work from left to right there is a work barrier that persists over time.
With this work barrier, it can be seen that this job will never move work to the right again.
With this in hand, we can show that the number of edge removals for each phase is polynomially bounded.
We now formalize this below.

Let $j$ be an arbitrary job and consider the first time there is an \e $T = (l_a,j_a)_{a = 0}^{s}$ chosen by the algorithm such that for some $a$, $j_{a'} = j$ and $l_{a' - 1} > l_{a'}$.
That is work from $j$ is moved right to left.
Let $t_1$ be the time that $j$ is run in $l_{a'}$.
We say that $A_j = [t_1,t_2]$ is a $\emph{j-isolated area}$ if $t_2$ is the minimum depletion point such that for every job $j'$ run after the first depletion interval in $A$, say $I_{A_j}$, either $d_{j'} \leq t_2$ and $r_{j'} \geq r_j$, or everything reachable by $j'$ is currently at an upper speed level.
Equivalently, this says that for any job inside the critical interval, the only way of moving work out is through the leftmost depletion interval.

The first step is to show that initially such a $\emph{j-isolated area}$ exists.
Assume by contradiction there is some \e that leaves $I_{A_j}$ through a depletion interval that is not the leftmost depletion interval.
There are two cases to consider.
Whether the edge leaving is a left going edge or a right going edge.

\begin{itemize}
\item \textbf{Case: Left going edge}
There are  two sub cases, depending on whether the source is inside the left edge taken by $j$ or whether the source is outside.
In both cases, we argue that you can form a higher priority \e.

Assume there is an active \e T' such that the source of T' is under the left edge chosen by $j$, that is,  $I_{a'} \leq I_{T'_s} \leq I_{a' - 1}$, and the destination $I_{T'_d}$ is to the left of $t_1$ and further $T'$ does not leave $I_{A_j}$ through the leftmost depletion interval.
We need to argue that you can combine part of $T'$ with part of the original \e $T$ to get a new \e $T''$ that is longer than the one chosen, contradicting our choice of $T$ as the longest left-going \e.
Specifically, take the original \e until we get to the edge that $T$ uses to leave $I_{A_j}$ and take this instead.
By Lemma~\ref{lem:composition} we can combine these to form $T''$  To argue that the algorithm would have chosen $T''$ instead of $T$ all that remains is to argue that the destination of $T''$ was a sink at the time $T$ was chosen.
This is a direct consequence of Lemma~\ref{lem:crossing}, that the algorithm does not choose crossing edges.

In the second sub case we can use similar approach here, the only difference being that we may need to combine several new \es to reach the same contradiction.

\item \textbf{Case: Right going edge}
We essentially just need to prove the lemma that it is not true that for every depletion point to the right that we can move work over that depletion point with a right going \e.
Equivalently, there exists at least one depletion point to the right of the source of $j$ that has no active right going \es over it.
If there were no such depletion point then combining this with Lemma~\ref{lem:crossing} would give us a sequence of right going \es that can be connected to the right-most depletion interval.
Again this would contradict our choice of $T$ as the algorithm preferences right going before left going \es.

\end{itemize}

The next step is to show that the properties of $I_{A_j}$ persist over time.
Namely, we need to show that the only way to exit $I_{A_j}$ is through the left most depletion interval.
There are two things we need to verify.
First, that no new work is placed in $I_{A_j}$, and second, that any critical interval reachable from a job in $I_{A_j}$ remains at an upper speed level.
To show the first claim, we consider two cases.
In the first case, assume that there is some \e chosen by the algorithm with source outside of $I_{A_j}$ and destination inside $I_{A_j}$.
This implies that when this \e is chosen, there is a path from the destination to the rightmost depletion interval.
However this results in either a crossing edge or a cycle, both of which the algorithm forbids.

In the second case assume there is some \e that puts work of some job $j'$ into $I_{A_j}$ but neither the source nor the destination are contained in $I_{A_j}$.
There are two sub cases to consider.
If the critical interval $j'$ is being placed into is at a lower speed level, then note that since middle pieces of \es do not change speeds this will not violate any property of $I_{A_j}$.
In the case where $j'$ is not at a lower speed level, this implies that taking the portion of the \e that starts in this critical interval and then leaves $I_{A_j}$ is an active \e, contradicting that all critical intervals currently reachable from $I_{A_j}$ are at  upper speed levels.

Lastly, we show that any critical interval reachable from a job in $I_{A_j}$ remains at an upper speed level.
Assume by contradiction that some the algorithm chooses some \e with a job $j'$ as the source such that $j'$ is part of a reachable critical interval from $I_{A_j}$ (Indeed this is the only way for a critical interval at an upper speed level to decrease).
Note that the destination of this \e is not an upper speed level.
However then by Lemma~\ref{lem:composition} we can combine this \e with an \e emanating from $I_{A_j}$ contradicting that all critical intervals from $I_{A_j}$ are currently at upper speed levels.

The last step is to show that as a result of the existence of the $j$-isolated interval, $j$ can no longer move work to the right.
Indeed assume by contradiction that at some future time point, $j$ is involved in an \e $(l_a,j_a)_{a = 0}^{s}$ where $j's$ work is moved from $l_{a - 1}$ to $l_a$ for some $a$ such that $l_{a - 1} < l_a$.
There are three cases to consider.

First consider the case when $l_{a - 1} < t_1$ and $l_a > t_1$.
Note by the definition of the $A$, the only way to move work out of $A$ is using the left most depletion interval $I_{A_j}$.
Further, by weak EDF, there can be no edge with source greater than $l_{a - 1}$ and destination greater than $l_a$ or less than $l_{a - 1}$.
These together tell us that at some point the algorithm chooses another \e with source $l_{a - 1}$, contradicting Observation~\ref{obs:cycles}.

In the second case assume both $l_{a - 1}$ and $l_a$ are not contained in $A$.
By definition of an active \e we know that $j_s$ is not at an upper speed level in $l_s$.
However we also know that $j_s$ is reachable by all pieces of $j$, at least one of which is contained in $A_j$.
This contradicts the second property of $A_j$, namely all jobs reachable from inside $A_j$ are at upper speed levels.

For the last case, assume that both $l_{a - 1}$ and $l_a$ are contained in $A$.
Similar to the first case, since we can only move work out of $A$ from the leftmost depletion interval, at some point the algorithm must choose another \e emanating from $l_{a - 1}$ again contradicting Observation~\ref{obs:cycles}.

Now that we have established a left and right phase for job $j$, we can show that there are only polynomially man edge removals for job $j$.

Suppose a job $j$ is moving out of a depletion interval along a right edge of an \e, and the work of $j$ is completely removed from the depletion interval.
Then $j$ never reenters the depletion interval by a right edge of an \e.

We show that any edge that could return the work of $j$ to the depletion interval $\ell$ must end to the left of a work barrier, which will contradict that the algorithm chose such an \e.

Let $\gamma$ be the event when the work of $j$ is completely removed from the depletion interval, and $T$ be the \e that removed $j$.
Let $\ell_2$ be the destination of $j$ in $T$.
Suppose that at some later event, the algorithm takes another \e $T'$ that moves work from $j$ back into $\ell$ from the left of $\ell$.
Our goal is to show that the path of \es from the destination of $T'$ must cross a work barrier, and any such \e that can do this would cause a weak EDF violation.

We first show that $T'$ must be a right \e.
To see this, first note that the edge $e$ taking $j$ in $T'$ is right, so it cannot come after a left edge containing it.
Additionally, $e$ cannot come after a series of smaller left edges starting from the right of $\ell$, as $j$ must be runnable in the entirety of $\ell$, so whatever edge of $T'$ that went left past $\ell$ must have been able to stop in $\ell$, contradicting the algorithm took $T'$.
A left edge going from the right of $\ell$ to the left of the right endpoint of $e$ is not possible, since, by Lemma~\ref{lem:jworkbarrierPersistence}, from the event when $T$ was taken, there is a $(j,\ell)$ work barrier somewhere before $\ell_2$, so any job taken would have to have deadline before $j$ and been released before $j$ can be first run in $\ell$, causing a weak EDF violation if this job were taken and $j$ were moved into $\ell$.
A series of left edges going from $\ell$ to the left of $\ell_2$ is also not possible, as either the edge crossing $\ell_2$ would cause a weak EDF violation, or the algorithm could have taken a more minimal \e not including $j$, contradicting that it took $T'$.

By Lemma~\ref{lem:jworkbarrier}, any destination of $T'$ must be to the left of a $(j,\ell)$ active work barrier, which persisted by Lemma~\ref{lem:jworkbarrierPersistence} from the one from the $j$ edge in $T$, as any job that could cross the work barrier would have to have deadline before $j$ and been released before $j$ can be first run in $\ell$, causing a weak EDF violation if this job were taken and $j$ were moved into $\ell$.
Thus, $T'$ must go to a sink that is the source of a shorter left \e, a contradiction that $T'$ was chosen, or the path of \es from $T'$ must use a job that could cross the work barrier, and thus has to have deadline before $j$ and been released before $j$ can be first run in $\ell$, causing a weak EDF violation by $j$ being moved into $\ell$, contradicting the algorithm took $T'$
\qed
\end{proof}
Combining all of these, we now give a Theorem bounding the total runtime of our algorithm.
\begin{theorem}\label{thm:algo_runtime}
The runtime of Algorithm~\ref{mainalgo} is $O(n^9k)$.
\end{theorem}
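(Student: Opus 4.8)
The plan is to combine the event-count lemmas established above with a polynomial bound on the work performed in a single iteration of the main loop of Listing~\ref{mainalgo}, and then multiply the two.

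First I would bound the total number of iterations. By Lemma~\ref{lem:runtime_cut_fixing} the algorithm fixes a cut at most $kn^2$ times, so there are $O(kn^2)$ cut events. Between two consecutive cut events, the remaining event types contribute only $O(n)$ depletion point addition/removal events, $O(n)$ speed level events (Lemma~\ref{lem:speedlevelEvents}), and $O(n^3)$ edge removal events (Lemma~\ref{lem:edgeRemovalEvents}), hence $O(n^3)$ non-cut events per cut. Since each of these three bounds is stated \emph{between cut events}, the counts multiply rather than merely add, and the algorithm stops for $O(kn^2)\cdot O(n^3)=O(kn^5)$ events in total; equivalently, the loop runs $O(kn^5)$ times.

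Next I would bound the work in one iteration: a call to \textsc{UpdateGraph}, a call to \textsc{PathFinding}, a call to \textsc{CalculateRates}, and at most one cut fix. All of this is carried out on the subgraph that keeps, for each of the $O(n^2)$ ordered pairs of depletion intervals, only the highest-priority active \e between them; each such \e is obtained by a pruned depth-first search that visits $O(n)$ depletion intervals and inspects $O(n)$ critical intervals at each, so this subgraph is built in $O(n^4)$ time. \textsc{PathFinding} is a greedy reachability computation on it, and \textsc{CalculateRates} solves the linear system propagating the rates $\delta_{j,\ell}$ from the leaves of the transfer tree ($O(n)$ nodes) to its root; both are lower-order polynomials. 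The other dominant term is determining the next event: as explained in the handling of critical intervals in Section~\ref{sec:handling_events}, for each of the $O(n)$ depletion intervals and each of its $O(n)$ critical intervals one examines the $O(n^2)$ candidate critical intervals reachable by neighbouring merges/separations and computes, from the already-known rates, the next event each would trigger — $O(n^4)$ candidates, each handled in lower-order time. Thus one iteration costs $O(n^4)$.

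Multiplying $O(kn^5)$ iterations by $O(n^4)$ per iteration (the initialization of the YDS schedule and the initial $R$ being negligible) gives the claimed $O(n^9k)$. The main obstacle is the per-iteration analysis, in particular the bookkeeping around critical-interval changes: one must confirm that only neighbouring critical intervals merge or separate between two events, that enumerating the $O(n^2)$ candidates per changing critical interval really captures the next event (so the algorithm never stops spuriously or skips an event), and that reconstructing the transfer subgraph and solving for the rates both fit inside the $O(n^4)$ budget so that this is genuinely the dominant term. A secondary point, already noted, is keeping the inter-cut bounds strictly per cut so that the totals compose as $O(kn^2)\cdot O(n^3)$ rather than simply adding.
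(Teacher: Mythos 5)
Your proof is correct and follows the paper's overall strategy (bound the number of iterations of the main loop, bound the cost of one iteration, multiply), but the two pieces of the decomposition differ from the paper's by one power of~$n$ each, and they cancel out so that both arrive at $O(kn^9)$. On the event count, you read Lemma~\ref{lem:edgeRemovalEvents} literally --- $O(n^3)$ edge removals \emph{between cut events} --- and sum the three inter-cut bounds to get $O(n^3)$ non-cut events per cut and hence $O(kn^5)$ iterations. The paper instead treats the $O(n^3)$ edge-removal bound as holding between \emph{any two events one level up} (i.e., per depletion-point/speed-level window), which introduces an extra factor of $n$ and yields $O(kn^6)$ iterations; this matches the hierarchy described in Section~\ref{sec:approach_overview} (the Edge Removal Invariant resets at every depletion-point-removal or speed-level update, not only at cuts), but is not what the lemma statement says. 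On the per-iteration work, you bound the subgraph construction and next-event computation at $O(n^4)$ (one DFS per ordered pair of depletion intervals, and $O(n^2)$ merge/separation candidates per critical interval per depletion interval), whereas the paper does one DFS per source depletion interval in $O(n^3)$ total and charges the $O(n^2)$ candidates only to the single decreasing critical interval per depletion interval, also $O(n^3)$. Your per-iteration bound is a valid (if looser) upper bound, so the product is still correct. One small infelicity: your phrase ``the counts multiply rather than merely add'' immediately after summing the three inter-cut bounds is confusing --- you do add them within a level and multiply only across the two hierarchy levels (cuts vs.\ non-cuts per cut), which is fine, but the wording suggests otherwise.
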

\begin{proof}
We first calculate an upper bound on the total times the algorithm stops for some event.
Multiplying this by the time spent in between events will give us our final bound.
Note that from Lemma~\ref{lem:runtime_cut_fixing} there are at most $O(kn^2)$ cut events.
Between two cut events there at most $O(n)$ depletion point events and $O(n)$ speed level events.
Finally, there are at most $O(n^3)$ edge removal events between any two other events.
Combining this gives us that
line $5$ of Algorithm~\ref{mainalgo} will be executed at most $O(kn^6)$ times.

To complete the analysis we need to bound lines $6-8$, line $9$, line $11$, and line $14$.
Consider first lines $6-8$.
Note that there are at most $O(n^2)$ unique job depletion interval pairs, and line $7$ is done in time $O(1)$.
Line $8$ also takes time $O(1)$ giving us total time $O(n^2)$.

To bound line $9$, the UpdateGraph procedure, recall that we recalculate both the schedule and the graph.
To calculate the schedule note there are $n$ depletion intervals each with at most $n$ jobs.
Since the YDS algorithm runs in time $O(n\log^2 n)$, our total time to compute the schedule is $O(n^2 \log^2 n)$.
For computing the subgraph of the distribution graph, recall that for every depletion interval our algorithm uses a depth-first search on $n$ vertices and up to $n^2$ edges, giving a total runtime of $nO(|E|) = O(n^3)$.

To bound line $11$, note that checking each depletion point for possible removal takes time $O(n)$ and fixing speed levels takes time at most $O(n^2)$, since there are at most $n$ different speed levels for each job.

Finally to bound line $14$ note to calculate the rates such that depletion points remain can be done in $O(n^2)$, since there at most $O(n)$ atomic intervals inside each depletion interval.
Calculating the next event for edge removals takes time $O(n^2)$.
For speed level events, since we must consider the possibility that jobs merge along the way, there are $O(n^2)$ calculations for each depletion interval, and therefore takes a total of $O(n^3)$.

Combining this, we see that the runtime is $O(kn^6(n^2 \log^2 n + n^3 + n^2 + n^3)) = O(kn^9)$.
\qed
\end{proof}


\section{Algorithm Correctness}\label{app:alg_correctness}
In this section we demonstrate our algorithm correctly finds the optimal schedule.
At a high level, in we show that at all steps of the algorithm we maintain optimality conditions 1-3 and that when the algorithm can no longer make progress it satisfies the last optimality condition.
We first consider movement of work by the algorithm, and its handling of speed level and edge removal events.
We then consider cut events.

\subsection{Non-Cut Events}
We now show that moving work and handling non-cut events does not violate any of the optimality conditions maintained throughout the algorithm's execution.

\begin{theorem}\label{thm:maintaining_some_optimality_properties}
The algorithm maintains Properties~\ref{thm:optimality:a} to~\ref{thm:optimality:c} of Theorem~\ref{thm:optimality} when moving work and handling non-cut events.
\end{theorem}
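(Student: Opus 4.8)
The plan is to prove Theorem~\ref{thm:maintaining_some_optimality_properties} by a two-part invariance argument: first I would verify that the continuous movement of work along the chosen collection of \es between two consecutive events preserves Properties~\ref{thm:optimality:a}–\ref{thm:optimality:c}, and then I would check that each of the three non-cut event types (depletion point appearance, edge removal, edge inactive) — together with the associated graph update performed by \textsc{UpdateGraph} — preserves them as well. For the work-movement part, feasibility (Property~\ref{thm:optimality:a}) follows almost by construction: the defining property of an \e (Definition~\ref{def:eps}) is that a non-zero amount of work of each $j_a$ can be moved from $\ell_{a-1}$ to $\ell_a$ \emph{while maintaining niceness} (which includes feasibility and weak EDF) and without altering speeds in the intermediate depletion intervals; and the rates $\delta_{j,\ell}$ returned by \textsc{CalculateRates} are defined precisely so that the rate of change of energy at every depletion point stays $0$, so no depletion point is crossed from above. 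One still has to argue that the battery energy stays non-negative strictly between depletion points (not just at depletion points), and that the amount of work moved, $\Delta$, is chosen small enough to be the next event; the latter is exactly what the event-detection machinery of Section~\ref{sec:handling_events} guarantees. Energy optimality within each depletion interval (Property~\ref{thm:optimality:b}) is maintained because, by the "Handling of Critical Intervals" discussion, the effect of moving work along the \es is to raise/lower YDS critical-interval speeds monotonically, which keeps each depletion interval's schedule a YDS schedule for its (changing) workload; this is also reflected in the algorithm, which recomputes a YDS schedule between depletion points in \textsc{UpdateGraph}.

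For the SLR (Property~\ref{thm:optimality:c}) under work movement, the key observation is that an \e is only ever used by the algorithm when it is \emph{active} (Definition~\ref{def:eps}), i.e., it maintains perfectness, which by Definition~\ref{def:nice_perfect} is exactly niceness plus the SLR. So as long as the moved quantity does not push any job's average speed in a depletion interval past a discrete speed $s_i$ — which is precisely an edge-inactive / speed-level event and hence a stopping point — no condition of Definition~\ref{def:slr} is violated: parts~\ref{def:slr:a} and~\ref{def:slr:b} hold because speed levels only change when speeds cross discrete values, part~\ref{def:slr:c} is preserved because only the sources and destinations of the \es change speed and an \e taken by the algorithm cannot be crossing (Lemma~\ref{lem:crossing}), so the net speed-level jump at each depletion point is consistent, and part~\ref{def:slr:d} (the processed job has maximal speed level among active jobs) is maintained by the weak-EDF / YDS structure and by \textsc{PathFinding}'s priority rule, which always sources transfers from a critical interval of maximal speed level (cf.~Lemma~\ref{lem:transferOrder}).

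For the event-handling part, I would go case by case. A depletion point appearance event adds a new depletion point exactly at the moment $E_S(d_j)=0$ with energy decreasing, so feasibility is preserved by definition, Property~\ref{thm:optimality:b} is preserved since splitting a depletion interval and re-running YDS on each half does not increase cost (a YDS schedule restricted to a sub-interval is still YDS for its workload), and the SLR is preserved because the new depletion point is degenerate at the instant of creation (the two adjacent average speeds agree), so one may assign the new $\SL$-values consistently, giving a speed-level jump of $0$ there. An edge removal event only records that a job's work in some depletion interval has hit $0$ and removes the corresponding edge from $G_D$; no speeds and no speed levels change, so all three properties trivially persist. An edge inactive event occurs exactly when some $s_{j,\ell}$ reaches a discrete speed $s_i$; here the algorithm may reassign $\SL(j,\ell)\in\{i,i+1\}$ (the non-uniqueness permitted by Definition~\ref{def:slr} part~\ref{def:slr:b}) and must re-derive the \es, but feasibility and energy optimality are untouched and the SLR is re-established by the consistent reassignment together with the cut-handling machinery of Section~\ref{sec:handling_cuts} (which is itself a cut event, hence out of scope here).

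The main obstacle I expect is the SLR under continuous work movement: one has to argue carefully that moving work along a \emph{path} of \es — possibly a long composed path through many depletion intervals — changes average speeds only at the endpoints of the path and only in a direction consistent with the already-assigned speed levels, and that the several transfer paths chosen at one event (one per depletion interval) do not interact in a way that violates part~\ref{def:slr:c} or~\ref{def:slr:d}. This is where Lemmas~\ref{lem:composition}, \ref{lem:crossing}, and~\ref{lem:transferOrder}, the laminarity of the chosen \es, and the pruning of "invalid \es" (no repeated jobs, no right edge inside a left edge, weak EDF) all have to be brought in to rule out the problematic configurations; the rest of the proof is essentially bookkeeping.
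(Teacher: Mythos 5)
Your proposal takes essentially the same route as the paper's proof: decompose into (i) preservation during continuous work movement and (ii) preservation under non-cut event handling, then argue feasibility from the definition of an \e and the rates computed by \textsc{CalculateRates}, energy optimality from re-running YDS within depletion intervals, and the SLR from the fact that the algorithm only uses \emph{active} \es and stops at speed-level events before any discrete speed is crossed. The one thing the paper includes that you omit is the base case — that the initial YDS schedule with the initial recharge rate already satisfies Properties~\ref{thm:optimality:a}–\ref{thm:optimality:c} (in particular, that the YDS property alone implies SLR~\ref{def:slr:d} initially, since any alive job slower than a processed job would otherwise merge into its critical interval); you instead implicitly assume the invariant holds at the start of the step. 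Conversely, you go considerably further than the paper in spelling out why event handling is safe case by case and in flagging the genuine subtlety about interactions between several simultaneous transfer paths under SLR~\ref{def:slr:c}/\ref{def:slr:d}, which the paper dispatches in one sentence ("only new \es are chosen"); the observation that an \e by definition leaves intermediate speeds unchanged is exactly what justifies the paper's terseness here, so your stated concern is real but is already closed by Definition~\ref{def:eps} together with Lemma~\ref{lem:crossing}.
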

\begin{proof}
Clearly, when the recharge rate is initially set from the YDS schedule of the instance, Properties~\ref{thm:optimality:a} and~\ref{thm:optimality:b} are maintained.
Property~\ref{thm:optimality:c} follows from the YDS property of the schedule:
First, every job is assigned the single speed level corresponding to the speeds at which it is run when it is run;
Second, if there were a time at which a job $j$ is alive but some other job $j'$ was being run, with the speed level of $j'$ less than the speed level of $j$, this would contradict the YDS property as $j'$ would have to be part of the critical interval of $j$ and thus have the same speed level.

\textbf{Moving Work.}
Since the step of moving work does not change the speed levels of jobs, and the process stops when critical intervals hit a new discrete speed (i.e., an edge becomes inactive) or a job no longer has work to move, moving work cannot violate Property~\ref{thm:optimality:c}.
Additionally, work is moved in a work-preserving manner for each job, and the process stops if at some new time the total amount of energy available becomes zero (i.e., a depletion point appears), so Property~\ref{thm:optimality:a} is maintained.
Property~\ref{thm:optimality:b} is maintained by the fact that we have a YDS schedule between critical intervals.
By the fact that at non-cut events, only new \es are chosen, the handling of these events does not violate Properties~\ref{thm:optimality:a} to~\ref{thm:optimality:c}.
\qed
\end{proof}

\subsection{Maintaining the SLR \& Reaching Optimality}\label{sec:handling_cuts}
Consider a situation when there is a depletion interval $\ell$ for which there is no path to $L+1$ in the distribution graph.
This means we are unable to move workload out of this depletion interval – and, thus, cannot lower the recharge rate – without violating the SLR or other optimality conditions.
The following lemmas take a closer look at such situations.
In particular, we show that we either can fix the speed levels, adapt the set of depletion intervals, or have found an optimal solution.

\paragraph{Fixing Speed Levels}
We start with the most intuitive reason for not being able to make progress: there are jobs that could still transfer work to the rightmost depletion interval (possibly taking several \es), but the SLR requirement renders any such path inactive.
For a single \e, we can easily change the speed levels such that using this \e will not violate Properties~\ref{def:slr:a} or~\ref{def:slr:b} of the SLR.
However, to handle~\ref{def:slr:c} and~\ref{def:slr:d}, we have to take care to adapt the speed levels of certain jobs in a compatible way.
The following lemma takes care of that.
\begin{lemma}\label{lem:fix_normal_cuts}
Assume there is a depletion interval $\ell_0$ without a path of active \es to $L+1$.
Furthermore assume there is a path to $L+1$ utilizing at least one inactive \e.
Then we can fix the speed levels and increase the set of nodes reachable from $\ell_0$.
\end{lemma}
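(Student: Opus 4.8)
\emph{Proof plan.}
Note first that ``fixing the speed levels'' only relabels the maps $\SL(\cdot,\cdot)$; it leaves the schedule $S$ untouched, so $S$ stays feasible, YDS between depletion points, and weak EDF, i.e.\ nice, and $S$ becomes perfect again the moment the new labels obey Definition~\ref{def:slr}. Thus the task is purely to exhibit new labels $\SL'$ satisfying \ref{def:slr:a}--\ref{def:slr:d} for $S$ under which the set $\mathcal R$ of depletion intervals reachable from $\ell_0$ by \emph{active} \es strictly grows. To pin down the obstruction, walk along the assumed path from $\ell_0$ to $L+1$ and let $T^\ast$ be its first \e whose destination $\ell''$ lies outside $\mathcal R$: then $T^\ast$ is inactive (else $\ell''$ would be reachable), while its source $\ell'$ still lies in $\mathcal R$. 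Using the laminarity of the \es chosen by the algorithm (Observation~\ref{obs:pathOfTransfers} and the weak-EDF pruning), one argues that $\mathcal R$ is a block of consecutive depletion intervals, so it has a right boundary depletion point $\tau_r$ (with $I_r\in\mathcal R$, $I_{r+1}\notin\mathcal R$) and, whenever $I_1\notin\mathcal R$, a left boundary depletion point $\tau_l$.

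The structural heart of the argument is this. Being an \e, $T^\ast$ moves a positive amount of work while changing job speeds \emph{only} at its source and destination (Definition~\ref{def:eps}); not being active, that move must violate the SLR there by pushing a speed past a discrete value. I would argue --- using the priority rules (Definition~\ref{def:eps:domination}) and composition (Lemma~\ref{lem:composition}) to exclude higher-priority alternatives the algorithm should have taken --- that $T^\ast$ may be taken so that this break occurs at its source, inside $\mathcal R$. I would then use the fact that the algorithm is genuinely stuck, together with the bookkeeping of how it drives one critical interval per depletion interval downward between events, to derive the decisive structural fact: at a cut of this type, every job $j$ processed in any $I_\ell\in\mathcal R$ sits exactly on the lower boundary of its current speed level, i.e.\ $s_{j,\ell}=s_{\SL(j,\ell)-1}$. (This is precisely the ``cut update'' situation of the informal description: all involved jobs have reached a suitable discrete speed.) This is the step I expect to be the main obstacle: it cannot be read off locally from $T^\ast$ but must invoke the global machinery --- priorities, composition, work barriers (Lemma~\ref{lem:workbarrier}), and the persistence of isolated areas (Lemma~\ref{lem:isolated}) --- to rule out that the algorithm left any ``slack'' (an interior speed) inside $\mathcal R$ while it could still have made progress.

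Granting this, I would take the speed-level update from the proof of Lemma~\ref{lem:runtime_cut_fixing}: set $\SL'(j,\ell)=\SL(j,\ell)-1$ for every job $j$ and every $I_\ell\in\mathcal R$ in which $j$ is active, and leave all other labels unchanged; equivalently, increment the jump at $\tau_r$ by one and, if $\tau_l$ exists, decrement the jump at $\tau_l$ by one. Then I would check Definition~\ref{def:slr} for $S$ and $\SL'$. Properties~\ref{def:slr:a} and~\ref{def:slr:b}: by the structural fact, every job processed in $\mathcal R$ has $s_{j,\ell}=s_{\SL(j,\ell)-1}=s_{\SL'(j,\ell)}$, so its speed is not interior to its new level but on that level's upper boundary, which~\ref{def:slr:b} allows (the admitted value $\SL(j,\ell)-1$ is precisely $\SL'(j,\ell)$); outside $\mathcal R$ nothing changed. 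Property~\ref{def:slr:c}: for $j$ active in $I_{\ell_1}$ and $I_{\ell_2}$ the difference $\SL(j,\ell_2)-\SL(j,\ell_1)$ changes by $+1$ if $\tau_r$ lies strictly between them and by $-1$ if $\tau_l$ does, hence by a $j$-independent amount; and the new jump sequence is still non-negative, since the only decreased jump is the one at $\tau_l$, which was at least $1$ --- were it $0$, any job alive across $\tau_l$ would yield an active \e from $I_{l+1}$ into $I_l$, contradicting $I_l\notin\mathcal R$. Property~\ref{def:slr:d}: inside each fixed $I_\ell$ the shift is uniform over all active jobs, so the processed job keeps the maximum level. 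Hence $S$ is again perfect under $\SL'$.

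Finally, under $\SL'$ the source job of $T^\ast$ sits at the \emph{top} of a level that is now one lower, so shedding a small amount of its work in $\ell'$ moves its speed into the \emph{interior} of that level and no longer violates the SLR (while, by the choice of $T^\ast$, nothing was wrong at the destination); hence $T^\ast$ is active and $\ell''$ --- a new node --- becomes reachable from $\ell_0$. Since the uniform shift inside $\mathcal R$ preserves every \e internal to $\mathcal R$, nothing that was reachable becomes unreachable, so the reachable set strictly increases, proving the lemma.
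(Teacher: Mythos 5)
Your proposal has the right scaffolding (the reachable set $\mathcal R$ is a consecutive block of depletion intervals, the fix is a relabelling of $\SL$ that leaves the schedule alone, and one should check Definition~\ref{def:slr} for the new labels), but it hinges on a structural claim that the paper's proof does not make, does not need, and that appears to be false in general. You assert that at a cut ``every job $j$ processed in any $I_\ell\in\mathcal R$ sits exactly on the lower boundary of its current speed level, $s_{j,\ell}=s_{\SL(j,\ell)-1}$,'' and you flag it yourself as the main obstacle. It is indeed where the argument breaks: a job whose release/deadline window is entirely contained in $\mathcal R$ (or that simply is not on any \e out of $U$) can be sitting at an interior speed and still contribute nothing to resolving the cut. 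Since your fix is a \emph{uniform} decrement of $\SL(j,\ell)$ over all jobs active in $\mathcal R$, such a job would have its level dropped below the interval containing its actual speed, violating SLR Property~\ref{def:slr:a}. Also, nothing in your argument handles the symmetric obstruction where the cut is caused by jobs sitting at the \emph{upper} boundary outside $\mathcal R$; you only ever move levels down inside.

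The paper's proof is considerably more local and avoids the global claim entirely. It picks one inactive \e $T$ crossing out of $U\coloneqq\bigcup_{\ell=\ell_{\min}}^{\ell_{\max}}I_\ell$, looks only at the job $j$ used by $T$ to leave $U$, and considers the set $J_j$ of jobs participating in some \e that also uses $j$. It then proves a dichotomy by a concatenation argument: either all of $J_j$ are at a lower speed level inside $U$, or all of $J_j$ are at a higher speed level outside $U$, because otherwise one could splice a ``decreasable at source'' transfer with an ``increasable at destination'' transfer through $j$ into an active \e leaving $U$, contradicting the cut. Correspondingly it applies the fix only to $J_j$ and only on one side of the boundary of $U$ (decrement inside $U$ in one case, increment outside $U$ in the other), and then iterates over the remaining inactive outgoing \es. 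The SLR check is then the same style as yours (Properties~\ref{def:slr:a}, \ref{def:slr:b}, \ref{def:slr:d} are local; Property~\ref{def:slr:c} follows because the shift changes the cross-boundary jump uniformly by $\pm1$), but it never needs any statement about jobs outside $J_j$. If you want to salvage your write-up, you should replace the global ``all jobs in $\mathcal R$ are at the lower boundary'' claim with the paper's dichotomy for $J_j$, and add the symmetric ``raise levels outside $U$'' branch.
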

\begin{proof}
Let $\ell_{\min}$ and $\ell_{\max}$ denote the minimal and maximal depletion intervals reachable from $\ell_0$ via a path of active \es.
Obviously, no depletion interval in $\set{\ell_{\min},\ell_{\min}+1,\dots,\ell_{\max}}$ can reach $L+1$ via a path of active \es.
In fact, by definition, no active \e leaves the union of these depletion intervals.
Now fix an arbitrary (inactive) \e $T$ leaving $U\coloneqq\bigcup_{\ell=\ell_{\min}}^{\ell_{\min}}I_l$.
Let $j$ denote the job used to move work out of $U$ in $T$ and define the job set $J_j$ of jobs that are part of some (active or inactive) \e that also uses $j$.
Note that either
\begin{enumerate*}
\item any job $j'\in J_j$ is at a lower speed level in $U$ or
\item any job $j'\in J_j$ is at a higher speed level outside of $U$.
\end{enumerate*}
If that were not true, we get two (inactive) \es leaving $U$, one that can be lowered at its source (but is inactive because its destination is at an upper speed level) and one that can be increased at its destination (but is inactive because its source is at a lower speed level).
Since both of these use $j_a$, we can concatenate them to get an active \e leaving $U$, contradicting our assumption.
We now can simultaneously fix the speed levels of alls jobs $j'\in J_j$ by either
\begin{enumerate*}
\item decreasing their speed levels $\SL(j',\ell)$ for all $\ell\in\set{\ell_{\min},\ell_{\min}+1,\dots,\ell_{\max}}$ or
\item increasing their speed levels $\SL(j',\ell)$ for all $\ell\notin\set{\ell_{\min},\ell_{\min}+1,\dots,\ell_{\max}}$.
\end{enumerate*}
We iterate this until there are no more inactive \es leaving $U$.

It is easy to check that this procedure maintains Properties~\ref{def:slr:a},~\ref{def:slr:b}, and~\ref{def:slr:d} of the SLR.
For Property~\ref{def:slr:c}, note that the described procedure
\begin{enumerate*}
\item does not change the speed level difference between any two depletion intervals that are both inside or both outside of $U$ and
\item any job that is active in- and outside of $U$ will be part of some \e considered by the procedure, and thus its differences considered in Property~\ref{def:slr:c} of the SLR decrease by exactly one at the left border of $U$ and increase by exactly one at the right border of $U$.
\end{enumerate*}
After fixing the speed levels, the set of nodes reachable from $\ell_0$ has increased beyond $U$.
This proves the lemma's statement.
\qed
\end{proof}

\paragraph{Merging Depletion Intervals}
Lemma~\ref{lem:fix_normal_cuts} shows how to continue to make progress in some situations.
However there are still cases that don't allow progress with the current distribution graph but are not covered by that lemma.
Also, there is a subtlety in the lemma's proof when we decrease speed levels.
Consider a job $j$ whose speed level in $\ell$ gets decreased.
By Property~\ref{def:slr:c}, we have to make sure that, for any $\ell'<\ell$, the difference $\delta_{\ell,\ell'}\coloneqq\SL(j,\ell)-\SL(j,\ell')$ remains non-negative (note that, by the same property, this value does not depend on $j$).
To guarantee this, we merge any two depletion intervals as soon as the value $\delta_{\ell,\ell'}$ becomes zero.

\paragraph{Reaching Optimality}
It remains to show that if (after repairing speed levels and merging depletion intervals) we still cannot make progress, we actually found an optimal solution.
We prove this in the following lemma.
\begin{lemma}\label{lem:reached_optimality}
Assume there is a depletion interval $\ell_0$ for which there is no path of (active or inactive) \es to $L+1$.
Then the current solution is optimal.
\end{lemma}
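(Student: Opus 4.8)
The plan is to invoke Theorem~\ref{thm:optimality} by showing that the current schedule $S$ together with the current recharge rate $R$ satisfies all four optimality conditions. Properties~\ref{thm:optimality:a},~\ref{thm:optimality:b}, and~\ref{thm:optimality:c} are already maintained as invariants throughout the algorithm (Theorem~\ref{thm:maintaining_some_optimality_properties}), so the entire content of the proof is to establish Property~\ref{thm:optimality:d}: the existence of a split depletion point $\tau_k>0$ such that no job with deadline greater than $\tau_k$ is processed before $\tau_k$. So the task reduces to: \emph{from the hypothesis that $\ell_0$ cannot reach $L+1$ in the distribution graph even via inactive \es, produce a split depletion point.}

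First I would set up the right candidate. Let $U$ be the union of depletion intervals reachable from $\ell_0$ via paths of (active or inactive) \es. By hypothesis $L+1\notin U$, and by definition of reachability \emph{no} \e at all (active or inactive) leaves $U$ to the right past its right boundary $\tau_k$ (the right endpoint of the rightmost depletion interval in $U$); moreover, since the algorithm has already merged depletion intervals and repaired speed levels wherever possible, $U$ is a contiguous block of depletion intervals and no further \e can be ``activated.'' I would take $\tau_k$ to be this right boundary of $U$ and argue it is the desired split depletion point. (If $U$ happened to be empty or degenerate one handles that trivially, but the interesting case is $U$ a genuine prefix-type block whose right end is some $\tau_k>0$.)

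The heart of the argument is: \textbf{if some job $j$ with $d_j>\tau_k$ were processed before $\tau_k$, then there would be an \e leaving $U$ to the right} --- contradicting that nothing reachable from $\ell_0$ escapes $U$. Concretely, such a $j$ is active in a depletion interval inside $U$ (where it runs) and also active in the depletion interval(s) just after $\tau_k$ (since $d_j>\tau_k$ and $r_j<\tau_k$). Because $j$ runs in $U$ and is active to the right of $\tau_k$, one can move a nonzero amount of $j$'s work from its depletion interval in $U$ to the interval just after $\tau_k$; by the definition of \e (Definition~\ref{def:eps}) this is a (possibly trivial) \e with source inside $U$ and destination outside $U$ --- at minimum an \e in the sense of moving work while maintaining niceness, even if not active. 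That gives a path of \es leaving $U$, contradicting that $U$ is closed under reachability. Hence no such $j$ exists, so $\tau_k$ is a split depletion point, Property~\ref{thm:optimality:d} holds, and the schedule is optimal by Theorem~\ref{thm:optimality}.

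I expect the main obstacle to be handling the boundary bookkeeping cleanly: (i) one must be careful that ``no \e leaves $U$'' really does follow from $\ell_0$ not reaching $L+1$ together with the prior exhaustive application of Lemma~\ref{lem:fix_normal_cuts} and the depletion-interval merging --- i.e.\ that the algorithm has genuinely saturated both the speed-level-repair and merge operations before declaring failure, so that there is no ``hidden'' inactive \e one could have activated; and (ii) one must verify that the work-move exhibiting the escaping \e for the hypothetical job $j$ actually preserves niceness (weak EDF, YDS-between-depletion-points, feasibility) --- this is where one uses that $j$ runs in $U$, that $j$'s deadline lies beyond $\tau_k$, and that moving a sufficiently small amount of work rightward can only \emph{decrease} speeds in the source interval and cannot create a battery violation (since we are moving work \emph{out} of an earlier interval into a later one). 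Both are essentially verifications that the \e machinery from Section~\ref{sec:notation} applies, but making the first point airtight --- that the hypothesis of this lemma is exactly ``all repair mechanisms exhausted'' and not merely ``active \es exhausted'' --- is the delicate step, since it is what distinguishes this lemma from Lemma~\ref{lem:fix_normal_cuts}.
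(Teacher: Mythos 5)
Your high-level framing is correct and matches the paper: Properties~(a)--(c) of Theorem~\ref{thm:optimality} are maintained by Theorem~\ref{thm:maintaining_some_optimality_properties}, so the whole proof reduces to producing a split depletion point, and the engine is ``a job straddling the candidate $\tau_k$ would yield an \e-path to $L+1$, contradiction.'' However, there is a genuine gap in how you locate $\tau_k$ and in the contradiction step.

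You take $U$ to be the set of depletion intervals reachable from the hypothesized $\ell_0$ and let $\tau_k$ be the right end of $U$. Your contradiction requires that any job $j$ with $d_j>\tau_k$ that is processed before $\tau_k$ is processed \emph{inside} $U$, so that the length-one \e pushing $j$'s work to the right leaves $U$. But $U$ is only the reachable set from $\ell_0$; it need not be a prefix of the time axis. A job could be processed in an early depletion interval that is to the left of (or even below) the rightmost interval of $U$ yet is not reachable from $\ell_0$, in which case no \e leaving $U$ arises and the contradiction evaporates. You acknowledge this by restricting to ``the interesting case'' where $U$ is ``a genuine prefix-type block,'' but that is precisely the claim that needs proof, and it does not follow from merging or speed-level repair alone.

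The paper sidesteps this by re-choosing the candidate: it takes $\ell_0$ to be the \emph{rightmost} depletion interval without a path of \es to $L+1$, sets $\tau_k=\tau_{\ell_0}$, and then handles the straddling job $j$ in two cases. If $j$ is first processed in $\ell_0$ itself, a single \e goes from $\ell_0$ to the interval $\ell_2$ containing $d_j$, and by the maximality of $\ell_0$ that $\ell_2$ already reaches $L+1$. If $j$ is first processed in some $\ell_1<\ell_0$, the paper supplies an explicit backward chain of length-one \es from $\ell_0$ down to $\ell_1$: in each interval there must be a job active in the previous one (otherwise the two intervals would have been merged), and this gives the left-going \es step by step. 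That iteration is exactly the missing ingredient in your write-up --- it is what guarantees that the first-run interval of $j$ is reachable, i.e.\ that the ``prefix'' property you assumed actually holds for the paper's choice of $\ell_0$. Your concern (i) about ``exhausted repair mechanisms'' is a red herring --- the lemma's hypothesis already quantifies over inactive \es --- whereas the real work is the backward-reachability iteration you omitted.
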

\begin{proof}
We show the optimality of the current solution by showing that Property~\ref{thm:optimality:d} of Theorem~\ref{thm:optimality} holds.
Together with the fact that our algorithm maintains Properties~\ref{thm:optimality:a} to~\ref{thm:optimality:c} of this theorem (see Theorem~\ref{thm:maintaining_some_optimality_properties}), this implies that the current solution is optimal.

To see that Property~\ref{thm:optimality:d} holds, first notice that any job processed at a time $t\in I_{\ell}$ with deadline $d_j\in I_{\ell'}$ ($\ell'>\ell$) implies a (possibly inactive) \e of length one from $\ell$ to $\ell'$ (by just moving work of $j$).
Now let $\ell_0$ denote the rightmost depletion interval without a path of \es to $L+1$.
Assume Property~\ref{thm:optimality:d} is not true, so there is a job $j$ with deadline $d_j>\tau_{\ell_0}$ and $j$ is processed before $\tau_{\ell_0}$.
Let $\ell_1\leq\ell_0$ denote the depletion interval in which $j$ is processed the first time and $\ell_2>\ell_0$ the depletion interval that contains its deadline $d_j$.
If $\ell_1=\ell_0$, we're done: As noticed above, any such job would imply a (possibly inactive) \e of length one to $\ell_2$.
But then, as $j$ was chosen maximal, there is a path from $\ell_2$ to $L+1$.
Together, these $\epsilon$ transfers form a path from $\ell_0$ to $L+1$, a contradiction.
So consider the case $\ell_1<\ell_0$.
Note that, without loss of generality, we can assume there is a job $j'$ that is processed in $\ell_0$ and has release time $r_{j'}<\tau_{\ell_0-1}$.
If there is no such job, we could simply merge the depletion intervals $\ell_0$ and $\ell_0-1$.
Similar to the argument above, this job $j'$ gives us a (possibly inactive) \e of length one from $\ell_0$ to the left.
By iterating this argument, we get a path of \es from $\ell_0$ to $\ell_1$ and, thus, to $\ell_2$.
As before, this yields a contradiction.
\qed
\end{proof}

\end{document}